\documentclass[opre,sglanonrev]{informs4}

\newcommand{\I}{\mathbf{1}}
\newcommand{\E}{\mathbb{E}}
\newcommand{\p}{\mathbb{P}}
\newcommand{\UU}{\mathcal{U}}
\newcommand{\diff}{\mathrm{d}}
\newcommand{\LA}{\boldsymbol\lambda}
\newcommand{\BA}{\boldsymbol\beta}
\newcommand{\dl}{{\rm DefLeg}}
\newcommand{\pl}{{\rm PreLeg}}

\newcommand{\Beta}{\mathrm{Beta}}
\newcommand{\B}{\mathrm{B}}

\newcommand{\uf}{\textrm{uf}}

\newtheorem{Proposition}{Proposition}[section]
\newtheorem{Lemma}{Lemma}[section]
\newtheorem{Theorem}{Theorem}[section]

\newtheorem{Definition}{Definition}[section]
\newtheorem{Remark}{Remark}[section]
\usepackage{booktabs}
\usepackage{subcaption}
\usepackage{algorithm}
\usepackage{caption}
\usepackage[toc,page]{appendix}

\counterwithin{algorithm}{section}

\usepackage{enumitem}    
\usepackage{chngcntr}    
\graphicspath{ {images/} }

\counterwithin{algorithm}{section}

\newenvironment{breakablealgorithm}[1]{%
	\par\vspace{\medskipamount}%
	\hrule height .8pt %
	\vspace{2pt}%
	\captionsetup{type=algorithm, labelsep=colon, labelfont=bf, singlelinecheck=off, justification=raggedright}
	\captionof{algorithm}{#1}
	\vspace{-6pt}%
	\hrule%
	\vspace{6pt}%
	\begin{enumerate}[wide, labelindent=0pt, label=\textbf{Step \arabic*:}]
	}{%
	\end{enumerate}%
	\vspace{2pt}%
	\hrule height .8pt%
	\par\vspace{\medskipamount}%
}

\RequirePackage{tgtermes}
\RequirePackage{bm}
\RequirePackage{endnotes}

\OneAndAHalfSpacedXII 

\usepackage{algpseudocode}
\usepackage{chngcntr}
\counterwithin{algorithm}{section}
\usepackage{enumitem}

\usepackage{tabularx}

\usepackage{natbib}
\bibpunct[, ]{(}{)}{,}{a}{}{,}%
\def\bibfont{\small}%
\def\bibsep{\smallskipamount}%
\EquationsNumberedThrough    

\TheoremsNumberedThrough     
\ECRepeatTheorems  %

\MANUSCRIPTNO{MOOR-0001-2024.00}

\begin{document}


\RUNAUTHOR{Bu, Cai, Xia and Yang}

\RUNTITLE{Perfectly Fitting CDO Prices Across Tranches}

\TITLE{Perfectly Fitting CDO Prices Across Tranches: A Theoretical Framework with Efficient Algorithms}

\ARTICLEAUTHORS{%
\AUTHOR{Lan Bu}
\AFF{Financial Technology Thrust, Hong Kong University of Science and Technology (Guangzhou) \EMAIL{bulan@pku.edu.cn}}
\AUTHOR{Ning Cai}
\AFF{Financial Technology Thrust, Hong Kong University of Science and Technology (Guangzhou) \EMAIL{ningcai@hkust-gz.edu.cn}}
\AUTHOR{Chenxi Xia}
\AFF{School of Mathematical Sciences, Peking University \EMAIL{xiacx@pku.edu.cn}}
\AUTHOR{Jingping Yang}
\AFF{School of Mathematical Sciences, Peking University \EMAIL{yangjp@math.pku.edu.cn}}

} 

\ABSTRACT{
This paper addresses a key challenge in CDO modeling: achieving a perfect fit to market prices across all tranches using a single, consistent model. The existence of such a perfect-fit model implies the absence of arbitrage among CDO tranches and is thus essential for unified risk management and the pricing of nonstandard credit derivatives.
To address this central challenge, we face three primary difficulties: standard parametric models typically fail to achieve a perfect fit; the calibration of standard parametric models inherently relies on computationally intensive simulation-based optimization; and there is a lack of formal theory to determine when a perfect-fit model exists and, if it exists, how to construct it. We propose a theoretical framework to overcome these difficulties. We first introduce and define two compatibility levels of market prices—weak compatibility and strong compatibility. Specifically, market prices across all tranches are said to be weakly (resp. strongly) compatible if there exists a single model (resp. a single conditionally i.i.d. model) that perfectly fits these market prices.  
We then derive sufficient and necessary conditions for both levels of compatibility by establishing a relationship between compatibility and linear programming (LP) problems. 
Furthermore, under either condition, we construct a corresponding concrete copula model that achieves a perfect fit. Notably, our framework not only allows for efficient verification of weak compatibility and strong compatibility through LP problems but also facilitates the construction of the corresponding copula models that achieve a perfect fit, eliminating the need for simulation-based optimization. 
The practical applications of our framework are demonstrated in risk management (e.g.,  designing effective hedging strategies and estimating loss distributions for credit portfolios) and the pricing of nonstandard credit derivatives.
}





\KEYWORDS{CDO modeling, Perfect fit, Copula, Weak compatibility, Strong compatibility} 

\maketitle




\section{Introduction} \label{sec:intro}

\subsection{Background and Motivations} \label{sec_sub:background_motivation}

Collateralized Debt Obligations (CDOs) constitute a significant segment of the credit markets, as underscored both by their substantial trading volume with the notional value of iTraxx/CDX tranches traded in 2023 estimated at \$335 billion (\citealp{godec2024fixed}) and by their instrumental role as primary tools for correlation trading (\citealp{mounfield2009synthetic}). 
A critical objective in CDO modeling is to develop a single, consistent model that achieves a ``perfect fit" to market prices across all tranches—by this, we mean a model capable of generating prices that exactly align with market prices across all tranches \citep{hull2006valuing}. Such a consistent model with perfect fit (called a ``perfect-fit" model hereafter) is essential for important applications, such as unified risk management and the pricing of nonstandard credit derivatives. It not only offers a cohesive framework for designing self-consistent hedging strategies for credit portfolios and calculating credit portfolios' loss distributions, but also enables the arbitrage-free pricing of nonstandard credit derivatives, including CDOs with nonstandard attachment and detachment points and CDOs with a nonstandard number of underlying names.

There are three primary difficulties in achieving the ``perfect fit" objective. First, standard parametric models generally fall short of attaining a perfect fit to market prices across all tranches. Regarding the seminal and influential Gaussian copula model proposed by \cite{li2000}, identifying a single Gaussian copula model that perfectly fits market prices across all tranches proves difficult. Indeed, ``correlation smiles" emerge (\citealp{moosbrucker2006explaining}), indicating that different tranches imply different correlations, thereby rendering Gaussian copula models inadequate for achieving a perfect fit. A more serious limitation is that even when a single Gaussian copula model is used to fit the market prices of an individual tranche, the implied correlation is often non-unique or even non-existent (e.g., \citealp{brigo2010credit} and \citealp{mcneil2015quantitative}). In the literature, a variety of different parametric copula models have been proposed, including elliptical copulas (e.g., the Student's $t$ copula), Archimedean copulas (e.g., the Clayton and nested Archimedean families; see, e.g., \citealp{prange2009correlation} and \citealp{hofert2010sampling}), and other advanced models based on the normal inverse Gaussian (NIG) or $\alpha$-stable distributions (e.g., \citealp{kalemanova2007normal} and \citealp{prange2009correlation}). While these parametric models offer greater flexibility, the related research indicates that these models also fail to achieve a perfect fit across all tranches. 

Second, 
due to the general inadequacy of standard parametric models to achieve a perfect fit, a common approach when calibrating them to market prices across all tranches is to minimize some calibration error metrics. Given that closed-form pricing formulas for CDO tranches are typically unavailable, this approach usually entails solving simulation-based optimization problems, which are computationally intensive as many key quantities involved in the numerical optimization have to be computed via simulation.  

Third, and most importantly, to the best of our knowledge, the literature lacks a general formal theory to determine when a perfect-fit model exists and, if it exists, how to construct it. A pioneering study in identifying a perfect-fit model across all tranches is \cite{hull2006valuing}. Nonetheless, it seems that their attention is confined to a relatively narrow class of models with specific assumptions; they  assume that default times are independent conditional on a random hazard rate that can take one of a finite number of states, and then calibrate the probabilities of these states to achieve a perfect fit. 

The celebrated work of \cite{hull2006valuing} motivates us to consider general models with minimal assumptions for achieving a perfect fit and to establish a general formal theory concerning the existence and construction of a general perfect-fit model. Moreover, it will be desirable to develop efficient algorithms to verify the existence of a general perfect-fit model and, if it exists, to construct such a model, without resorting to computationally intensive simulation-based optimization. 

\subsection{Our Contributions} 

In this paper, we propose a theoretical framework to examine when a general perfect-fit model exists and, if it exists, how to construct it. Our study focuses on the existence and construction of general ``perfect-fit" copula functions. This is without loss of generality because, according to Sklar's theorem (\citealp{CopulaMethod}), any probabilistic model for CDOs, including structural models of asset values (e.g., \citealp{kijima2010pricing}, \citealp{collin2012relative}, and \citealp{seo2018rare}) and reduced-form models of default intensities (e.g., \citealp{peng2008connecting} and \citealp{Frey2012}), implies a joint distribution of default times and, in turn, a corresponding copula function.

Specifically, we first introduce a new concept—weak compatibility—as the most natural and comprehensive characterization of the consistency of CDO market prices across all tranches. CDO market prices across all tranches on the same underlying portfolio are said to be weakly compatible if there exists any copula (equivalent to the existence of any probabilistic model) that can perfectly fit them. Notably, weak compatibility represents the minimal requirement for the theoretical consistency of market prices, as it guarantees the existence of a unified pricing measure (i.e., an equivalent martingale measure), thus ensuring the absence of arbitrage in the market. Furthermore, it enables consistent pricing of nonstandard credit derivatives on the same portfolio, such as CDOs with nonstandard attachment and detachment points. 

While weak compatibility addresses the minimal requirement for the theoretical consistency of market prices, practical applications often demand more structure. For instance, a general copula that meets weak compatibility may have a complex functional form, making it difficult to apply to solving certain practically important problems, such as the efficient pricing of CDOs with a nonstandard number of underlying names and the computation of credit portfolios' loss distributions for risk management. To address these practical needs, we introduce another new concept: strong compatibility; CDO market prices across all tranches on the same underlying portfolio are said to be strongly compatible if there exists a copula from the class of conditionally independent and identically distributed (i.i.d.) models that can perfectly fit them. Beyond enabling solutions to the aforementioned practical problems, strong compatibility is also well-motivated by the fact that both structural and reduced-form credit models, when applied under a homogeneity assumption, implicitly rely on a conditionally i.i.d. structure. 

Indeed, it is straightforward to see that strong compatibility implies weak compatibility. Furthermore, failure to be weak compatibility essentially implies ``incompatibility", meaning that there exists no copula that perfectly fits the CDO market prices across all tranches and thus underscoring the fact that weak compatibility represents the minimal requirement for the theoretical consistency of market prices.
See Figure~\ref{Comp1} for their relationship. 

\begin{figure}[ht] 
		\caption{Categorization of CDO Market Prices Across All Tranches 
			Based on Their Compatibility}
\begin{center}	\includegraphics[width=0.56\textwidth,height=0.3\textwidth]{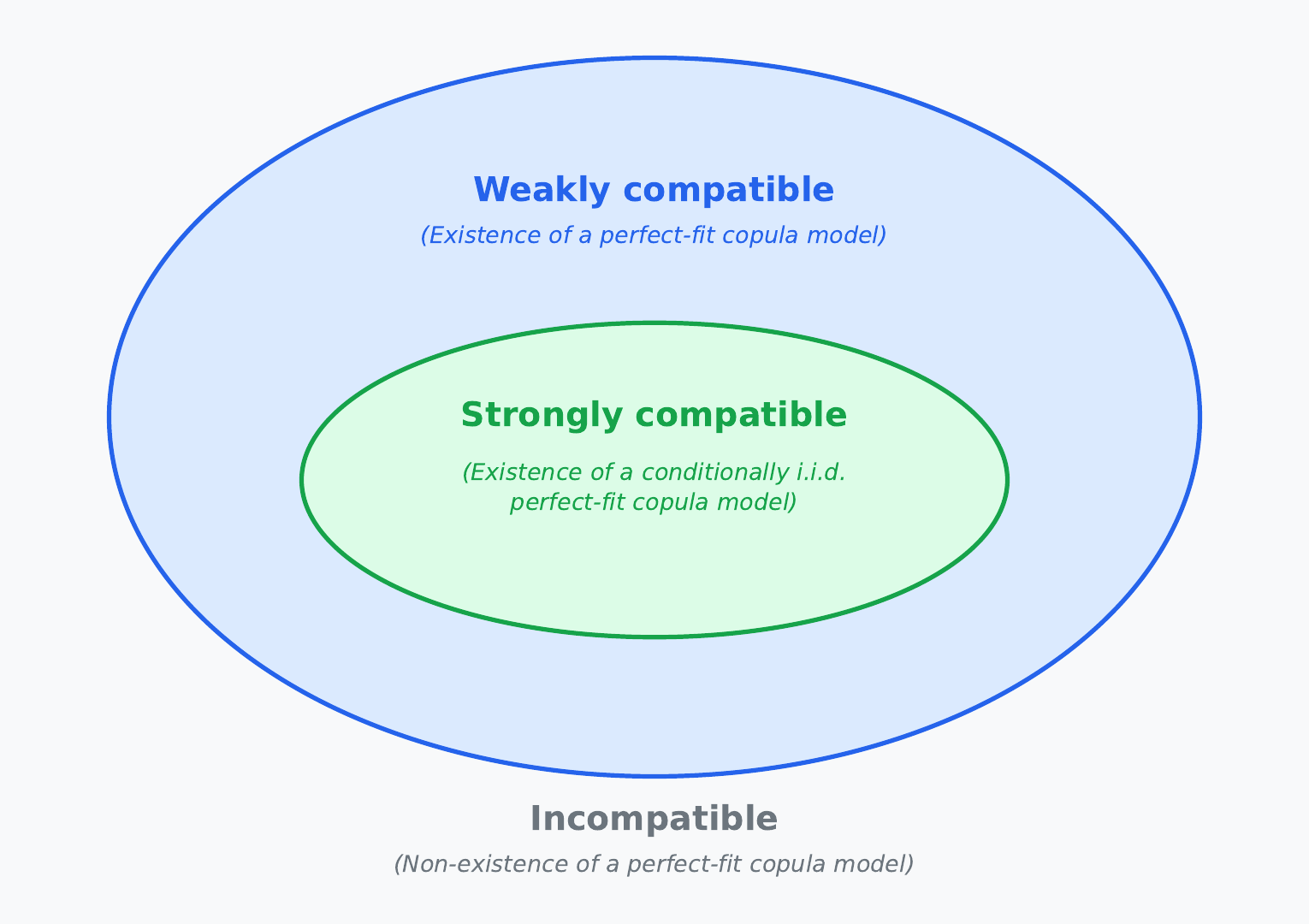}
\end{center} \vspace{0.2cm}
	\begin{minipage}{\textwidth}
\small \linespread{1.2}\selectfont 
	\vspace{0.15cm}

	{{\footnotesize{\emph{Notes.} This figure illustrates the categorization of CDO market prices across all tranches on the same underlying portfolio based on their compatibility: (i) incompatible (grey), (ii) weakly compatible (blue), and (iii) strongly compatible (green). 
		Indeed, ``strongly compatible" implies ``weakly compatible". Furthermore, failure to be ``weakly compatible" implies ``incompatible", underscoring the fact that ``weakly compatible" represents the minimal requirement for the theoretical consistency of market prices.}}}	
		\end{minipage}
\label{Comp1} 
\end{figure} 
\vspace{-0.3cm}

For both weak compatibility and strong compatibility, we derive a sufficient and necessary condition by establishing a relationship between each concept and a linear programming (LP) problem. Specifically, for weak compatibility, we propose a DPM-based approach to reduce its verification to an LP problem, where the DPM, abbreviated from the ``default probability matrix", summarizes information about the default count distribution at a finite number of prespecified time points. For strong compatibility, a novel and flexible family of copulas termed ``gamma-distorted copulas" is introduced to establish the  sufficient and necessary condition. Furthermore, for each, when the established sufficient and necessary condition holds, we demonstrate how to construct a corresponding concrete copula model that achieves a perfect fit. Our framework not only enables the efficient verification of weak compatibility and strong compatibility through LP problems but also facilitates the construction of the corresponding concrete copula models, eliminating the need for computationally intensive simulation-based optimization.

In summary, the main contributions of our paper are threefold. 
\begin{enumerate}[label=$\bullet$]
\item We establish a formal theoretical framework to address two questions: when a general copula model exists that can perfectly fit CDO market prices across all tranches, and if it exists, how to construct such a perfect-fit copula model. To this end, we put forward two new concepts—weak compatibility and strong compatibility—to fill the literature gap,  which respectively address the existence of a general copula model and a general conditionally i.i.d. copula model that can achieve a perfect fit.

\item For both weak compatibility and strong compatibility, we derive a sufficient and necessary condition by reducing its verification to an LP problem. In addition, for each, when the associated condition holds, we demonstrate how to construct a corresponding concrete perfect-fit copula model. Both the verification and construction processes are efficient because the related LP problems can be readily solved numerically, without resorting to computationally intensive simulation-based optimization.

\item We apply weak compatibility and strong compatibility to several important practical applications. Specifically, 
the former is employed to construct effective model-independent hedging strategies for credit portfolios and to price CDOs with nonstandard attachment and detachment points, while the latter is used to compute the loss distributions of credit portfolios and to price CDOs with a nonstandard number of underlying names. See 
Table~\ref{tab:compatibility_applications} for a summary of these applications. 


\end{enumerate}

\begin{table}[ht]
\centering
\caption{Applications of Weak Compatibility and Strong Compatibility}\vspace{0.1cm}
\begin{tabularx}{\textwidth}{l@{\hspace{1em}}l@{\hspace{1em}}l}
\toprule
& Weak Compatibility & Strong Compatibility \\
\midrule
\addlinespace
\shortstack[l]{Pricing nonstandard \\ credit derivatives} & \shortstack[l]{Pricing CDOs with nonstandard \\ attachment and detachment points} & \shortstack[l]{Pricing CDOs with a nonstandard \\ number of underlying names} \\
\addlinespace
\shortstack[l]{Conducting risk \\ management} & \shortstack[l]{Constructing model-independent \\ hedging strategies for credit portfolios} & \shortstack[l]{Estimating the loss distributions \\of credit portfolios} \\
\bottomrule
\end{tabularx}
\label{tab:compatibility_applications}
\end{table}

\subsection{Comparison with \cite{hull2006valuing}} \label{secsub:compare_hullwhite}

The work by \cite{hull2006valuing} stands as a seminal contribution, marking the first explicit emphasis on the criticality of achieving a perfect fit to CDO market prices across all tranches. Their ``implied copula" method offers an influential demonstration of how such calibration might be practically accomplished, and their work serves as a core motivation for our research. Our paper builds on the ``perfect fit" objective they lay out and establishes a formal theoretical foundation for this problem. Specifically, our theoretical framework extends their analysis along three key dimensions.


First, our theoretical framework is substantially more general. 
The implied copula method of \cite{hull2006valuing} relies on a specific functional form derived from the assumption that default times are independent conditional on a random hazard rate that can take one of a finite number of states. In contrast, our weak compatibility framework adopts the most general perspective, considering the set of all valid copulas without imposing any \textit{a priori} structural assumptions. 
This generality enables us to establish the model-independent conditions for price consistency. Furthermore, even our strong compatibility framework remains highly general, encompassing all conditionally i.i.d. copula models—with the model considered in \cite{hull2006valuing} included as a special case. 

Second, our theoretical framework is developed for the general multi-period setting inherent in standard CDO contracts. While the implied copula method of \cite{hull2006valuing} values cash flows across multiple periods, its dependence structure is effectively static. 
In their model, the default count distributions at all time points are governed by a single set of calibrated state probabilities, which implies that if the default distribution for any single date is known, the distributions for all other dates are automatically determined. 
By contrast, our strong compatibility framework provides the necessary degrees of freedom to characterize the conditions for a perfect fit in a truly multi-period setting. 
Therefore, our strong compatibility framework can be viewed not only as an extension of  \cite{hull2006valuing} to the general class of conditionally i.i.d. copulas but also as a multi-period extension of the modeling approach in \cite{hull2006valuing}. 

Third, and most importantly, 
we establish the formal theoretical conditions for the existence of a perfect-fit model, or more precisely, for the existence of a general copula model capable of achieving a perfect fit. 
The constructive method of \cite{hull2006valuing} successfully achieved a perfect fit to their dataset and has proven effective in many practical market scenarios (\citealp{brigo2010credit}).
However, since they consider only a relatively narrow class of models, their method offers no guarantees of the existence of a perfect-fit model. 
Furthermore, they do not delineate the general conditions under which a perfect-fit model exists. Our paper addresses this theoretical gap by deriving the sufficient and necessary conditions for both weak compatibility and strong compatibility. In doing so, we answer the fundamental question of \textit{when} a perfect-fit model is possible, a question that precedes the task of determining \textit{how} to construct it. 

The remainder of our paper is organized as follows. 
Section \ref{TDC} 
formally defines weak compatibility and strong compatibility. 
Sections \ref{WC} and \ref{CM1} present the main results for weak compatibility and strong compatibility, respectively, whereas Section \ref{sec:appl} discusses their various applications.
Empirical results are reported in Section \ref{sec:empirical}. All the proofs and some auxiliary results are deferred to the e-companion.


\section{Definitions of Weak Compatibility and Strong Compatibility}\label{TDC}

\subsection{Valuation Framework for CDO Tranches} \label{sec_sub:value_CDO}

Before introducing the new concepts of weak compatibility and strong compatibility, we first outline the valuation framework for CDO tranches as preliminary groundwork. Consider a CDO that references a portfolio of $n$ underlying assets or names. For $j=1,2,\ldots,n$, let $\tau_j$ denote the default time of the $j$-th asset. 
According to Sklar's theorem  (\citealp{CopulaMethod}), the joint distribution function $F_{\mathrm{joint}}(t_1, \ldots, t_n)$ of $\tau_1,\ldots,\tau_n$ can be decomposed into their respective marginal distribution functions $F_1(\cdot),\ldots,F_n(\cdot)$ and a copula function $C$ that describes the dependence structure:
\begin{equation*}
	F_{\mathrm{joint}}(t_1, \ldots, t_n) = C(F_1(t_1), \ldots, F_n(t_n)) \quad\text{for 
	$t_j\geq 0$ with $j=1,\cdots,n$}.
\end{equation*}

For simplicity, we assume a constant recovery rate $R_j$ for the $j$-th name in this standard framework.
Then the cumulative loss process of the portfolio up to time $t$ is given by
\begin{equation*}
	L_t = \sum_{j = 1}^n (1-R_j)A_j \I_{\{\tau_j \leq t\}} \quad\text{for $t\geq 0$,}
\end{equation*}
where $A_j$ denotes the notional amount of the $j$-th reference entity and 
$\I_B$ (for any event $B$) represents an indicator function that equals $1$ if $B$ occurs and  $0$ otherwise.

Let $0<T_1< \cdots < T_m = T$ denote the predefined payment dates of the CDO contract, with $T_0 := 0$ and $T$ representing the contract's maturity. 
Given that its cash flows are unaffected by any defaults occurring beyond maturity $T$, we assume,   
for mathematical completeness and notational convenience, the existence of a subsequent time point $T_{m+1}>T$, by which all assets in the portfolio are certain to have defaulted. 
Then it holds that $F_j(T_{m+1}) = 1$ for all $j=1,\ldots,n$.

The CDO structure partitions the total portfolio loss into $M$ tranches, each characterized by an attachment point $a_l$ and a detachment point $b_l$ ($0 \leq a_l < b_l \leq 1$). 
These tranches are denoted by $[a_l, b_l]$ for $l=1,\ldots, M$. 
The loss allocated to tranche $[a_l, b_l]$ up to time $t$, denoted by $L_t^{[a_l, b_l]}$, is the portion of the total portfolio loss $L_t$ that falls within this tranche:
\begin{equation*}
	L_t^{[a_l, b_l]} = {(L_t - a_l)}^+ - {(L_t - b_l)}^+\quad \text{for $l=1,\ldots, M$},
\end{equation*}
where ${(x)}^+ := \max(x,0)$ for any $x\in\mathbb R$. 

The value of a CDO tranche derives from the net cash flows between the protection buyer and seller. 
These cash flows comprise two legs: the premium leg (payments from the buyer to the seller) and the default leg (payments from the seller to the buyer). 

Specifically, the discounted value of the premium leg for tranche $[a_l, b_l]$ can be expressed as the sum of any up-front payment and the discounted value of the running spread payments on the outstanding tranche notional:
\begin{equation}
	\pl^{[a_l,b_l]} = \text{uf}^{[a_l, b_l]}(b_l - a_l) + s^{[a_l, b_l]} \sum_{i=1}^m \left[D(T_i)(T_i-T_{i-1})(b_l - a_l - L^{[a_l,b_l]}_{T_i})\right]\quad \text{for $l=1,\ldots, M$},
	\notag
\end{equation}
where $\text{uf}^{[a_l, b_l]}$ and $s^{[a_l, b_l]}$ denote the up-front payment and the running spread for tranche $[a_l, b_l]$, respectively, and $D(\cdot)$ represents the risk-free discount function. 
Market conventions dictate how these parameters are quoted. 
The equity and junior tranches are typically quoted via up-front payments (with $\text{uf}^{[a_l, b_l]}$ as the market variable and $s^{[a_l, b_l]}$ as a fixed running spread), while other tranches are quoted by their running spread (with $s^{[a_l, b_l]}$ as the market variable and $\text{uf}^{[a_l, b_l]}$ as $0$). 
See Table \ref{quote_example} in Section \ref{sec:empirical} for an example of market quotes.

%

The discounted value of the default leg for tranche $[a_l, b_l]$ represents the present value of payments made by the protection seller to cover realized losses on that tranche:
\begin{equation*}
	\dl^{[a_l, b_l]} = \int_0^T D(t)dL_t^{[a_l, b_l]}\quad \text{for $l=1,\ldots, M$}.
\end{equation*}
For computational purposes, it is common in the literature to assume that defaults (and thus the incremental losses $dL_t^{[a_l, b_l]}$) occur only at discrete time points, specifically the midpoints of the $m$ coupon payment periods $[T_0, T_1],\ldots,[T_{m-1}, T_m]$ (see, e.g., \citealp{peng2008connecting} and \citealp{papageorgiou2009multiscale}). 
Under this midpoint timing assumption, we can obtain 
\begin{equation}
	\dl^{[a_l, b_l]} = \sum_{i=1}^m D\left(\frac{T_{i - 1} + T_i}{2}\right)(L_{T_i}^{[a_l, b_l]}-L_{T_{i-1}}^{[a_l, b_l]})\quad \text{for $l=1,\ldots, M$}.\notag 
	\label{dl_def}
\end{equation}

Consider a long protection position in tranche $[a_l, b_l]$. Its net present value (NPV) $V^{[a_l, b_l]}$ is then given by the value received (default leg) minus the value paid (premium leg):
\begin{equation}
	V^{[a_l, b_l]} = \dl^{[a_l, b_l]} - \pl^{[a_l,b_l]}\quad \text{for $l=1,\ldots, M$}.
	\label{vab}
\end{equation}
For pricing and calibration purposes, we are interested in the expected NPVs 
$$v^{[a_l, b_l]}:=\E[V^{[a_l, b_l]}] \quad \text{for $l=1,\ldots, M$}$$ 
under a risk-neutral measure. 
If we assume that the observed market prices (the running spreads $s^{[a_l, b_l]}$ and the up-front payments $\text{uf}^{[a_l, b_l]}$) are fair, then the theoretical expected NPV $v^{[a_l, b_l]}$ calculated using these market quotes should be zero for all traded tranches, i.e., $v^{[a_l, b_l]} = 0$ for $l=1,\ldots, M$.



\subsection{Definitions of Weak Compatibility and Strong Compatibility} \label{sec_sub:dfn_ws_compatibility}

The valuation process of CDO tranches requires several inputs. The marginal default distributions $F_1(\cdot),\ldots, F_n(\cdot)$ can typically be derived from the market prices of CDS for the individual names in the portfolio. 
The risk-free discount function $D(\cdot)$ can be obtained from market interest rate curves. 
Given these market-implied inputs, the expected NPV $v^{[a_l, b_l]}$ of a tranche depends on two key components: the dependence structure of the underlying assets, described by the copula $C$, and the market quotes for the tranches, $\mathbf{s} = (s^{[a_1,b_1]}, \uf^{[a_1,b_1]}, \ldots, s^{[a_M,b_M]}, \uf^{[a_M,b_M]})$. We can thus explicitly express the expected NPV as $v^{[a_l,b_l]}(C; \mathbf{s})$ for $l = 1, \ldots, M$.


Accordingly, calibrating a CDO model capable of perfectly fitting market prices across all tranches (i.e., identifying a copula $C$ (from a chosen family)  whose model-implied fair prices replicate the observed market prices) is equivalent to identifying a copula $C$ (from a chosen family) that simultaneously satisfies the fair value condition for all market-quoted tranches:
\begin{align}\label{eq:vC0}
	v^{[a_l, b_l]}(C; \mathbf{s}) = 0\quad \text{for $l=1,\ldots, M$}.
\end{align}
Building on this idea, we formally define compatibility of market prices across all tranches with respect to a specified family of copulas. 

\begin{Definition}\label{def: compat}({\bf Compatibility of Market Prices Across All Tranches With Respect to a Specified Family of Copulas $\mathcal{C}$})
	{\upshape Consider an $M$-tranche CDO with a known discount function $D(\cdot)$, recovery rates $R_1,\ldots,R_n$, and marginal distributions $F_1(\cdot), \ldots, F_n(\cdot)$ (typically derived from market data). Let $\mathcal{C}$ denote a specified family of $n$-dimensional copulas. 
		The observed market prices $\mathbf{s}= (s^{[a_1,b_1]}, \uf^{[a_1,b_1]}, \ldots, s^{[a_M,b_M]}, \uf^{[a_M,b_M]})$ across all $M$ tranches is said to be} compatible with respect to $\mathcal{C}$ {\upshape if there exists a copula $C \in \mathcal{C}$ such that \eqref{eq:vC0} holds.
		In this case, we also say that} the copula $C$ achieves a perfect fit to market prices across all tranches.
		\end{Definition}
		
		Consider two important general families of copulas, and then we can define two levels of compatibilities: weak compatibility and strong compatibility.
		
		\begin{Definition}\label{def: weak_compat}({\bf Weak Compatibility of Market Prices Across All Tranches})
{\upshape  Consider the general family $\mathcal{C}_0$ of copulas, which includes all $n$-dimensional copulas. Compatibility with respect to $\mathcal{C}_0$ is termed} weak compatibility.
\end{Definition}

\begin{Definition}\label{def: strong_compat}({\bf Strong Compatibility of Market Prices Across All Tranches})
{\upshape  Consider the general family $\mathcal{C}_1$ of copulas, which includes all $n$-dimensional, conditionally i.i.d. copulas. Compatibility with respect to $\mathcal{C}_1$ is termed} strong compatibility.
\end{Definition}

Since $\mathcal{C}_1$ is a subset of $\mathcal{C}_0$, strong compatibility implies weak compatibility. Indeed, weak compatibility confirms the existence of at least one general probabilistic model whose model-implied fair prices can replicate the market prices. In contrast, strong compatibility imposes a stricter requirement: such a model must belong to the class of conditionally i.i.d. models. 

Notably, weak compatibility represents the minimal requirement for the theoretical consistency of market prices. If market prices satisfy weak compatibility, there exists at least one general copula $C \in \mathcal{C}_0$ that can achieve a perfect fit to market prices across all tranches and hence can simultaneously justify all observed tranche prices. Furthermore, weak compatibility implies the absence of arbitrage opportunities among the CDO tranches. In addition, weak compatibility also enables the theoretically consistent pricing of nonstandard credit derivatives tied to the same portfolio, such as CDOs with nonstandard attachment and detachment points. Conversely, a failure to satisfy weak compatibility indicates that no single probabilistic model can rationalize the market prices across all tranches, pointing to a potential market anomaly.

In contrast, strong compatibility necessitates achieving a perfect fit to market prices across all tranches using a conditionally i.i.d. model. This requirement is also well-motivated for two key reasons. First, standard credit models, including both structural and reduced-form ones, implicitly rely on such a ``conditionally i.i.d." structure when applying the homogeneity assumption. Indeed, the dependence structure of these standard models is characterized by a set of common risk factors, and conditional on these factors, the individual default times become independent of one another. Second, this ``conditionally i.i.d." structure offers two critical advantages. (i) It enables the calibrated distribution of the common risk factors to be consistently applied in pricing nonstandard credit derivatives tied to a portfolio of difference size, such as CDOs with a nonstandard number of underlying names; (ii) It is particularly well-suited for CDO portfolio risk management, as it simplifies the estimation of loss distributions through a two-step simulation process: first, a path for the common risk factors is simulated, and then, conditional on the path, the individual defaults can be efficiently simulated as independent events.


\begin{Remark}
{\bf (Comparison with Traditional Calibration Methods of Parametric Copula Models)} {\upshape Traditional research in the CDO literature usually first specifies a particular family of parametric copula models (e.g., Gaussian copulas or Archimedean copulas) and then derives the model-implied market quotes as functions of the model parameters using \eqref{eq:vC0}. 
It is noteworthy that the use of \eqref{eq:vC0} here is indirect: while the market quotes referenced in \eqref{eq:vC0} are assumed to be actual market observations, they utilize \eqref{eq:vC0} to obtain model-implied market quotes $\mathbf{s}_{\mathrm{model}}$ satisfying $v^{[a_l, b_l]}(C; \mathbf{s}_{\mathrm{model}}) = 0$ for $l = 1 \ldots, M$. 
Finally, the calibration is completed by minimizing a distance metric between the model-implied market quotes $\mathbf{s}_{\mathrm{model}}$ and the observed market quotes $\mathbf{s}$. 
This approach is traditionally adopted in the literature primarily because specific families of parametric copula models are relatively restrictive, and as such, they generally fail to provide a perfect fit. 
Furthermore, given that the model-implied market quotes typically lack closed-form expressions and need to be computed via simulation, this approach involves solving a simulation-based optimization problem, which is computationally intensive. 

Motivated by \cite{hull2006valuing}, instead of deriving the model-implied market quotes as functions of the model parameters using \eqref{eq:vC0} and then minimizing the calibration error, we consider a general class of copula models and then directly identify a copula $C$ that solves \eqref{eq:vC0} exactly, where the market quotes are observed in the market. It turns out that for the two important general classes of copula models $\mathcal C_0$ and $\mathcal C_1$,  our proposed method offers not only greater mathematical convenience but also higher computational efficiency without resorting to simulation-based optimization techniques, as will be detailed in the subsequent sections. }  
\end{Remark}
	
	
\section{Weak Compatibility}
\label{WC}

In this section, we aim to derive a sufficient and necessary condition for weak compatibility by establishing a connection between it and a linear programming (LP) problem through the introduction of a new concept—the default probability matrix for the underlying portfolio of a CDO.  If weak compatibility is satisfied, we demonstrate how to construct a corresponding concrete copula $C\in \mathcal C_0$ that achieves a perfect fit. 

\subsection{The Default Probability Matrix (DPM)}

We begin by introducing the DPM for the underlying portfolio of a CDO. This matrix can be used to rephrase the key Eqs. \eqref{eq:vC0}, thereby transforming the problem of identifying a copula $C\in \mathcal C_0$ into that of determining the corresponding DPM.

Let $N_t := \sum_{j = 1}^n \I_{\{\tau_j \leq t\}}$ denote the default count (i.e., the number of defaults) by time $t$. The distributions of default counts at the predefined payment dates $T_1, \ldots, T_m$ can be summarized by the DPM $Q=\{q_{ij}\}_{1 \leq i \leq m, 0 \leq j \leq n}$, whose elements are defined as
\begin{equation}
	q_{ij} := \p(N_{T_i} = j)\quad\text{for $1 \leq i \leq m$ and $0 \leq j \leq n$}.\notag
\end{equation}

Given the known marginal distribution $F_j(\cdot)$ for each default time $\tau_j$, the DPM $Q$ is then determined by the copula $C$ of the default times $\tau_1, \ldots, \tau_n$. 
For any subset $S \subseteq \{1, \ldots, n\}$, let $C_S$ denote the $S$-marginal copula function of $C$ (i.e., the joint distribution function of the uniform random variables $U_k := F_k(\tau_k)$ for $k \in S$). 
Using the inclusion-exclusion principle, each element 
$q_{ij}$ of $Q$ can be derived as follows in a straightforward way. 
\begin{equation}
	q_{ij} = \sum_{\substack{S \subseteq \{1, \ldots, n\} \\ |S| \geq j}} {(-1)}^{|S| - j} \binom{|S|}{j} C_S(F_k(T_i); k \in S).
	\label{qC}
\end{equation}
This formula explicitly illustrates how the joint default probabilities (captured by the copula $C$ and the marginal distributions $F_k(\cdot)$) determine the probabilities $q_{ij}$ for $1 \leq i \leq m$ and $0 \leq j \leq n$.  

%

The following proposition establishes a sufficient and necessary condition for a matrix $\hat Q\in{\mathbb{R}^{m \times (n + 1)}}$ to qualify as a DPM corresponding to some distribution of default times.
\begin{Proposition}\label{equiv}
	For $0 = T_0 < T_1 < \cdots < T_m = T$ and a matrix $\hat Q = {\{\hat q_{ij}\}}_{1 \leq i \leq m, 0 \leq j \leq n}\in{\mathbb{R}^{m \times (n + 1)}}$, the following three statements are equivalent.
	\begin{enumerate}
		\item[(i)] The matrix $\hat{Q}$ is a DPM, that is, there exist non-negative random variables $\hat\tau_j$ (for $1\leq j \leq n$) such that $\mathbb P(\hat N_{T_i} = j) = \hat q_{ij}$, where $\hat N_t := \sum_{j=1}^n \I_{\{\hat \tau_j\leq t\}}$.     
		\item[(ii)] The matrix $\hat{Q}$ is a DPM for exchangeable default times, that is, there exist non-negative exchangeable random variables $\hat\tau_j$ (for $1\leq j \leq n$) such that $\mathbb P(\hat N_{T_i} = j) = \hat q_{ij}$, where $\hat N_t := \sum_{j=1}^n \I_{\{\hat \tau_j\leq t\}}$.     
		\item[(iii)] The matrix $\hat{Q}={\{\hat q_{ij}\}}_{1 \leq i\leq m, 0\leq j\leq n}$ satisfies the following linear constraints:
		\begin{equation}\label{qcond}
			\left\{\begin{array}{ll}
				\sum_{j=0}^n \hat q_{ij} = 1 &\quad {\rm for} \, 1 \leq i\leq m,\\
				\sum_{k\geq j} \hat q_{ik} \leq \sum_{k\geq j} \hat q_{i + 1, k} & \quad {\rm for} 1 \leq i \leq m - 1 \, {\rm and} \, 0 \leq j \leq n, \\ 
				\hat q_{ij}\geq 0 & \quad {\rm for} \, 1 \leq i \leq m \, {\rm and} \, 0\leq j\leq n.\\
			\end{array}
			\right.
		\end{equation}
	\end{enumerate}
\end{Proposition}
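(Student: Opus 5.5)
We prove the cycle (ii) $\Rightarrow$ (i) $\Rightarrow$ (iii) $\Rightarrow$ (ii). The implication (ii) $\Rightarrow$ (i) is trivial, since exchangeable default times are in particular default times.

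For (i) $\Rightarrow$ (iii), take default times $\hat\tau_1,\ldots,\hat\tau_n$ realizing $\hat Q$. Each row $\hat q_{i\cdot}$ is the distribution of the integer random variable $\hat N_{T_i}\in\{0,\ldots,n\}$, so the row sums equal one and the entries are non-negative. Because $t\mapsto \hat N_t$ is pathwise nondecreasing, $\{\hat N_{T_i}\ge j\}\subseteq\{\hat N_{T_{i+1}}\ge j\}$. This gives $\sum_{k\ge j}\hat q_{ik}\le \sum_{k\ge j}\hat q_{i+1,k}$, which is the stochastic-ordering constraint in \eqref{qcond}.

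The main step is (iii) $\Rightarrow$ (ii), which is a construction. Define the tail functions $G_i(j):=\sum_{k\ge j}\hat q_{ik}$; these are nonincreasing in $j$, with $G_i(0)=1$, and nondecreasing in $i$ by \eqref{qcond}. We couple all rows through a single uniform random variable $V$ via quantile functions. Set $N_i:=\max\{j: G_i(j)>V\}$, with $V\sim U(0,1)$. Then $\p(N_i\ge j)=\p(V<G_i(j))=G_i(j)$, so $N_i$ has law $\hat q_{i\cdot}$. Moreover $N_1\le N_2\le\cdots\le N_m$ pointwise, because $G_i\le G_{i+1}$. Set also $N_{m+1}:=n$. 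Equivalently, one can use the standard fact that first-order stochastically ordered laws admit a monotone coupling.

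Given the nondecreasing integer path $(N_1,\ldots,N_m,N_{m+1})$, we build default times. Draw a uniformly random permutation $\pi$ of $\{1,\ldots,n\}$, independent of $V$. For the names $\pi(k)$ with $N_{i-1}<k\le N_i$ (where $N_0:=0$), set $\hat\tau_{\pi(k)}:=T_i$; the names with $k>N_m$ receive $T_{m+1}$. Then $\hat N_{T_i}=\#\{k: k\le N_i\}=N_i$ for each $i\le m$, so $\p(\hat N_{T_i}=j)=\hat q_{ij}$.

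Exchangeability follows because the joint law is invariant under relabeling. For any fixed permutation $\sigma$, the composition $\sigma\circ\pi$ is again a uniform permutation independent of $V$, so $(\hat\tau_{\sigma(1)},\ldots,\hat\tau_{\sigma(n)})$ has the same law as $(\hat\tau_1,\ldots,\hat\tau_n)$. The times are non-negative, which completes (iii) $\Rightarrow$ (ii).

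The only delicate point is the monotone coupling of the rows, and that is exactly what the middle constraint of \eqref{qcond} supplies. The remaining checks are bookkeeping. If one prefers atomless marginals, the times may be spread continuously within $(T_{i-1},T_i]$ without changing any $\hat N_{T_i}$.
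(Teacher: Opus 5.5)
Your proof is correct and uses the same ingredients as the paper. The constraints are necessary by the monotonicity of $\hat N_t$. A single uniform variable thresholded at the tail sums $\sum_{k\ge j}\hat q_{ik}$ (the paper's $\theta_{ij}$) couples all rows monotonically; the paper places defaults at period midpoints rather than at $T_i$, which changes nothing. A uniform random permutation then gives exchangeability. The difference is only organizational: you run the cycle (ii)$\Rightarrow$(i)$\Rightarrow$(iii)$\Rightarrow$(ii) and permute only the constructed times, whereas the paper symmetrizes arbitrary default times in (i)$\Rightarrow$(ii), which also shows that any portfolio can be exchangeabilized with a pathwise identical default count.
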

{\it Proof.} See Section \ref{proof11} in the e-companion. \hfill $\Box$

\begin{Remark}\label{rem:equivDPM_1}
	\upshape{The equivalence of (i) and (ii) in Proposition \ref{equiv} implies that for any credit portfolio with default times $\tau_1, \ldots, \tau_n$, one can construct an equivalent portfolio with ``exchangeable" default times $\tau_1', \ldots, \tau_n'$ that share the same DPM as the original one. Consequently, if the valuation of a credit derivative depends only on the default counts, one may assume that the distribution of $\tau_1, \ldots, \tau_n$ is exchangeable. This finding justifies the commonly used assumption in the literature (e.g., \citealp{burtschell2009comparative} and \citealp{collin2024integrated}) 
	that all assets in a portfolio share identical marginal distributions of default times and pairwise correlations. }
\end{Remark} 
\begin{Remark}\label{rem:equivDPM_2}
	\upshape{(iii) in Proposition \ref{equiv} provides a sufficient and necessary condition for a matrix $\hat{Q}$ to qualify as a valid DPM. This condition is intuitively necessary: first, the probabilities sum to one; second, cumulative default counts are non-decreasing over time. Interestingly, Proposition \ref{equiv} establishes that this intuitively necessary condition also proves sufficient. }
\end{Remark}

\subsection{Expressing the Expected NPVs $v^{[a_l, b_l]}(C; \mathbf{s})$ of Tranches in Terms of the DPM}

This subsection focuses on expressing the expected NPVs $v^{[a_l, b_l]}(C; \mathbf{s})$  (for $1\leq l \leq M$) of tranches in terms of the DPM. This, in turn, allows us to rephrase the key Eqs. \eqref{eq:vC0} using the DPM.   

In the remainder of this paper, we adopt two common assumptions from the literature (e.g., \citealp{wang09pricing}, \citealp{burtschell2009comparative}, \citealp{collin2012relative}, and \citealp{seo2018rare}). First, the notional amounts and recovery rates of all the names in the portfolio are the same, i.e., $A_j = A = \frac{1}{n}$ and $R_j = R$ for all $1\leq j \leq n$. Second, all the marginal distributions of default times are identical, i.e., $F_j(\cdot) = F(\cdot)$ for all $1\leq j \leq n$, which can be justified by Proposition \ref{equiv} (see Remark \ref{rem:equivDPM_1}).


\begin{Proposition}\label{pricingprop}
	{\bf(Expressing the Expected NPVs $v^{[a_l, b_l]}(C; \mathbf{s})$ of Tranches in Terms of the DPM)}
	\quad Given the market prices of CDO tranches $\mathbf{s} = (s^{[a_1,b_1]}, \uf^{[a_1,b_1]}, \ldots,$ $s^{[a_M,b_M]}, \uf^{[a_M,b_M]})$, then we have 
	\begin{equation}
		v^{[a_l, b_l]}(C; \mathbf{s}) = {\LA^{[a_l, b_l]}}(\mathbf{s})'Q\BA^{[a_l, b_l]} - \gamma^{[a_l, b_l]}(\mathbf{s})\quad \, {\rm for}\, l=1,\ldots,M,   
		\label{vc}
	\end{equation}
	where $Q={\{q_{ij}\}}_{1 \leq i\leq m, 0\leq j\leq n}$ is the DPM of the underlying portfolio, $\BA^{[a_l, b_l]} = (\beta_0^{[a_l, b_l]},\ldots,\beta_n^{[a_l, b_l]})'$, $\beta_j^{[a_l, b_l]} = {\left(j(1-R)/n - a_l\right)}^+ - {\left(j(1 - R)/n - b_l\right)}^+$ for $j=0,1,\ldots,n$,  $\LA^{[a_l, b_l]}(\mathbf{s}) = (\lambda_1^{[a_l, b_l]}(\mathbf{s}),$\\$ \ldots, \lambda_m^{[a_l, b_l]}(\mathbf{s}))'$, 
	\begin{align}
		\lambda_i^{[a_l, b_l]}(\mathbf{s}) &= s^{[a_l, b_l]}D(T_i)(T_i - T_{i-1}) + D\left(\frac{T_{i-1}+T_i}{2}\right) - D\left(\frac{T_{i+1}+T_i}{2}\right)\I_{\{i < m\}}
		\label{vspread_1}
	\end{align}
	for $i=1,\ldots,m$, and   		     
	\begin{align}
		\gamma^{[a_l, b_l]}(\mathbf{s}) &= (b_l - a_l) \text{uf}^{[a_l, b_l]} + (b_l - a_l)s^{[a_l, b_l]}\sum_{i=1}^m D(T_i)(T_i-T_{i-1}).
		\label{vspread_2}
	\end{align}
\end{Proposition}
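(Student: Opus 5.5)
Under the homogeneity assumptions ($A_j=1/n$, $R_j=R$), the cumulative loss at each payment date is a deterministic function of the default count: $L_{T_i}=N_{T_i}(1-R)/n$. Consequently the tranche loss is $L^{[a_l,b_l]}_{T_i}=\beta^{[a_l,b_l]}_{N_{T_i}}$, where $\beta_j^{[a_l,b_l]}$ is exactly as in the statement. Taking expectations gives
\[
\E\bigl[L^{[a_l,b_l]}_{T_i}\bigr]=\sum_{j=0}^n q_{ij}\beta_j^{[a_l,b_l]}=(Q\BA^{[a_l,b_l]})_i ,
\]
the $i$-th component of $Q\BA^{[a_l,b_l]}$. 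Since $L_{T_0}=L_0=0$, we also have $\E[L^{[a_l,b_l]}_{T_0}]=0$. The plan is to write both legs of \eqref{vab} as linear combinations of these expected tranche losses and then collect the coefficients.

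\textbf{Default leg.} Take expectations in the midpoint formula for $\dl^{[a_l,b_l]}$. Write $d_i:=D((T_{i-1}+T_i)/2)$ and $e_i:=\E[L^{[a_l,b_l]}_{T_i}]$, so that $\E[\dl^{[a_l,b_l]}]=\sum_{i=1}^m d_i(e_i-e_{i-1})$. Apply summation by parts using $e_0=0$:
\[
\E[\dl^{[a_l,b_l]}]=\sum_{i=1}^m e_i\bigl(d_i-d_{i+1}\I_{\{i<m\}}\bigr).
\]
The coefficient of $e_i$ is precisely the discount-factor part of $\lambda_i^{[a_l,b_l]}(\mathbf{s})$ in \eqref{vspread_1}. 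The boundary term $-d_{i+1}$ is present only for $i<m$, because the index shift produces no $e_m d_{m+1}$ term.

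\textbf{Premium leg.} Expand $\E[\pl^{[a_l,b_l]}]$. Its deterministic part is the upfront amount $\uf^{[a_l,b_l]}(b_l-a_l)$ plus $s^{[a_l,b_l]}\sum_i D(T_i)(T_i-T_{i-1})(b_l-a_l)$, and together these equal $\gamma^{[a_l,b_l]}(\mathbf{s})$ in \eqref{vspread_2}. The remaining part is $-s^{[a_l,b_l]}\sum_i D(T_i)(T_i-T_{i-1})e_i$. Subtracting the premium leg from the default leg therefore adds $s^{[a_l,b_l]}D(T_i)(T_i-T_{i-1})$ to the coefficient of each $e_i$, which yields exactly $\lambda_i^{[a_l,b_l]}(\mathbf{s})$.

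Combining the two legs gives $v^{[a_l,b_l]}=\sum_i\lambda_i e_i-\gamma=\LA'Q\BA-\gamma$, which is \eqref{vc}. The only delicate step is the index bookkeeping in the summation by parts, namely handling the $i=m$ boundary and using $L_0=0$ to drop the $i=0$ term. Everything else is linearity of expectation. Note that this argument uses only the distribution of $N_{T_i}$ and not the copula $C$ beyond its DPM, which is what makes \eqref{vc} a formula in terms of $Q$.
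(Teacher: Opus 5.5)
Your proposal is correct and follows essentially the same route as the paper. Both arguments use summation by parts, with $L_{T_0}=0$, so that each $L^{[a_l,b_l]}_{T_i}$ carries the coefficient $\lambda_i^{[a_l,b_l]}(\mathbf{s})$, and then use $L_{T_i}=N_{T_i}(1-R)/n$ to get $\E[L^{[a_l,b_l]}_{T_i}]=\sum_j\beta_j^{[a_l,b_l]}q_{ij}$; the only cosmetic difference is that the paper does the summation by parts pathwise, obtaining $V^{[a_l,b_l]}=\sum_i\lambda_i^{[a_l,b_l]}(\mathbf{s})L^{[a_l,b_l]}_{T_i}-\gamma^{[a_l,b_l]}(\mathbf{s})$ (reused later for simulating tranche NPVs), whereas you take expectations first.
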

{\it Proof.} See Section \ref{proof11} in the e-companion. \hfill $\Box$

\subsection{A Sufficient and Necessary Condition for Weak Compatibility}

According to Proposition~\ref{pricingprop}, identifying a copula $C\in\mathcal C_0$ via 
$v^{[a_l, b_l]}(C; \mathbf{s}) =0$ is equivalent to identifying a DPM $Q$, as the right-hand side (RHS) of \eqref{vc} depends on $C$ solely through $Q$. Moreover, Proposition~\ref{pricingprop} implies that $v^{[a_l, b_l]}$ depends affinely on the elements $q_{ij}$ ($1 \leq i \leq m$ and $0 \leq j \leq n$) of the DPM $Q$. It turns out that this nice affine structure enables the reduction of the problem of identifying a copula $C\in\mathcal C_0$ to an LP problem, or more precisely, a linear feasibility problem. Theorem \ref{ImpliedMatrix} below establishes a sufficient and necessary condition for weak compatibility in terms of an LP problem, whereas Theorem \ref{thm:construct} provides a method to construct a corresponding concrete copula $C\in\mathcal C_0$ when weak compatibility is satisfied.   


\begin{Theorem}\label{ImpliedMatrix}	
	{\bf (A Sufficient and Necessary Condition for Weak Compatibility)} 
	The market prices of CDO tranches $\mathbf{s} = (s^{[a_1,b_1]}, \uf^{[a_1,b_1]}, \ldots, s^{[a_M,b_M]}, \uf^{[a_M,b_M]})$ satisfy weak compatibility if and only if there exists $\{\hat q_{ij}\}_{1 \leq i \leq m, 0\leq j \leq n}$ that satisfies the following linear constraints:
	\begin{equation}
		\left\{
		\begin{array}{ll}
			\sum_{i=1}^m \sum_{j=0}^n \lambda_i^{[a_l,b_l]}(\mathbf{s}) \beta_j^{[a_l,b_l]} \hat q_{ij} - \gamma^{[a_l, b_l]}(\mathbf{s}) = 0 &\quad\text{\upshape{for} $1\leq l \leq M$},\\
			\sum_{j=0}^n \hat q_{ij} = 1 &\quad\text{\upshape{for} $1 \leq i\leq m$},\\
			\sum_{j=0}^n j\hat q_{ij} = n F(T_i) &\quad\text{\upshape{for} $1 \leq i \leq m$}, \\
			\sum_{k\geq j} \hat q_{ik} \leq \sum_{k\geq j} \hat q_{i + 1, k} &\quad\text{\upshape{for} $1 \leq i\leq m - 1$ \upshape{and} $0\leq j \leq n$},\\
			\hat q_{ij} \geq 0 &\quad \text{\upshape{for} $1 \leq i \leq m$ \upshape{and} $0\leq j\leq n$.}\\
		\end{array}
		\right.
		\label{linearsystem}
	\end{equation}
\end{Theorem}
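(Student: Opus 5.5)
We prove the two directions separately. Both rest on Proposition~\ref{pricingprop}, which writes each tranche NPV as an affine function of the DPM, and on Proposition~\ref{equiv}, which characterises valid DPMs by linear constraints and allows exchangeable default times. The one constraint not covered by these results is the marginal constraint $\sum_j j\hat q_{ij}=nF(T_i)$. It ties the DPM to the prescribed common marginal $F$, and I expect it to be the main point needing care.

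\emph{Necessity.} Assume weak compatibility holds. Then some copula $C\in\mathcal C_0$ gives default times $\tau_j$ with marginals $F$ such that \eqref{eq:vC0} holds. Let $Q$ be the DPM of these default times and set $\hat q_{ij}=q_{ij}$. We check the constraints of \eqref{linearsystem} in turn.
\begin{itemize}
\item By Proposition~\ref{pricingprop}, $v^{[a_l,b_l]}(C;\mathbf s)=\LA'Q\BA-\gamma$, which is exactly the first constraint.
\item Since $Q$ is a DPM, the implication (i)$\Rightarrow$(iii) of Proposition~\ref{equiv} gives the row-sum, monotonicity and nonnegativity constraints.
\item Finally, $\sum_j j q_{ij}=\E[N_{T_i}]=\sum_{k=1}^n \p(\tau_k\le T_i)=nF(T_i)$.
\end{itemize}

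\emph{Sufficiency.} Take $\hat Q$ satisfying \eqref{linearsystem}. By (iii)$\Rightarrow$(ii) of Proposition~\ref{equiv}, there are exchangeable nonnegative random variables $\hat\tau_1,\dots,\hat\tau_n$ whose DPM is $\hat Q$.

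We need these to have the prescribed marginal $F$ at every $t$, not only at the payment dates. Exchangeability gives $\p(\hat\tau_k\le T_i)=\E[\hat N_{T_i}]/n=F(T_i)$ at each $T_i$. Between dates the marginal may differ from $F$, so we modify the variables without changing any $\hat N_{T_i}$. Introduce the random grid index $I_k:=\min\{i:\hat\tau_k\le T_i\}\in\{1,\dots,m+1\}$, with the convention that $T_{m+1}$ is the terminal date by which all names default. Define
\[
\tau_k:=F^{-1}\bigl(F(T_{I_k-1})+V_k\,(F(T_{I_k})-F(T_{I_k-1}))\bigr),
\]
where $V_1,\dots,V_n$ are i.i.d.\ uniform and independent of everything else, and $F^{-1}$ is the generalized inverse. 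If $T_m<\hat\tau_k$ (i.e.\ $I_k=m+1$), we use $F(T_{m+1})=1$ and the same formula.

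Conditional on $I_k=i$, the variable $\tau_k$ has the law of $F$ restricted to $(T_{i-1},T_i]$. Here one must handle atoms and flat pieces of $F$ by a careful quantile argument; this is the technical step I expect to be the main obstacle. With $\p(I_k=i)=F(T_i)-F(T_{i-1})$, mixing over $i$ gives $\tau_k\sim F$.

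Moreover $\{\tau_k\le T_i\}=\{\hat\tau_k\le T_i\}$ almost surely, so $N_{T_i}=\hat N_{T_i}$ and the DPM of $(\tau_1,\dots,\tau_n)$ is still $\hat Q$. If the marginal constraint is used only at the $T_i$ and $F$ is continuous, the construction simplifies, but the quantile form handles the general case.

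Let $C$ be a copula of $(\tau_1,\dots,\tau_n)$, which exists by Sklar's theorem. Then $C\in\mathcal C_0$ together with the marginals $F$ generates the DPM $\hat Q$. By Proposition~\ref{pricingprop} and the first constraint of \eqref{linearsystem}, $v^{[a_l,b_l]}(C;\mathbf s)=0$ for all $l$, so weak compatibility holds.

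If $F$ is discontinuous, the copula is not unique. This is harmless here, since any copula consistent with the joint law reproduces that same law when combined with the marginals $F$.
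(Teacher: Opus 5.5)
Your proposal is correct. The necessity direction matches the paper's; your sufficiency argument enforces the marginal $F$ differently.

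The paper proves sufficiency through Theorem~\ref{thm:construct}. It keeps the discrete, midpoint-valued exchangeable vector from the proof of Proposition~\ref{equiv} and takes $\hat C$ to be a copula of that vector. It then evaluates $\hat C$ in closed form, \eqref{implied_copula}, on the grid $\{F(T_0),\ldots,F(T_{m+1})\}^n$; by \eqref{qC}, these grid values are all the DPM of the model $(\hat C,F)$ depends on. This step implicitly relies on two facts. First, the step-function marginal of $\hat\tau_j$ must equal $F$ at each $T_i$; this is where $\sum_j j\hat q_{ij}=nF(T_i)$ and exchangeability enter, although the paper never says so. Second, a copula of a vector with discrete marginals is determined only on the range of those marginals.

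Your within-cell quantile randomization makes the first fact explicit and makes the second irrelevant. The vector $(\tau_1,\ldots,\tau_n)$ has marginal exactly $F$ at every $t$ and the same grid events as $(\hat\tau_1,\ldots,\hat\tau_n)$. Hence any Sklar copula of it is a genuine element of $\mathcal C_0$ that reproduces $\hat Q$. It also stays exchangeable, since the same map is applied coordinatewise with i.i.d.\ $V_k$. What you give up is the explicit formula \eqref{implied_copula}, which Step~3 of the paper's algorithm uses.

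The step you flag as the main obstacle is in fact immediate. $V_k$ is independent of $I_k$, and $F^{-1}(w)\le t\iff w\le F(t)$ for right-continuous $F$. So for each $i$ with $F(T_i)>F(T_{i-1})$ you get $\p(\tau_k\le t\mid I_k=i)=\bigl(\min\{F(t),F(T_i)\}-F(T_{i-1})\bigr)^+/\bigl(F(T_i)-F(T_{i-1})\bigr)$. No case analysis of atoms or flat pieces is needed. Summing over $i$ telescopes to $F(t)$, provided $F(0)=0$, which the paper also tacitly assumes.

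Substituting $1$ for $F(T_{m+1})$ in the last cell works for any proper $F$. So your construction does not depend on the convention $F(T_{m+1})=1$, which the exponential $F$ of the empirical section does not actually satisfy.
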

{\it Proof.} See Section \ref{proof12} in the e-companion.\hfill $\Box$

\begin{Theorem}\label{thm:construct}	
	{\bf (A Method to Construct a Copula $\hat C\in\mathcal C_0$ When Weak Compatibility is Satisfied)} 
	Suppose that ${\{\hat q_{ij}\}}_{1\leq i \leq m, 0\leq j \leq n}$ is a solution to \eqref{linearsystem}. 
	Then there exists an exchangeable copula $\hat{C}$ that perfectly fits all market prices; specifically, $\hat{C}$ is defined as follows.
	Given $y_j \in \{0, 1, \ldots, m + 1\}$ (for $1 \leq j \leq n$), define $u_j := F(T_{y_j})$. Then the value of the copula $\hat C$ at $(u_1, \ldots, u_n)$ is defined as  
	\begin{equation}
		\hat C(u_1, \ldots, u_n) := \frac{1}{n!} \sum_{\sigma \in \mathcal{G}_n} \min_{1 \leq j \leq n}\left\{\sum_{k \geq \sigma(j)}\hat{q}_{y_j, k}\right\},
		\label{implied_copula}
	\end{equation}
	where $\mathcal{G}_n$ is the set of all permutations on the set $\{1, 2, \ldots, n\}$.
\end{Theorem}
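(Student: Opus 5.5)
The plan is to realize $\hat Q$ as the DPM of an explicit exchangeable vector of \emph{discrete} default indices, read off the joint distribution on the grid, and then extend this grid-defined function to a genuine copula via Sklar's subcopula extension. Write $G_i(k):=\sum_{l\ge k}\hat q_{il}$ for $1\le i\le m$ and $0\le k\le n$. We set $G_0(k):=\I_{\{k=0\}}$ and $G_{m+1}(k):=1$, consistent with $F(T_0)=0$ and $F(T_{m+1})=1$. By the second and fourth constraints in \eqref{linearsystem}, each $G_i(\cdot)$ is nonincreasing in $k$ with $G_i(0)=1$, and $G_i(k)$ is nondecreasing in $i$ for each $k$; the conventions at $i=0$ and $i=m+1$ extend this monotonicity.

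\textbf{Coupling.} Take $V\sim U(0,1)$ and an independent uniform random permutation $\sigma\in\mathcal G_n$. Define
\[
Y_j:=\min\{i\in\{1,\dots,m+1\}: V<G_i(\sigma(j))\}, \qquad \hat\tau_j:=T_{Y_j}.
\]
\begin{itemize}
\item Monotonicity in $i$ gives $\{Y_j\le y\}=\{V<G_y(\sigma(j))\}$ for every $y\in\{0,\dots,m+1\}$; for $y=0$ both sides are empty since $\sigma(j)\ge1$.
\item Conditional on $\sigma$, the events $\{V<G_{y_j}(\sigma(j))\}$ are nested. Hence $\p(Y_j\le y_j\ \forall j)=\frac1{n!}\sum_\sigma \min_j G_{y_j}(\sigma(j))$, which is exactly the right-hand side of \eqref{implied_copula}.
\item Exchangeability follows because $\sigma$ is uniform and independent of $V$.
\item The default count at $T_i$ is $\hat N_{T_i}=\#\{k\ge1: V<G_i(k)\}$, so $\p(\hat N_{T_i}\ge k)=G_i(k)$ and the DPM equals $\hat Q$.
\item Each marginal satisfies $\p(Y_j\le i)=\frac1n\sum_{k\ge1}G_i(k)=\frac1n\sum_j j\hat q_{ij}=F(T_i)$ by the third constraint.
\end{itemize}

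\textbf{Well-definedness on the grid.} The formula defines $\hat C$ on the grid $\{F(T_0),\dots,F(T_{m+1})\}^n$. If $F(T_i)=F(T_{i'})$ with $i<i'$, then $G_i\le G_{i'}$ pointwise and $\sum_{k\ge1}G_i(k)=\sum_{k\ge1}G_{i'}(k)$. Together these force $G_i=G_{i'}$, so the value does not depend on which index represents a given grid point. By construction the function is the joint distribution function of $(F(\hat\tau_1),\dots,F(\hat\tau_n))$ restricted to the grid. It has uniform margins there, because setting all but one $u_j$ equal to $1=F(T_{m+1})$ returns $F(T_{y_j})$. It is also grounded and $n$-increasing. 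It is therefore an $n$-subcopula.

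\textbf{Extension and pricing.} By the subcopula extension lemma used in the proof of Sklar's theorem (e.g.\ multilinear interpolation), it extends to a copula $\hat C$. The extension can be symmetrized by averaging over coordinate permutations without changing its grid values, so $\hat C$ is exchangeable. Now take default times with continuous marginal $F$ and copula $\hat C$. Every probability entering \eqref{qC} is a value $\hat C_S(F(T_i);k\in S)$ at grid points. Hence the resulting DPM equals the one produced by the coupling, namely $\hat Q$. Proposition~\ref{pricingprop} together with the first constraint of \eqref{linearsystem} then gives $v^{[a_l,b_l]}(\hat C;\mathbf s)=0$ for all $l$.

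\textbf{Main obstacle.} I expect the delicate step to be this passage from the discrete construction to a bona fide copula. It requires checking that grid values are consistent when $F$ has flat pieces, which the mean argument handles, and invoking the extension lemma correctly rather than writing down a closed form off the grid.
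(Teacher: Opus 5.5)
Your proposal is correct and follows essentially the paper's construction. The paper first builds ordered default times $\tilde\tau_1\le\cdots\le\tilde\tau_n$ from a single uniform via the thresholds $\sum_{k\ge j}\hat q_{ik}$, then applies an independent uniform random permutation to obtain exchangeability, and computes the grid values as in \eqref{eq:step3}; your $Y_j$ is exactly this construction with the two steps merged. Your extra checks make rigorous what the paper compresses into ``define $\hat C$ as the copula of $(\hat\tau_1,\ldots,\hat\tau_n)$ and apply Sklar's theorem'': consistency of the formula where $F$ is flat, uniform margins on the grid via the third constraint, subcopula extension plus symmetrization, and passage to the DPM through \eqref{qC}. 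One small slip: monotonicity of $G_i(k)$ in $k$ comes from the nonnegativity constraint, not from the second and fourth constraints.
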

{\it Proof.} See Section \ref{proof12} in the e-companion.\hfill $\Box$

\subsection{An Efficient Algorithm for Verifying Weak Compatibility and Constructing a Corresponding Concrete Copula When Satisfied}\label{sec_sub:weak_algo}

Based on Theorems \ref{ImpliedMatrix} and \ref{thm:construct}, we can develop the following algorithm to verify weak compatibility and construct a corresponding concrete copula when weak compatibility is satisfied. Since the core component of this algorithm is to solve an LP problem, or more precisely, a linear feasibility problem, the algorithm exhibits high efficiency. 
\vspace{0.1cm}

\begin{breakablealgorithm}{\bf Verifying Weak Compatibility and Constructing a Concrete Copula}
	\item[\textbf{Step 1:}] \textbf{Calculate Coefficients.} Compute the coefficients $\lambda_{i}^{[a_{l},b_{l}]}(\mathbf{s})$, $\gamma^{[a_{l},b_{l}]}(\mathbf{s})$, and $\beta_{j}^{[a_{l},b_{l}]}$ based on market data, as defined in Proposition~\ref{pricingprop}. 
	\item[\textbf{Step 2:}] \textbf{Verify Weak Compatibility by Solving an LP Problem.} Solve the linear feasibility problem \eqref{linearsystem} for the DPM. 
	If a feasible solution exists, the market prices satisfy weak compatibility; otherwise, they do not. 
	\item[\textbf{Step 3:}] \textbf{Construct a Corresponding Concrete Copula When Weak Compatibility is Satisfied.} If the market prices satisfy weak compatibility, use a feasible DPM $\hat{Q}$ obtained in Step 2 to construct a concrete copula $\hat C\in\mathcal C_0$ as defined in \eqref{implied_copula}. 
\end{breakablealgorithm}\label{algo:weak}
\vspace{0.1cm}

Denote by $\mathcal{Q}$ the feasible region of \eqref{linearsystem}. We refer to a feasible $\hat{Q}\in\mathcal{Q}$, if it exists, as a ``market-implied DPM" for the CDO. The market implied DPM may not be unique. Indeed, in such a case, the set $\mathcal{Q}$ encapsulates all possible DPMs that are consistent with the observed market prices of standard CDO tranches. The non-uniqueness of the market-implied DPM is reasonable. This is because $\mathcal{C}_{0}$ is extensive, encompassing all possible CDO models, while the observable market data, namely the prices of a finite number of standard CDO tranches, provides only partial information regarding the underlying CDO model. Consequently, multiple CDO models may exist, and accordingly multiple market-implied DPMs, all of which perfectly fit the market prices. Essentially, the feasible set $\mathcal{Q}$ precisely characterizes all possible CDO models that achieve a perfect fit to the market prices. 


\subsection{Extension to Weak Bid-Ask Compatibility}
\label{sec:bid-ask}
In practice, market prices of CDO tranches are quoted in the form of bid prices and ask prices. Notably, our weak compatibility framework can be readily extended to incorporate the market bid-ask prices. In such a scenario, 
rather than perfectly fitting a single price point, the requirement is that there must exist a copula model whose implied fair price lies within the observed bid-ask interval for all tranches.

We introduce a new concept termed \textbf{Weak Bid-Ask Compatibility} to address this issue, i.e., the theoretical consistency of market bid-ask prices of CDO tranches.
For any copula $C$, its model-implied fair price denoted by $\mathbf{s}_{\text{model}}$ refers to the price that equates the expected NPVs of a long protection position to zero, i.e., $v^{[a_l, b_l]}(C; \mathbf{s}_{\text{model}}) = 0$ for $1\leq l \leq M$.
Weak bid-ask compatibility, in turn, requires the existence of a copula $C \in \mathcal{C}_{0}$ whose model-implied fair price $\mathbf{s}_{\text{model}}$ is between the market bid price $\mathbf{s}_{\text{bid}}$ and ask price $\mathbf{s}_{\text{ask}}$: 
\begin{align}\label{eq:between}
	\mathbf{s}_{\text{bid}} \leq \mathbf{s}_{\text{model}} \leq \mathbf{s}_{\text{ask}}.
\end{align}
Since the expected NPV $v^{[a_l, b_l]}(C; \mathbf{s})$ is a monotonically decreasing function of the price $\mathbf{s}$ for each tranche $l=1,\ldots, M$, ``$\mathbf{s}_{\text{bid}} \leq \mathbf{s}_{\text{model}}$" and ``$\mathbf{s}_{\text{model}} \leq \mathbf{s}_{\text{ask}}$" are equivalent to ``$v^{[a_l, b_l]}(C; \mathbf{s}_{\text{bid}}) \geq v^{[a_l, b_l]}(C; \mathbf{s}_{\text{model}}) = 0$" and ``$v^{[a_l, b_l]}(C; \mathbf{s}_{\text{ask}})\leq v^{[a_l, b_l]}(C; \mathbf{s}_{\text{model}}) = 0 $", respectively. Therefore, condition \eqref{eq:between} can be re-expressed as:
\begin{align}\label{eq:between_equiv}
	v^{[a_l, b_l]}(C; \mathbf{s}_{\text{bid}}) \geq 0\quad \textrm{and}\quad v^{[a_l, b_l]}(C; \mathbf{s}_{\text{ask}})\leq 0 \quad \text{for } l = 1, \ldots, M.
\end{align}
Compared with condition \eqref{eq:between}, condition \eqref{eq:between_equiv} is more straightforward  to verify. Accordingly, we adopt it as the formal definition of weak bid-ask compatibility. 
\begin{Definition}[Weak Bid-Ask Compatibility]\label{def:bid-ask-weak}
    {\upshape  
    Market bid-ask prices $(\mathbf{s}_{\text{bid}},\mathbf{s}_{\text{ask}})$ of all $M$ CDO tranches are said to be \emph{weakly bid-ask compatible} if there exists a copula $C \in \mathcal{C}_{0}$ such that the expected NPVs from a long protection position satisfy \eqref{eq:between_equiv}.
}
\end{Definition}

The verification of weak bid-ask compatibility is analogous to Theorem \ref{ImpliedMatrix}, leading to the following Proposition \ref{weak_bid-ask-corollary} which reduces the verification problem to a linear feasibility problem. 
Similar to Algorithm \ref{algo:weak}, we can also develop an algorithm to verify weak bid-ask compatibility and, when satisfied, constructing a corresponding concrete copula. The details are omitted due to the analogy.

\begin{Proposition}[A Sufficient and Necessary Condition for Weak Bid-Ask Compatibility]\label{weak_bid-ask-corollary}
    The market bid-ask prices ($\mathbf{s}_{\text{bid}}, \mathbf{s}_{\text{ask}}$) satisfy weak bid-ask compatibility if and only if there exists a matrix $\{\hat{q}_{ij}\}_{1 \leq i \leq m, 0 \leq j \leq n}$ that satisfies the following linear constraints:
    \begin{equation}\label{eq:weak_bid_ask}
        \begin{cases}
    \sum_{i=1}^{m}\sum_{j=0}^{n}\lambda_{i}^{[a_{l},b_{l}]}(\mathbf{s}_{\text{bid}})\beta_{j}^{[a_{l},b_{l}]}\hat{q}_{ij}-\gamma^{[a_{l},b_{l}]}(\mathbf{s}_{\text{bid}})\geq 0 &\quad \text{\upshape{for} $1\leq l\leq M$}, \\
    \sum_{i=1}^{m}\sum_{j=0}^{n}\lambda_{i}^{[a_{l},b_{l}]}(\mathbf{s}_{\text{ask}})\beta_{j}^{[a_{l},b_{l}]}\hat{q}_{ij}-\gamma^{[a_{l},b_{l}]}(\mathbf{s}_{\text{ask}})\leq 0 &\quad \text{\upshape{for} $1\leq l\leq M$},\\
    			\sum_{j=0}^n \hat q_{ij} = 1 &\quad\text{\upshape{for} $1 \leq i\leq m$},\\
    \sum_{j=0}^n j\hat q_{ij} = n F(T_i) &\quad\text{\upshape{for} $1 \leq i \leq m$}, \\
    \sum_{k\geq j} \hat q_{ik} \leq \sum_{k\geq j} \hat q_{i + 1, k} &\quad\text{\upshape{for} $1 \leq i\leq m - 1$ \upshape{and} $0\leq j \leq n$},\\
    \hat q_{ij} \geq 0 &\quad \text{\upshape{for} $1 \leq i \leq m$ \upshape{and} $0\leq j\leq n$}.\\
        \end{cases}
    \end{equation}
\end{Proposition}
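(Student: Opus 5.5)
The plan is to reuse the machinery behind Theorem~\ref{ImpliedMatrix}. The only change is that the $M$ equality constraints are replaced by $2M$ inequality constraints. Everything hinges on one fact from Proposition~\ref{pricingprop}: for any copula $C$ and any quote vector $\mathbf{s}$, the expected NPV $v^{[a_l,b_l]}(C;\mathbf{s})$ depends on $C$ only through its DPM $Q$, and it does so affinely:
\[
v^{[a_l,b_l]}(C;\mathbf{s})=\sum_{i=1}^m\sum_{j=0}^n \lambda_i^{[a_l,b_l]}(\mathbf{s})\beta_j^{[a_l,b_l]}q_{ij}-\gamma^{[a_l,b_l]}(\mathbf{s}).
\]
We apply this identity twice, once with $\mathbf{s}=\mathbf{s}_{\text{bid}}$ and once with $\mathbf{s}=\mathbf{s}_{\text{ask}}$. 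The same $Q$ appears in both, since the DPM does not depend on the quotes. Under this identity, condition \eqref{eq:between_equiv} becomes exactly the first two blocks of \eqref{eq:weak_bid_ask}.

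\emph{Necessity.} Suppose $C\in\mathcal{C}_0$ satisfies \eqref{eq:between_equiv}, and let $Q=\{q_{ij}\}$ be its DPM, given by \eqref{qC} with common marginal $F$. Set $\hat q_{ij}=q_{ij}$; we check each block of \eqref{eq:weak_bid_ask} in turn.
\begin{itemize}
\item The two inequality blocks follow from the displayed identity.
\item Row sums equal one and the nonnegativity and monotonicity constraints hold, by the implication (i)$\Rightarrow$(iii) of Proposition~\ref{equiv}.
\item The mean constraint holds because $\sum_j j q_{ij}=\E[N_{T_i}]=\sum_{k=1}^n \p(\tau_k\le T_i)=nF(T_i)$.
\end{itemize}

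\emph{Sufficiency.} Let $\hat Q$ satisfy \eqref{eq:weak_bid_ask}. Its last four blocks are exactly the last four blocks of \eqref{linearsystem}, which is also all that the construction of Theorem~\ref{thm:construct} uses. So we would argue, as in the proof of that theorem, that the exchangeable function $\hat C$ in \eqref{implied_copula} extends to a genuine copula whose induced default times have marginal $F$ and DPM equal to $\hat Q$. One route is Proposition~\ref{equiv}: it gives exchangeable default times with DPM $\hat Q$, and the constraint $\sum_j j\hat q_{ij}=nF(T_i)$ forces their common marginal to equal $F$ at each $T_i$. The remaining technical point is to modify the times between grid points so that the marginal coincides with $F$ everywhere, not just at the grid, while leaving the counts at the $T_i$ unchanged. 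This step is exactly what Theorem~\ref{thm:construct} already handles, so we would invoke it rather than redo it. With the resulting copula $\hat C$, the displayed identity turns the first two blocks of \eqref{eq:weak_bid_ask} into \eqref{eq:between_equiv}. Hence the bid-ask prices are weakly bid-ask compatible by Definition~\ref{def:bid-ask-weak}.

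\emph{Main obstacle.} The main obstacle is confirming that the existence/construction statement in Theorem~\ref{thm:construct} depends only on the DPM constraints of \eqref{linearsystem} and not on the pricing equalities. We expect it does, because the pricing equalities enter only through the identity of Proposition~\ref{pricingprop}. If the paper's proof of Theorem~\ref{thm:construct} is phrased around the full system, we would restate its construction as a lemma: any $\hat Q$ satisfying the last four blocks of \eqref{linearsystem} is the DPM of some copula with marginals $F$. Both directions then follow immediately.
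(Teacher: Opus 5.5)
Your proposal is correct and is essentially the argument the paper has in mind; the paper omits the proof as analogous to Theorem~\ref{ImpliedMatrix}. Necessity is shown by checking that the DPM of a fitting copula satisfies the constraints via Propositions~\ref{equiv} and~\ref{pricingprop} and $\E[N_{T_i}]=nF(T_i)$, and sufficiency comes from the construction in Theorem~\ref{thm:construct}, which indeed uses only the last four blocks. Your ``remaining technical point'' is already resolved there by Sklar's theorem: the mean constraint makes the discrete marginals equal $F$ at every grid point, so any copula of the discrete exchangeable vector, paired with $F$, reproduces the joint probabilities at the $T_{y_j}$ (and hence the DPM) without any separate modification between grid points.
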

\noindent\emph{Proof.} The proof is similar to that of Theorem \ref{ImpliedMatrix} and is therefore omitted. \hfill $\Box$

\section{Strong Compatibility}\label{CM1}

In this section, we intend to derive a sufficient and necessary condition for strong compatibility. Additionally, if this condition is satisfied, we will provide a method to construct a corresponding concrete copula $C\in\mathcal C_1$ that achieves a perfect fit. Our approach is grounded in the theory of stochastic distortions. 
Specifically, we introduce a new and flexible family of stochastic distortions, termed \emph{gamma distortions}, which allows us to reframe the verification of strong compatibility as a series of LP problems, or more precisely, linear feasibility problems. 
Solving these LP problems yields not only a sufficient and necessary condition for strong compatibility but also the parameters required to construct a corresponding concrete copula $C\in\mathcal C_1$. 

\subsection{Conditionally I.I.D. Models and Stochastic Distortions}

As preliminary preparations, we first introduce the concept of stochastic distortions and then establish the close connection between conditionally i.i.d. copulas and stochastic distortions. 
\begin{Definition}[Stochastic Distortion, \citealp{Lin2018}]
	{\upshape A stochastic process $\{X(u):u\in[0,1]\}$ 
		is called a \emph{stochastic distortion} if it is non-decreasing a.s. (that is, $X(u_1) \leq X(u_2)$ a.s. for any $u_1 < u_2$) and satisfies that $X(0) = 0$ and $X(1) = 1$ a.s..}
\end{Definition}

If a stochastic distortion $X(u)$ satisfies $\E[X(u)] = u$ for any $u\in [0, 1]$, then 
\begin{equation}
	C^{X}(u_1, \ldots, u_n) := \E\left[\prod_{j=1}^n X(u_j)\right] \quad \text{for $(u_1, \ldots, u_n) \in [0, 1]^n$}
	\label{DistortedCopula}
\end{equation}
is a copula function (see Theorem 3.1 of \citealp{Lin2018}). In our paper, we will call 
$C^{X}$ 
the \emph{transformed copula by the stochastic distortion $X$}.

In Proposition \ref{ExchangeProp} below, we establish the equivalence between conditionally i.i.d. copulas and the existence of specific stochastic distortion representations. 
As a direct consequence,  each conditionally i.i.d.\ copula can be represented as a transformed copula by a stochastic distortion.

\begin{Proposition}[Stochastic Distortion Representations of Conditionally I.I.D. Copulas]\label{ExchangeProp} 
	Let ${\{U_k\}}_{k \in \mathbb{N}^+}$ be an infinite sequence of uniformly distributed random variables over $(0,1)$ (abbreviated as $\UU(0, 1)$ random variables hereafter) on a probability space $(\Omega, \mathcal{F}, \mathbb{P})$. Then the following statements are equivalent. 
	\begin{enumerate}
		\renewcommand{\labelenumi}{(\roman{enumi})}
		\item ${\{U_k\}}_{k \in \mathbb{N}^+}$ are i.i.d.\ conditional on some $\sigma$-field $\mathcal{H} \subseteq \mathcal{F}$.
		\item There exists a stochastic distortion $X$ with $\E[X(u)] = u$ for all $u \in [0, 1]$ such that for any $n\geq 2$, the copula $C$ of $(U_1, \ldots, U_n)$ is identical to the transformed copula $C^X$ by the stochastic distortion $X$, i.e., 
		\begin{equation}
			C(u_1, \ldots, u_n) = C^{X}(u_1, \ldots, u_n).
			\label{Dist2} 
		\end{equation}
			\end{enumerate}
\end{Proposition}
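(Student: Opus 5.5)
For (i)$\Rightarrow$(ii), I would take the conditional distribution function as the distortion. Given that the $U_k$ are i.i.d. conditional on $\mathcal{H}$, I would fix a regular version of the conditional law of $U_1$ given $\mathcal{H}$. Then I set $X(u) := \p(U_1 \leq u \mid \mathcal{H})$ for $u\in[0,1]$.

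Choosing a regular conditional distribution makes $X$ a genuine random distribution function. It is non-decreasing almost surely, and it satisfies $X(0)=\p(U_1\le 0\mid\mathcal H)=0$ and $X(1)=1$ a.s., because $U_1\in(0,1)$. By the tower property, $\E[X(u)]=\p(U_1\le u)=u$ for every $u\in[0,1]$.

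For any $n\ge2$, conditional i.i.d.-ness gives $\p(U_1\le u_1,\ldots,U_n\le u_n\mid\mathcal H)=\prod_{j=1}^n \p(U_j\le u_j\mid\mathcal H)=\prod_{j=1}^n X(u_j)$. Here I use that all the $U_j$ share the same conditional law as $U_1$. Taking expectations yields $C(u_1,\ldots,u_n)=C^X(u_1,\ldots,u_n)$, because the joint distribution function of uniform variables is their copula.

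For (ii)$\Rightarrow$(i), I would use the fact that all finite-dimensional laws of the sequence are determined by its copulas, the marginals being $\UU(0,1)$. I would build an auxiliary sequence with these laws and transfer the conclusion. First, I modify $X$ on a null set and take right-continuous versions, so that $X(\omega,\cdot)$ is a distribution function on $[0,1]$ for every $\omega$. Next, I construct $V_1,V_2,\ldots$ that are i.i.d. with distribution function $X(\omega,\cdot)$ given $X$. Concretely, I take i.i.d. $\UU(0,1)$ variables $W_k$ independent of $X$ and set $V_k := X^{-1}(W_k)$ using the generalized inverse. Then, with $\mathcal G=\sigma(X)$, $\p(V_1\le u_1,\ldots,V_n\le u_n)=\E[\prod_j X(u_j)]=C^X(u)=C(u)=\p(U_1\le u_1,\ldots,U_n\le u_n)$. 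So $\{V_k\}$ and $\{U_k\}$ have the same finite-dimensional distributions, and hence the same law on $[0,1]^{\mathbb N}$.

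The sequence $\{V_k\}$ is conditionally i.i.d., hence exchangeable, and therefore $\{U_k\}$ is exchangeable. By de Finetti's theorem, which applies to infinite exchangeable sequences on a Polish space, $\{U_k\}$ is i.i.d. conditional on its tail (or exchangeable) $\sigma$-field $\mathcal H\subseteq\mathcal F$. This gives (i) on the original probability space.

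The main obstacle is this last transfer step. The distortion $X$ in (ii) need not live on the same space as the $U_k$, or be adapted to them, so I cannot simply take $\mathcal H=\sigma(X)$. Passing through equality in law and then invoking de Finetti avoids this problem. A secondary technical point is the right-continuity and null-set modification of $X$ needed to define the regular conditional construction, which I would handle using monotonicity on rationals.
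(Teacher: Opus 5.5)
Your proof is correct. The (i)$\Rightarrow$(ii) direction is the paper's argument. For (ii)$\Rightarrow$(i) you share the paper's endpoint: exchangeability of $\{U_k\}$, then de Finetti on the original space. As you observe, this is what avoids having to take $\mathcal H=\sigma(X)$. However, you reach exchangeability by a detour.

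The paper obtains exchangeability in one line. Since $\prod_{j=1}^n X(u_j)$ is invariant under permutations of $(u_1,\ldots,u_n)$, we get $C(u_{\sigma(1)},\ldots,u_{\sigma(n)})=C(u_1,\ldots,u_n)$ for every $n$ and every permutation $\sigma$. Because the marginals are uniform, this is exactly exchangeability of $(U_1,\ldots,U_n)$.

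Your coupling $V_k=X^{-1}(W_k)$ instead requires three extra ingredients that the paper avoids: a path regularization of $X$, a generalized inverse, and an enlarged probability space. The regularization does go through, even though the definition only gives monotonicity pair by pair a.s. Set $\tilde X(u):=\inf\{X(q):q\in\mathbb{Q}\cap(u,1]\}$ for $u<1$. Then $\tilde X(u)\ge X(u)$ a.s., and by monotone convergence $\E[\tilde X(u)]=u=\E[X(u)]$, so $\tilde X$ is a modification of $X$.

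In return, your construction gives more. The sequence $\{U_k\}$ has the law of a sequence that is i.i.d.\ with distribution function $X$ given $\sigma(X)$. Comparing limits of empirical distribution functions then shows that the de Finetti conditional distribution function $\p(U_1\le\cdot\mid\mathcal H)$ is equal in law to the given $X$. The paper's proof does not spell out this link between the $\mathcal H$ of (i) and the $X$ of (ii).
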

{\it Proof.} See Section \ref{proof21} in the e-companion.\hfill $\Box$

\begin{Remark}
{\upshape According to the celebrated de Finetti's Theorem (see, e.g., \citealp{mai2020infinite}), (i) in Proposition \ref{ExchangeProp} is equivalent to the statement that ${\{U_k\}}_{k \in \mathbb{N}^+}$ are infinitely exchangeable.}
\end{Remark}

\subsection{Valuation of CDO Tranches under Conditionally I.I.D. Models}

Next, we discuss the valuation methodology of CDO tranches under a conditionally i.i.d. model, where $\tau_1, \ldots, \tau_n$ are assumed to be i.i.d. conditional on some common factors represented by a $\sigma$-field $\mathcal{H}$. As established in Proposition~\ref{ExchangeProp}, such a conditionally i.i.d. structure implies the existence of a stochastic distortion $X$ that satisfies $\E[X(u)]=u$ and is given explicitly by 
$X(u) = \p[F(\tau_1)\leq u | \mathcal{H}].$ 
Then we can deduce that 
\begin{align}\label{eq:qij_cond_iid}
q_{ij} = \p (N_{T_i} = j) = \E[\p(N_{T_i} = j|\mathcal{H})] = \E\left[\binom{n}{j}X(F(T_i))^j(1 - X(F(T_i)))^{n-j}\right] 
\end{align}
for $1\leq i \leq m$ and $0 \leq j \leq n$. Plugging \eqref{eq:qij_cond_iid} into \eqref{vc} yields immediately the following explicit expression for the expected NPVs $v^{[a_l, b_l]}(C^X; \mathbf{s})$ of tranches
in terms of the stochastic distortion $X$: 
\begin{equation}
	v^{[a_l, b_l]}(C^X; \mathbf{s}) = \sum_{i = 1}^m \sum_{j = 0}^n \lambda^{[a_l, b_l]}_i(\mathbf{s}) \beta_j^{[a_l ,b_l]}  \binom{n}{j} \E\left[ X(F(T_i))^j (1-X(F(T_i)))^{n - j} \right] - \gamma^{[a_l, b_l]}(\mathbf{s})
	\label{distortion_v}
\end{equation}
for $1 \leq l \leq M.$

\begin{Remark}[Immediate Applications to Parametric Copula Models]
{\upshape The above valuation me-\\-thodology of CDO tranches based on stochastic distortions applies directly to many widely used families of parametric copula models, including Gaussian, Archimedean, and multivariate Fr\'echet copulas, 
as they are conditionally i.i.d. and thus admit stochastic distortion representations. Consequently, the expected tranche NPVs $v^{[a_l, b_l]}$ for any of these models can be computed by substituting the model's specific stochastic distortion $X$ into the general valuation formula \eqref{distortion_v}. 
}
\end{Remark}

\subsection{Gamma Distortions and Gamma-Distorted Copulas}

This subsection introduces a novel family of stochastic distortions, termed \textit{gamma distortions}. 
We show that applying a discrete form of the \emph{transformed copula by gamma distortion} to model default times, each element of the resulting DPM $Q$ is then a linear combination of the underlying discrete state probabilities, thereby facilitating the related analysis. 

The construction of gamma distortions is based on the gamma process, a well-known stochastic process extensively studied 
in the literature (see, e.g.,~\citealp{applebaum2009levy} and~\citealp{schoutens2010levy}). 
Recall that a gamma process $\{\Gamma(t; \gamma, \lambda):t\geq 0\}$ with parameters $\gamma>0$ and $\lambda>0$ is a pure-jump L\'evy process, which is increasing a.s. and satisfies that for any fixed time $t>0$, $\Gamma(t; \gamma, \lambda)$ follows a gamma distribution with shape parameter $\gamma t$ and rate parameter $\lambda$.

Consider an $N \in\mathbb N^+$ and two independent gamma processes  $\{\xi_t: t\geq 0\}$ and $\{\eta_t: t\geq 0\}$ with both parameters $\gamma$ and $\lambda$ equal to $1$. 
Suppose that $\{\phi(u):0 \leq u \leq 1\}$ is a non-decreasing stochastic process that is independent of $\{\xi_t\}$ and $\{\eta_t\}$ and satisfies the following conditions:
\begin{equation}
	\phi(0) = 0 \quad \text{a.s.}, \quad \phi(1) = N \quad  \text{a.s.}, \quad\text{and}\quad 
	\E[\phi(u)] = Nu.
	\label{cond_phipsi}
\end{equation}
Define a stochastic process $\{X(u):u\in[0,1]\}$ as follows, which proves to be a stochastic distortion. 
\begin{equation}
	X(u) := \frac{\xi_{\phi(u)}}{\xi_{\phi(u)} + \eta_{N - \phi(u)}} \quad \text{for $0 \leq u \leq 1$}.
	\label{Xdef}
\end{equation}
\vspace{-0.9cm}
\begin{Proposition}\label{propx}
	The stochastic process $X$ defined in \eqref{Xdef} is a stochastic distortion and satisfies  $\E[X(u)] = u$ for $0 \leq u \leq 1$.
\end{Proposition}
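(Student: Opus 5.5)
Two things have to be checked: that $X$ is a stochastic distortion (monotone, with $X(0)=0$ and $X(1)=1$ a.s.), and that $\E[X(u)]=u$. The boundary values follow from \eqref{cond_phipsi}. Since $\phi(0)=0$ a.s. and $\xi_0=0$, we get $X(0)=0/(0+\eta_N)=0$, because $\eta_N>0$ a.s. Since $\phi(1)=N$ a.s. and $\eta_0=0$, we get $X(1)=\xi_N/\xi_N=1$. Between these endpoints the denominator is a.s. positive: at least one of $\phi(u)$ and $N-\phi(u)$ is at least $N/2>0$, and a gamma process is a.s. strictly positive at every positive time.

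For monotonicity, fix a realization of $\phi$, which is independent of $\xi$ and $\eta$, and write $f(s):=\xi_s/(\xi_s+\eta_{N-s})$ for $s\in[0,N]$. As $s$ increases, $\xi_s$ is non-decreasing and $\eta_{N-s}$ is non-increasing, because both gamma processes have increasing paths. The map $(a,b)\mapsto a/(a+b)$ is non-decreasing in $a$ and non-increasing in $b$ on $\{a+b>0\}$. Hence $f$ is a.s. non-decreasing on $[0,N]$ simultaneously for all $s$, since this is a pathwise statement about monotone paths. Composing with the non-decreasing $\phi$ shows that $X(u)=f(\phi(u))$ is non-decreasing in $u$ a.s.

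For the mean, the key step is to condition on $\phi$. By independence of $\phi$ from $(\xi,\eta)$, conditionally on $\phi(u)=s$ with $0<s<N$, the variables $\xi_s\sim\mathrm{Gamma}(s,1)$ and $\eta_{N-s}\sim\mathrm{Gamma}(N-s,1)$ are independent. The standard beta--gamma relation then gives $\xi_s/(\xi_s+\eta_{N-s})\sim\Beta(s,N-s)$, whose mean is $s/N$. The boundary cases $s=0$ and $s=N$ give $0$ and $1$, which also equal $s/N$. Therefore $\E[X(u)\mid\phi]=\phi(u)/N$, and by \eqref{cond_phipsi}, $\E[X(u)]=\E[\phi(u)]/N=u$.

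The main obstacle is rigor in the conditioning step. One has to justify evaluating the gamma processes at the random time $\phi(u)$, which is legitimate because $\phi$ is independent of $(\xi,\eta)$; this can be done by Fubini over the law of $\phi$. One also has to keep every monotonicity statement holding outside a single null set, so that it is uniform in $u$ and not merely valid for each fixed $u$.
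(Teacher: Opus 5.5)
Your proof is correct and follows the paper's own route. Monotonicity comes pathwise from the monotone gamma paths composed with the non-decreasing generator $\phi$. The boundary values come from $\xi_0=\eta_0=0$. Finally, $\E[X(u)]=u$ comes from the beta--gamma relation conditional on $\phi(u)$.

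Your extra remarks fill in details the paper leaves implicit: the a.s.\ positivity of the denominator, the degenerate cases $\phi(u)\in\{0,N\}$, and the Fubini justification for evaluating the gamma processes at the random time $\phi(u)$.
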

{\it Proof.} See Section \ref{proof21} in the e-companion. \hfill $\Box$

\begin{Definition}[Gamma Distortions, Gamma-Distorted Copulas, and Their Generators]
	{\upshape The stochastic distortion $X$ defined in \eqref{Xdef} is referred to as a \emph{gamma distortion}, the transformed copula by the gamma distortion $X$ is called a \emph{gamma-distorted copula}, and the involved non-decreasing stochastic process $\{\phi(u):0 \leq u \leq 1\}$ is called the \emph{generator of the gamma distortion $X$} or \emph{the generator of the gamma-distorted copula}.}
\end{Definition}
Below we introduce a particular type of gamma distortions and gamma-distorted copulas with a specific ``discrete" structure.

\begin{Definition}[Discrete Gamma Distortions and Discrete Gamma-Distorted Copulas]
	{\upshape 
		If the generator $\phi$  of a gamma distortion $X$ satisfies an additional ``discrete" condition that for each $i =1, \dots, m$, the random variable $\phi(F(T_i))$ takes only discrete values in $\{0, 1, \ldots, N\}$ with the probability mass function $p_{ik} = \p(\phi(F(T_i)) = k)$ for any $k=0,1,\ldots, N$, we refer to the associated gamma distortion $X$ and gamma-distorted copula as a \emph{discrete gamma distortion} and a \emph{discrete gamma-distorted copula}, respectively.}
\end{Definition}



Using \eqref {eq:qij_cond_iid}, we can obtain that for a discrete gamma-distorted copula, the elements of the associated DPM can be expressed as a linear combination of the probabilities $p_{ik}$ for $i =1, \dots, m$ and $k=0,1,\ldots, N$:
	\begin{equation} \label{qij_discrete}
	q_{ij} = 
	\sum_{k=0}^N h_{jk}^{(N)} p_{ik} \quad \text{for  $1\leq i \leq m$ and $0 \leq j \leq n$},
	\end{equation}
	where the coefficients $h_{jk}^{(N)}$ are defined as 
	\begin{equation}
		\begin{array}{ll}
			h_{j0}^{(N)} := \I_{\{j = 0\}} &\quad\text{for $0\leq j \leq n$},\\
			h_{jk}^{(N)} :=  \binom{n}{j} \frac{\B(k + j , N + n - k - j)}{\B(k, N - k)} &\quad \text{for $0 \leq j \leq n$ and $1 \leq k \leq N - 1$},\\
			h_{jN}^{(N)} := \I_{\{j = n\}} &\quad\text{for $0\leq j \leq n$}, 
		\end{array}
		\label{define_h}
	\end{equation}
	with $\B(\cdot,\cdot)$ denoting the Beta function.

\subsection{A Sufficient and Necessary Condition for Strong Compatibility}

While discrete gamma-distorted copulas represent only a specific subclass of conditionally i.i.d. copulas, this specific family exhibits a remarkable universality. Specifically, we find that for any of the entire set of strongly compatible market prices except those lying on the boundary of this set (this boundary has a zero Lebesgue measure), there always exists a discrete gamma-distorted copula that achieves a perfect fit. Furthermore, whether market prices of CDO tranches can be perfectly fit by such a copula---and thus whether they are strongly compatible---can be verified efficiently by solving LP problems. 

Theorem \ref{strongly_compatible} below provides a sufficient and necessary condition for strong compatibility. When the condition is satisfied, Theorem \ref{construct_copula} below offers a method to construct a corresponding concrete conditionally i.i.d. copula $C\in\mathcal C_1$ that achieves a perfect fit. 

\begin{Theorem}\label{strongly_compatible} {\bf{(A Sufficient and Necessary Condition for Strong Compatibility)}}
	Let $\mathcal{A}\subseteq \mathbb{R}^{2M}$ denote the set of strongly compatible market prices and $\mathcal{A}^\circ$ denote the interior of $\mathcal{A}$.
	Then for each vector $\mathbf{s} \in \mathcal{A}^\circ$, there exists an $N\in\mathbb N^+$ such that the following system of linear constraints for $\{\hat p_{ik}\}_{1 \leq i \leq m, 0 \leq k \leq N}$ has a feasible solution.
	\begin{equation}
		\begin{cases}
			\sum_{i=1}^m \sum_{k=0}^N g_{ikl}^{(N)}(\mathbf{s}) \hat p_{ik} - \gamma^{[a_l, b_l]}(\mathbf{s}) = 0 \, &\quad\text{\upshape{for} $1\leq l \leq M$},\\
			\sum_{k=0}^N \hat p_{ik} = 1 \, &\quad\text{\upshape{for} $1 \leq i\leq m$},\\
			\sum_{k=0}^N k\hat p_{ik} = N F(T_i) \, &\quad\text{\upshape{for} $1 \leq i \leq m$}, \\
			\sum_{j \geq k} \hat p_{ij} \leq \sum_{j \geq k} \hat p_{i + 1, j} \, &\quad\text{\upshape{for} $1 \leq i \leq m - 1$ and $0 \leq k \leq N$}, \\
			\hat p_{ik} \geq 0 \, &\quad\text{\upshape{for} $1 \leq i \leq m$ and $0\leq k\leq N$},\\
		\end{cases}
		\label{linearsystem2}
	\end{equation}
	where $g_{ikl}^{(N)}(\mathbf{s}) = \sum_{j=0}^n h_{jk}^{(N)} \lambda_i^{[a_l,b_l]}(\mathbf{s}) \beta_j^{[a_l,b_l]}$ for $1 \leq i \leq m$, $0 \leq k \leq N$, and $1 \leq l \leq M$, 
	 $h_{jk}^{(N)}$ is defined in \eqref{define_h}, and $\lambda_i^{[a_l,b_l]}(\mathbf{s})$ and  $\beta_j^{[a_l,b_l]}$ are given in Proposition \ref{pricingprop}.
	
	Conversely, if for the market prices $\mathbf{s}$, there exists an $N \in\mathbb N^+$ such that the system of linear constraints \eqref{linearsystem2} has a feasible solution, then the market prices $\mathbf{s}$ is strongly compatible.
\end{Theorem}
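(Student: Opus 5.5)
The converse direction is the easier one, so I would do it first. Given a feasible $\{\hat p_{ik}\}$ for \eqref{linearsystem2}, I would build a non-decreasing generator $\phi$ with $\phi(0)=0$, $\phi(1)=N$, and $\phi(F(T_i))$ having law $(\hat p_{ik})_k$.

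\begin{itemize}
\item The stochastic-dominance inequalities in \eqref{linearsystem2} allow a quantile coupling: take one uniform $V$ and set $\phi(F(T_i))$ equal to the generalized inverse of the $i$-th cdf evaluated at $V$. This is exactly the argument behind Proposition \ref{equiv}(iii)$\Rightarrow$(i).
\item Between grid points, and on $[0,F(T_1)]$ and $[F(T_m),1]$, interpolate linearly using the anchors $F(T_0)=0$, $\phi=0$ and $F(T_{m+1})=1$, $\phi=N$. The mean constraints $\sum_k k\hat p_{ik}=NF(T_i)$ then give $\E[\phi(u)]=Nu$ on the whole interval, since linear interpolation preserves means between points where the mean is linear.
\item Proposition \ref{propx} makes $X$ a stochastic distortion with $\E[X(u)]=u$. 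The transformed copula $C^X$ is conditionally i.i.d. through Proposition \ref{ExchangeProp}, or directly, since it is the conditional product of $X(u_j)$ given $(\xi,\eta,\phi)$.
\item By \eqref{qij_discrete} its DPM is $q_{ij}=\sum_k h^{(N)}_{jk}\hat p_{ik}$. Substituting into \eqref{vc} turns the first block of \eqref{linearsystem2} into \eqref{eq:vC0}. So $\mathbf{s}$ is strongly compatible.
\end{itemize}

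For the forward direction, take $\mathbf{s}\in\mathcal A^\circ$. The first step is to describe $\mathcal A$ through DPMs. By \eqref{eq:qij_cond_iid}, the DPMs of conditionally i.i.d. models are exactly
\[
q_{ij}=\binom{n}{j}\E\!\left[Y_i^j(1-Y_i)^{n-j}\right],
\]
where $(Y_1,\dots,Y_m)$ is a random non-decreasing vector in $[0,1]^m$ with $\E Y_i=F(T_i)$. Call this convex set $\mathcal Q_1$ (mixtures of Bernstein moment vectors). Then $\mathbf{s}\in\mathcal A$ if and only if the affine equations \eqref{vc}$=0$ have a solution $Q\in\mathcal Q_1$.

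The discrete gamma models give analogous sets $\mathcal Q^{(N)}$. Conditionally on $\phi(F(T_i))=k$, the variable $X(F(T_i))$ is $\mathrm{Beta}(k,N-k)$, or a point mass at $0$ or $1$ when $k=0$ or $k=N$. Its mean $k/N$ and variance $\frac{(k/N)(1-k/N)}{N+1}$ show that it concentrates at $k/N$ as $N\to\infty$.

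Next I would approximate a general $Y$ on the grid $\{k/N\}$. Choose monotone discretizations $\phi_i=\lfloor NY_i\rceil$ that keep the vector non-decreasing, and randomize the rounding so that $\E\phi_i=NF(T_i)$ holds exactly. The resulting points of $\mathcal Q^{(N)}$ converge to the target $Q$ as $N\to\infty$, because $h^{(N)}_{jk}$ is a Beta-binomial probability and converges uniformly to the binomial at $k/N$. Hence $\bigcup_N\mathcal Q^{(N)}$ is dense in $\mathcal Q_1$. The sets $\mathcal Q^{(N)}$ also increase along subsequences, for example $N\mid N'$, via the gamma-process subdivision property, or one can simply take convex hulls.

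The main obstacle is the final step: turning density into exact feasibility using the interior assumption. Let $\Phi(\mathbf{s},Q)$ denote the vector of the $M$ tranche NPVs.

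\begin{itemize}
\item Since $\mathbf{s}\in\mathcal A^\circ$, I can pick $2M+1$ price vectors $\mathbf{s}^{(r)}\in\mathcal A$ whose convex hull contains $\mathbf{s}$ in its interior, with witnesses $Q^{(r)}\in\mathcal Q_1$.
\item A cleaner route is to show that the image set $\{\Phi(\mathbf{s},Q):Q\in\mathcal Q_1\}\subset\mathbb R^M$ contains $0$ in its interior. I would argue this by perturbing each tranche's quote separately inside $\mathcal A$. Each $v^{[a_l,b_l]}$ depends strictly monotonically on its own quote and not on the other tranches' quotes. Hence there are witnesses $Q^{\pm}_l\in\mathcal Q_1$ with $\Phi_l(\mathbf{s},Q^{\pm}_l)=\pm\varepsilon$ while $\Phi_{l'}(\mathbf{s},Q^{\pm}_l)=0$ for $l'\neq l$.
\item The image is convex, so $0$ lies in the interior of the hull of these $2M$ points.
\item Approximate each $Q^{\pm}_l$ within $\delta$ by elements of a common $\mathcal Q^{(N)}$, which is convex as a set of linear images of the $\hat p$'s. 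Because $\Phi$ is affine, the approximating image points still surround $0$ when $\delta$ is small. A convex combination then gives a $\hat p$ with $\Phi=0$ exactly.
\end{itemize}

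That $\hat p$, together with the constraints it inherits, is feasible for \eqref{linearsystem2}. The care needed is in keeping both the mean constraint and the monotonicity constraint exact during the discretization.
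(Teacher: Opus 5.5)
Your proposal is correct and follows the paper's own architecture. The converse is exactly Proposition~\ref{prop:generator} plus Theorem~\ref{construct_copula}. The forward direction (Proposition~\ref{prop_interior} via Lemma~\ref{lemma:convergence}) likewise discretizes a witness distortion by mean-preserving randomized rounding to the grid $\{k/N\}$, shows the tranche NPVs converge, and then uses interiority of $\mathbf{s}$ plus convexity of the image of $\mathcal{P}^{(N)}$ to hit $\mathbf{0}_M$ exactly. If you implement the rounding with a single uniform shared across all $i$, it is monotone and has precisely the paper's tent-function marginals $\hat p^{(N)}_{ik}=\E[(1-|N X_{\mathbf{s}}(F(T_i))-k|)^+]$.

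The remaining differences are only in the sub-lemmas:
\begin{itemize}
\item You get convergence from a bound $|h^{(N)}_{jk}-\binom{n}{j}(k/N)^j(1-k/N)^{n-j}|=O(N^{-1/2})$, uniform in $k$, rather than the paper's Chebyshev/CDF argument.
\item You surround the origin with $2M$ single-tranche perturbations (a cross-polytope), whereas the paper uses $2^M$ orthant perturbations and the separating-hyperplane Lemma~\ref{lemma_convex}.
\item Perturbing the up-front coordinate, as the paper does, is allowed since $\mathcal{A}^\circ$ is an interior in $\mathbb{R}^{2M}$. It makes each shift exactly $\pm(b_l-a_l)\delta$ and independent of $Q$. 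A spread perturbation would give a $Q$-dependent shift, and its strictness would need a nondegenerate risky annuity.
\end{itemize}
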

{\it Proof.} See Section \ref{proof22} in the e-companion. \hfill $\Box$

\begin{Remark}
	{\upshape The sufficient and necessary condition provided in Theorem \ref{strongly_compatible} is indeed a sufficient and necessary condition up to the set $\mathcal{A}$'s boundary with zero Lebesgue measure, where the set $\mathcal{A}$ denotes the set of strongly compatible market prices. Since the set $\mathcal{A}$'s boundary has a zero Lebesgue measure, it suffices for practical applications. 
	}
\end{Remark}

As for the first part of Theorem \ref{strongly_compatible}, we can obtain a stronger result in the following Proposition \ref{prop_interior}, which will be used to prove that Algorithm  \ref{strong_compatibility_algorithm} will terminate in a finite number of iterations. 

\begin{Proposition}\label{prop_interior}
	For each vector $\mathbf{s} \in \mathcal{A}^\circ$, there exists an $N_0\in\mathbb N^+$ such that the system of linear constraints \eqref{linearsystem2} has a feasible solution for any $N \geq N_0$.
\end{Proposition}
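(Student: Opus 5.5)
The plan is to turn the interior assumption on $\mathbf{s}$ into the statement that $0$ is an interior point of a convex ``value set''. I would then show that the value sets generated by discrete gamma-distorted copulas with parameter $N$ approximate the full conditionally i.i.d. value set uniformly as $N\to\infty$. A simplex argument then gives feasibility of \eqref{linearsystem2} for all large $N$.

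\emph{Step 1 (value sets and the interior).} Let $\mathcal{K}$ be the set of DPMs $Q$ generated by conditionally i.i.d. models with marginal $F$, i.e. $q_{ij}$ of the form \eqref{eq:qij_cond_iid} for some stochastic distortion $X$ with $\E[X(u)]=u$. $\mathcal{K}$ is convex: a mixture of two distortions, chosen by an independent coin, is again such a distortion. For fixed $\mathbf{s}$, let $V(\mathbf{s})=\{(v_l(Q;\mathbf{s}))_{l=1}^M: Q\in\mathcal{K}\}\subseteq\mathbb{R}^M$, where $v_l$ is the affine expression \eqref{vc}; this set is convex.

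Let $\mathcal{K}_N$ be the set of $Q$ given by \eqref{qij_discrete} with $\{\hat p_{ik}\}$ satisfying the last four lines of \eqref{linearsystem2}. It is a linear image of a polytope, hence convex. Let $V_N(\mathbf{s})$ be its image under $v(\cdot;\mathbf{s})$. By \eqref{distortion_v} and the definition of $g^{(N)}_{ikl}$, feasibility of \eqref{linearsystem2} is exactly $0\in V_N(\mathbf{s})$.

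Now perturb only the up-front coordinates of $\mathbf{s}$: $\uf^{[a_l,b_l]}\mapsto \uf^{[a_l,b_l]}+\delta_l$. These enter only through $\gamma^{[a_l,b_l]}$ in \eqref{vspread_2}, so $v_l(Q;\mathbf{s}')=v_l(Q;\mathbf{s})-(b_l-a_l)\delta_l$. Since $\mathbf{s}\in\mathcal{A}^\circ$, every small $\boldsymbol\delta$ gives some $Q\in\mathcal{K}$ with $v(Q;\mathbf{s}')=0$. Hence every vector $((b_l-a_l)\delta_l)_l$ with $|\boldsymbol\delta|$ small lies in $V(\mathbf{s})$, and $0\in V(\mathbf{s})^\circ$. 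Consequently I can pick $M+1$ points $w^{(0)},\dots,w^{(M)}\in V(\mathbf{s})$ whose convex hull contains a ball $B(0,2\varepsilon)$.

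\emph{Step 2 (approximation by discrete gamma distortions).} For each $w^{(r)}=v(Q^{(r)};\mathbf{s})$ with $Q^{(r)}$ coming from a distortion $X$, I would build a generator $\phi_N$ satisfying \eqref{cond_phipsi} by randomized rounding with a common threshold. Let $Z\sim\UU(0,1)$ be independent of $X$ and set $\phi_N(u)=\lfloor N X(u)+Z\rfloor$ at the grid points $u=F(T_i)$, interpolating monotonically in between.

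This choice gives three properties needed by \eqref{linearsystem2}:
\begin{itemize}
\item $\phi_N(F(T_i))$ is integer-valued in $\{0,\dots,N\}$;
\item it is non-decreasing in $i$ pathwise, since the threshold $Z$ is shared, so the stochastic-dominance constraints hold;
\item $\E[\lfloor y+Z\rfloor]=y$, so $\E\,\phi_N(F(T_i))=N F(T_i)$ exactly.
\end{itemize}

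Next, $h^{(N)}_{jk}$ is the beta-binomial probability of $j$ successes under a $\Beta(k,N-k)$ mixing law, whose mean is $k/N$ and variance is $O(1/N)$. Write $k=\phi_N(F(T_i))$. Then $\E\big[h^{(N)}_{jk}\big]$ differs from $\E\big[\binom{n}{j}X^j(1-X)^{n-j}\big]$ by $O(N^{-1/2})$, uniformly in $i$ and $j$, because the binomial polynomial is Lipschitz on $[0,1]$ and $|k/N-X(F(T_i))|\le 1/N$. Hence there are $Q^{(r)}_N\in\mathcal{K}_N$ with $|v(Q^{(r)}_N;\mathbf{s})-w^{(r)}|<\varepsilon$ for all $N\ge N_0$ and all $r$.

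\emph{Step 3 (conclusion).} A simplex whose vertices are moved by less than $\varepsilon$ still contains $B(0,\varepsilon)$ when the original contains $B(0,2\varepsilon)$; this follows by a standard barycentric-coordinate continuity argument, shrinking $\varepsilon$ if needed. Convexity of $V_N(\mathbf{s})$ then gives $0\in V_N(\mathbf{s})$ for every $N\ge N_0$, which is the claim.

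The main obstacle is Step 2: the rounding must simultaneously keep the exact mean constraint $\sum_k k\hat p_{ik}=NF(T_i)$, keep the monotonicity across $i$, and control the error uniformly. The common-threshold rounding is what I would try first. A secondary point to check is that Steps 1 and 3 use only perturbations of the up-front coordinates, which is legitimate because $\mathcal{A}^\circ$ is taken in all of $\mathbb{R}^{2M}$.
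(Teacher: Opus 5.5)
Your proof is correct, and its skeleton is the paper's. You perturb only the up-front coordinates, which shifts each $v_l$ by $(b_l-a_l)\delta_l$ through $\gamma^{[a_l,b_l]}$. You then approximate finitely many of the resulting conditionally i.i.d.\ DPMs by discrete gamma-distorted ones, and conclude from convexity of the image of the polytope.

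Where you differ is in the two supporting lemmas.

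\textbf{Surrounding the origin.} The paper uses $2^M$ perturbations, one per open orthant $\mathcal{O}^I$, together with a separating-hyperplane lemma: a convex set meeting every open orthant contains $\mathbf{0}_M$. You instead use $M+1$ simplex vertices and robust containment. That step is cleanest via support functions. Writing $\|\cdot\|_*$ for the dual norm, for every direction $u$ you have $\max_r\langle w^{(r)}_N,u\rangle\ge\max_r\langle w^{(r)},u\rangle-\varepsilon\|u\|_*\ge\varepsilon\|u\|_*$. So the perturbed hull contains $B(0,\varepsilon)$ with no further shrinking.

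\textbf{Approximation.} For each fixed $i$, your rounding $\lfloor NX(F(T_i))+Z\rfloor$ has exactly the paper's hat-kernel law $\hat p^{(N)}_{ik}=\int(1-|Ny-k|)^+\,\diff G_{X(F(T_i))}(y)$. So the two discretizations coincide in distribution. Your shared threshold $Z$, however, makes the dominance constraints hold pathwise; the paper has to verify them through the non-decreasing functions $f^{(N)}_k$. Your Lipschitz-plus-Beta-variance bound also replaces the paper's Chebyshev/continuity-point proof of convergence in distribution. It gives an explicit $O(N^{-1/2})$ error rather than a bare limit.

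Two minor remarks:
\begin{itemize}
\item Convexity of $V(\mathbf{s})$ is never actually used. You only need $V(\mathbf{s})$ to contain a ball around $\mathbf{0}$, and the perturbation gives that directly.
\item The behaviour of $\phi_N$ between grid points is irrelevant for feasibility of \eqref{linearsystem2}, since only the $\hat p_{ik}$ enter. If you want $\phi_N$ to satisfy \eqref{cond_phipsi}, interpolate linearly, or simply set $\phi_N(u)=\lfloor NX(u)+Z\rfloor$ for all $u$.
\end{itemize}
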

{\it Proof.} See Section \ref{proof23} in the e-companion. \hfill $\Box$

Next, we shall develop a method to construct a concrete copula $C\in \mathcal{C}_1$ when strong compatibility is satisfied. Indeed, in such a case, we can construct a discrete gamma-distorted copula. 
\begin{Proposition}\label{prop:generator}
	Suppose that for some $N \in\mathbb N^+$, $\{\hat p_{ik}\}_{1 \leq i \leq m, 0 \leq k \leq N}$ is a feasible solution to the system of linear constraints \eqref{linearsystem2_delete_1}. 
	\begin{equation}
		\begin{cases}
			\sum_{k=0}^N \hat p_{ik} = 1 \, &\quad\text{\upshape{for} $1 \leq i\leq m$},\\
			\sum_{k=0}^N k\hat p_{ik} = N F(T_i) \, &\quad\text{\upshape{for} $1 \leq i \leq m$}, \\
			\sum_{j \geq k} \hat p_{ij} \leq \sum_{j \geq k} \hat p_{i + 1, j} \, &\quad\text{\upshape{for} $1 \leq i \leq m - 1$ and $0 \leq k \leq N$}, \\
			\hat p_{ik} \geq 0 \, &\quad\text{\upshape{for} $1 \leq i \leq m$ and $0\leq k\leq N$}.
		\end{cases}
		\label{linearsystem2_delete_1}
	\end{equation}
	Then the stochastic process $\{\phi(u):0\leq u\leq 1\}$ defined below is non-decreasing and  satisfies \eqref{cond_phipsi}.
	\begin{equation}
		\phi(u) := \begin{cases}
			\sum_{k=0}^N k\I_{\left\{\sum_{j<k}\hat{p}_{ij} < U \leq \sum_{j\leq k}\hat{p}_{ij}\right\}}  &\text{\upshape{for} $u = F(T_i)$ and $i=0,\cdots\!,m+1$},\\
			\frac{(F(T_{i+1}) - u)\phi(F(T_i))+(u - F(T_i))\phi(F(T_{i + 1}))}{F(T_{i + 1}) - F(T_i)}\!\! &\text{\upshape{for} $u\in\! ( F(T_i), F(T_{i + 1}))$ and $i=0,\cdots\!,m$},  
		\end{cases}
		\label{implied_phi}
	\end{equation}
	where $U\sim \UU(0, 1)$, $\hat{p}_{0k} := \I_{\{k = 0\}}$ for $k=0,1,\ldots, N$, and $\hat{p}_{m + 1, k} := \I_{\{k = N\}}$ for $k=0,1,\ldots, N$. 	
	Moreover, the stochastic process $\{\phi(u):0\leq u\leq 1\}$ satisfies that 
	\begin{equation}\label{eq:phi_distribution}
		\p(\phi(F(T_i)) = k) = \hat p_{ik} \quad \text{\upshape{for} $1 \leq i \leq m$ \upshape{and} $0 \leq k \leq N$}.
	\end{equation}
\end{Proposition}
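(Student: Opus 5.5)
The proof uses a single uniform variable $U$ to couple all the time points. The construction in \eqref{implied_phi} is the quantile (inverse-CDF) coupling of the discrete distributions $\hat p_{i\cdot}$, $i=0,\ldots,m+1$, all driven by the same $U\sim\UU(0,1)$. We first settle \eqref{eq:phi_distribution}, then monotonicity, then the three conditions in \eqref{cond_phipsi}.

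\textbf{Distribution at the grid points.} For $u=F(T_i)$ with $1\le i\le m$, the feasibility constraints in \eqref{linearsystem2_delete_1} give $\hat p_{ik}\ge 0$ and $\sum_k\hat p_{ik}=1$. Hence the intervals $(\sum_{j<k}\hat p_{ij},\sum_{j\le k}\hat p_{ij}]$ partition $(0,1]$, and the $k$-th interval has length $\hat p_{ik}$. Therefore $\phi(F(T_i))=k$ exactly when $U$ falls in the $k$-th interval, which has probability $\hat p_{ik}$. This is \eqref{eq:phi_distribution}.

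\textbf{Monotonicity.} Write $G_i(k)=\sum_{j\le k}\hat p_{ij}$ for the CDF at grid index $i$. The constraint $\sum_{j\ge k}\hat p_{ij}\le\sum_{j\ge k}\hat p_{i+1,j}$ for $1\le i\le m-1$ is equivalent to $G_i(k)\ge G_{i+1}(k)$ for all $k$, i.e.\ first-order stochastic dominance. The same inequality holds trivially at the two ends. At $i=0$ we have $G_0\equiv1$, since the mass sits at $0$. At $i=m+1$ we have $G_{m+1}(k)=\I_{\{k=N\}}$, which lies below any CDF.

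Now $\phi(F(T_i))(U)=\min\{k:U\le G_i(k)\}$ is the quantile function evaluated at $U$. If $U\le G_{i+1}(k)$, then $U\le G_i(k)$, so $\phi(F(T_i))\le\phi(F(T_{i+1}))$ pointwise in $U$. This gives monotonicity across grid points, with $\phi(F(T_0))=0$ and $\phi(F(T_{m+1}))=N$.

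Between consecutive grid points, $\phi$ is the linear interpolation of two ordered endpoint values and is therefore non-decreasing. The paper assumes $F(T_{m+1})=1$ and $F(T_0)=F(0)=0$, so the grid covers $[0,1]$ and $\phi$ is non-decreasing there. Strictly, one should also handle coinciding values $F(T_i)=F(T_{i+1})$. In that case the equality constraint below forces equal means, and stochastic dominance with equal means forces equal distributions. Under the quantile coupling the two values then coincide pathwise, so the definition is consistent.

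\textbf{Boundary values and the mean.} The boundary conditions are immediate: $\phi(0)=\phi(F(T_0))=0$ and $\phi(1)=\phi(F(T_{m+1}))=N$ almost surely. For the mean, at the grid points the constraint $\sum_k k\hat p_{ik}=NF(T_i)$ gives $\E[\phi(F(T_i))]=NF(T_i)$. This also holds at $i=0$ and $i=m+1$ by the definitions of $\hat p_{0k}$ and $\hat p_{m+1,k}$. On each interval $(F(T_i),F(T_{i+1}))$ the interpolation weights are deterministic. Linearity of expectation then gives $\E[\phi(u)]$ as the linear interpolation of $NF(T_i)$ and $NF(T_{i+1})$, which equals $Nu$.

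The only delicate step is the monotonicity, where one must convert the tail-sum inequalities into CDF dominance and combine this with the common-$U$ quantile coupling. The edge cases (endpoints $i=0,m+1$ and coinciding $F(T_i)$) need a brief separate remark; everything else is bookkeeping.
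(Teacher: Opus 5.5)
Your proof is correct and follows essentially the same route as the paper, which also drives all grid times with one $U$ through the partition $V_{ik}=\{\theta_{i,k-1}<U\le\theta_{ik}\}$ (your quantile coupling), gets $\phi(F(T_i))\le\phi(F(T_{i+1}))$ from $\theta_{i+1,k-1}\le\theta_{i,k-1}$, and concludes by linear interpolation for monotonicity and the mean. Your explicit checks go slightly beyond the paper's write-up: the endpoint steps $i=0$ and $i=m$ (which the LP monotonicity constraint does not cover), coinciding grid values, and the correct conclusion $\E[\phi(u)]=Nu$ (the paper's text says $u$).
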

{\it Proof.} See Section \ref{proof21} in the e-companion. \hfill $\Box$

Proposition \ref{prop:generator} implies that any feasible solution to the system of linear constraints \eqref{linearsystem2_delete_1} can induce a stochastic process $\{\phi(u):0\leq u\leq 1\}$ that is qualified to serve as the generator of a discrete gamma-distorted copula. 
Theorem \ref{construct_copula} below will show that if the feasible solution also satisfies the first set of constraints of \eqref{linearsystem2},  the discrete gamma-distorted copula constructed based on $\{\phi(u):0\leq u\leq 1\}$ can perfectly fit the market prices across all tranches.

\begin{Theorem}\label{construct_copula}{\bf {(A Method to Construct a Copula $C\in \mathcal{C}_1$ When Strong Compatibility is Satisfied)}}
	Suppose that for some $N \in\mathbb N^+$, $\{\hat p_{ik}\}_{1 \leq i \leq m, 0 \leq k \leq N}$ is a feasible solution to the system of linear constraints \eqref{linearsystem2}. Then the discrete gamma-distorted copula $C^X$ with the generator $\phi$ defined in \eqref{implied_phi} via this feasible solution 
	perfectly fits the market prices across all tranches $[a_1, b_1], \ldots, [a_M, b_M]$. 
\end{Theorem}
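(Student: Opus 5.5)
Our plan is to combine Proposition \ref{prop:generator}, Proposition \ref{propx}, the conditional-i.i.d.\ valuation formula \eqref{distortion_v} and the linear representation \eqref{qij_discrete} of the DPM. First, because $\{\hat p_{ik}\}$ is feasible for \eqref{linearsystem2}, it is in particular feasible for \eqref{linearsystem2_delete_1}. Proposition \ref{prop:generator} then says that $\phi$ defined in \eqref{implied_phi} is non-decreasing, satisfies \eqref{cond_phipsi}, and has $\p(\phi(F(T_i))=k)=\hat p_{ik}$. We take $\phi$ independent of two independent standard gamma processes $\xi,\eta$, which can be arranged on a product space since $\phi$ depends only on an auxiliary uniform $U$. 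By Proposition \ref{propx}, the process $X$ of \eqref{Xdef} is then a stochastic distortion with $\E[X(u)]=u$, so $C^X$ is a copula. It is conditionally i.i.d.\ given $\mathcal H=\sigma(\phi,\xi,\eta)$, since it is realized by drawing $U_j$ i.i.d.\ from the conditional distribution function $X(\cdot)$. Hence $C^X\in\mathcal C_1$, and it is a discrete gamma-distorted copula because $\phi(F(T_i))$ takes values in $\{0,\dots,N\}$.

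Second, we compute the DPM of $C^X$ and check that it equals $q_{ij}=\sum_k h^{(N)}_{jk}\hat p_{ik}$. By \eqref{eq:qij_cond_iid}, $q_{ij}=\E[\binom nj X(F(T_i))^j(1-X(F(T_i)))^{n-j}]$. We condition on $\phi(F(T_i))=k$. For $1\le k\le N-1$, independence gives $\xi_k\sim\mathrm{Gamma}(k,1)$ and $\eta_{N-k}\sim\mathrm{Gamma}(N-k,1)$ independent, so $X(F(T_i))\sim\Beta(k,N-k)$. Its moment is $\E[X^j(1-X)^{n-j}]=\B(k+j,N-k+n-j)/\B(k,N-k)$, which gives exactly $h^{(N)}_{jk}$. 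For $k=0$ we have $\xi_0=0$, so $X=0$ and only $j=0$ survives. For $k=N$ we have $\eta_0=0$, so $X=1$ and only $j=n$ survives. Averaging over $k$ with weights $\hat p_{ik}$ yields \eqref{qij_discrete}. This step needs some care: we must justify that conditioning on $\phi$ does not change the law of $(\xi,\eta)$, which follows from independence together with $\phi$ being a time change. We also need the marginal to be $F$, i.e.\ $\E X(F(T_i))=F(T_i)$; this follows from $\sum_k k\hat p_{ik}=NF(T_i)$, although Proposition \ref{propx} already covers it.

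Third, we substitute this $Q$ into \eqref{vc} of Proposition \ref{pricingprop}:
\[
v^{[a_l,b_l]}(C^X;\mathbf s)=\sum_{i,j}\lambda_i^{[a_l,b_l]}(\mathbf s)\beta_j^{[a_l,b_l]}\sum_k h^{(N)}_{jk}\hat p_{ik}-\gamma^{[a_l,b_l]}(\mathbf s).
\]
Interchanging the finite sums, the right-hand side equals $\sum_{i,k}g^{(N)}_{ikl}(\mathbf s)\hat p_{ik}-\gamma^{[a_l,b_l]}(\mathbf s)$, which is zero by the first block of \eqref{linearsystem2}. This holds for every $l$, so \eqref{eq:vC0} is satisfied and $C^X$ is a perfect fit.

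The main obstacle we expect is the second step, and specifically the rigorous identification of the conditional law of $X(F(T_i))$ given $\phi(F(T_i))=k$, including the degenerate boundary cases $k\in\{0,N\}$ where the ratio in \eqref{Xdef} must be interpreted with $\xi_0=\eta_0=0$. We would handle this by treating these cases separately and invoking the standard beta–gamma relation for $1\le k\le N-1$.
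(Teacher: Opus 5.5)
Your proposal is correct and follows essentially the same route as the paper: apply Proposition \ref{prop:generator} to the last four constraint blocks, build the discrete gamma distortion, identify the DPM as $q_{ij}=\sum_{k} h^{(N)}_{jk}\hat p_{ik}$, and substitute into \eqref{vc} so that the first block of \eqref{linearsystem2} gives $v^{[a_l,b_l]}(C^X;\mathbf{s})=0$ for every $l$. The only differences are that you re-derive \eqref{qij_discrete} via the beta--gamma relation, including the boundary cases $k\in\{0,N\}$, and you explicitly check that $C^X\in\mathcal{C}_1$; the paper simply cites \eqref{qij_discrete} for the first point.
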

{\it Proof.} See Section \ref{proof22} in the e-companion. \hfill $\Box$

\subsection{An Efficient Algorithm for Verifying Strong Compatibility and Constructing a Corresponding Concrete Conditionally i.i.d. Copula When Satisfied}

Unlike weak compatibility, verifying strong compatibility cannot be reduced to a single LP problem. 
This complexity arises because the linear feasibility problem \eqref{linearsystem2} depends on the parameter $N$, which is unknown \textit{a priori}. 
To overcome this, we propose an algorithm that iteratively verifies compatibility, proceeding sequentially across tranches. 
For each tranche, the algorithm solves a series of LP problems for an increasing sequence of $N$ values. 
Despite this iterative structure, the process is efficient because each step only involves solving computationally tractable LP problems.

\subsubsection{The Key Ideas of Our Iterative Algorithm}\label{sec_subsub:key_strong_comp_algo} 
Before presenting our iterative algorithm, we first articulate the key ideas. 
Let $\mathcal{A}^l \subseteq \mathbb{R}^{2l}$ denote the set of strongly compatible market prices for the first $l$ tranches (for $1\leq l\leq M$), and let $(\mathcal{A}^l)^\circ$ denote its interior. 
We proceed by assuming that $(s^{[a_1, b_1]}, \uf^{[a_1, b_1]}, \ldots, s^{[a_l, b_l]}, \uf^{[a_l, b_l]})\in(\mathcal{A}^l)^\circ$.
Consider a system of linear constraints, which is the same as \eqref{linearsystem2} except that the first set of linear constraints of \eqref{linearsystem2} are replaced by $\sum_{i=1}^m \sum_{k=0}^N g_{ikl'}^{(N)}(\mathbf{s}) \hat p_{ik} - \gamma^{[a_{l'}, b_{l'}]}(\mathbf{s}) = 0$ for $1\leq l' \leq l$. Proposition~\ref{prop_interior} ensures that this system of linear constraints is feasible for all sufficiently large $N$, and we denote its feasible region by $\mathcal{P}_N^l$.


A crucial step in our iterative algorithm is that given that the prices of the first $l$ tranches are strongly compatible, we need to determine the price range for the $(l+1)$-th tranche that preserves strong compatibility. 
If the market price of the $(l+1)$-th tranche falls within this range, we conclude that the prices of the first $l+1$ tranches are strongly compatible and proceed to the next tranche. 
Otherwise, the full set of $M$ market prices is not strongly compatible, and the algorithm terminates.

The determination of this strong compatibility price range depends on how the $(l+1)$-th tranche is quoted. 
If it is quoted via up-front payment with fixed spread $s^{[a_{l+1}, b_{l+1}]}$, we denote the range by $[\underline{\uf}^{[a_{l+1}, b_{l+1}]}, \overline{\uf}^{[a_{l+1}, b_{l+1}]}]$, where the upper bound $\overline{\uf}^{[a_{l+1}, b_{l+1}]}$ and the lower bound $\underline{\uf}^{[a_{l+1}, b_{l+1}]}$ are given by
\begin{equation}
	\begin{cases}
		\overline{\uf}^{[a_{l+1}, b_{l+1}]} = \sup\left\{x \in \mathbb{R}: (s^{[a_1, b_1]}, \uf^{[a_1,b_1]}, \ldots, s^{[a_l, b_l]}, \uf^{[a_l,b_l]}, s^{[a_{l+1}, b_{l+1}]}, x)\in \mathcal{A}^{l+1}\right\},\\
		\underline{\uf}^{[a_{l+1}, b_{l+1}]} = \inf\left\{x \in \mathbb{R}: (s^{[a_1, b_1]}, \uf^{[a_1,b_1]}, \ldots, s^{[a_l, b_l]}, \uf^{[a_l,b_l]}, s^{[a_{l+1}, b_{l+1}]}, x)\in \mathcal{A}^{l+1}\right\}.
	\end{cases}
	\label{eq:uf_ub_lb}
\end{equation}
If the $(l+1)$-th tranche is quoted via spread, the range $[\underline{s}^{[a_{l+1}, b_{l+1}]}, \overline{s}^{[a_{l+1}, b_{l+1}]}]$ is given analogously.

However, the strong compatibility price range given above is not directly computable. Notably, we can propose an indirect approach to computing it. Specifically, we can define a sequence of the so-called ``$N$-dependent price ranges'', which can be computed easily and furthermore can be proved to converge to the strong compatibility price range as $N$ tends to infinity.  
For any fixed $N \in \mathbb{N}^+$ for which the feasible set $\mathcal{P}_N^l$ is non-empty, we can compute the $N$-dependent price range, $[\underline{\uf}^{[a_{l+1}, b_{l+1}]}_N, \overline{\uf}^{[a_{l+1}, b_{l+1}]}_N]$ or $[\underline{s}^{[a_{l+1}, b_{l+1}]}_N, \overline{s}^{[a_{l+1}, b_{l+1}]}_N]$, by solving a series of LP or linear fractional programming (LFP) problems (LFP problems can also be readily transformed into LP problems). We compute the bounds of these $N$-dependent price ranges in the following way.
\begin{itemize}[leftmargin=11pt]
    \item If the $(l+1)$-th tranche is quoted via up-front payment (with fixed spread $s^{[a_{l+1}, b_{l+1}]}$), $\overline{\uf}^{[a_{l+1}, b_{l+1}]}_N$ and $\underline{\uf}^{[a_{l+1}, b_{l+1}]}_N$ can be computed by solving the following LP problems:
    \begin{equation}
    	\begin{cases}
    		\overline{\uf}^{[a_{l+1}, b_{l+1}]}_N =\ \sup\limits_{\{\hat p_{ik}\} \in \mathcal{P}_N^l} \frac{\sum_{i=1}^m \sum_{k=0}^N g_{i,k,l+1}^{(N)}(\mathbf{s})\hat{p}_{ik}}{{b_{l+1} - a_{l+1}}} - s^{[a_{l+1}, b_{l+1}]}\sum_{i=1}^m D(T_i)(T_i - T_{i-1}),  \\
    		 \underline{\uf}^{[a_{l+1}, b_{l+1}]}_N =\ \inf\limits_{\{\hat p_{ik}\} \in \mathcal{P}_N^l} \frac{\sum_{i=1}^m \sum_{k=0}^N g_{i,k,l+1}^{(N)}(\mathbf{s})\hat{p}_{ik}}{{b_{l+1} - a_{l+1}}} - s^{[a_{l+1}, b_{l+1}]}\sum_{i=1}^m D(T_i)(T_i - T_{i-1}) . 
    	\end{cases}
    	\label{eqs:upfront_LP}
    \end{equation}
    \item If the $(l+1)$-th tranche is quoted via spread (with $\uf^{[a_{l+1}, b_{l+1}]} = 0$),  $\overline{s}^{[a_{l+1},b_{l+1}]}_N$ and $\underline{s}^{[a_{l+1},b_{l+1}]}_N$ can be computed by solving the following LFP problems:
        \begin{equation}
    	\begin{cases}
    		\overline{s}^{[a_{l+1},b_{l+1}]}_N = \sup\limits_{\{\hat{p}_{ik}\}\in\mathcal{P}_{N}^l} \frac{\sum_{i=1}^{m}\sum_{k=0}^{N}\sum_{j=0}^{n} \left[D\left(\frac{T_{i}+T_{i-1}}{2}\right)-D\left(\frac{T_{i}+T_{i+1}}{2}\right)1_{\{i<m\}}\right]\beta_{j}^{[a_{l+1}, b_{l+1}]}h_{jk}^{(N)} \hat{p}_{ik}}{\sum_{i=1}^{m}D(T_{i})(T_{i}-T_{i-1})\left((b_{l+1}-a_{l+1})-\sum_{j=0}^{n}\sum_{k=0}^{N}\beta_{j}^{[a_{l+1}, b_{l+1}]}h_{jk}^{(N)}\hat{p}_{ik}\right)}, \\
    		\underline{s}^{[a_{l+1},b_{l+1}]}_N = \inf\limits_{\{\hat{p}_{ik}\}\in\mathcal{P}_{N}^l} \frac{\sum_{i=1}^{m}\sum_{k=0}^{N}\sum_{j=0}^{n} \left[D\left(\frac{T_{i}+T_{i-1}}{2}\right)-D\left(\frac{T_{i}+T_{i+1}}{2}\right)1_{\{i<m\}}\right]\beta_{j}^{[a_{l+1}, b_{l+1}]}h_{jk}^{(N)} \hat{p}_{ik}}{\sum_{i=1}^{m}D(T_{i})(T_{i}-T_{i-1})\left((b_{l+1}-a_{l+1})-\sum_{j=0}^{n}\sum_{k=0}^{N}\beta_{j}^{[a_{l+1}, b_{l+1}]}h_{jk}^{(N)}\hat{p}_{ik}\right)}.
    	\end{cases}
    	\label{eqs:upfront_LFP}
    \end{equation}
\end{itemize}


Proposition \ref{bound_convergence} below establishes that these $N$-dependent price ranges converge to the strong compatibility price range as $N$ tends to infinity.
\begin{Proposition}\label{bound_convergence}
    Suppose that $(s^{[a_1, b_1]}, \uf^{[a_1, b_1]}, \ldots, s^{[a_l, b_l]}, \uf^{[a_l, b_l]}) \in (\mathcal{A}^l)^\circ$ for $1\leq l\leq M-1$.
    Consider the $(l+1)$-th tranche. If it is quoted via up-front payment, then it holds that
        \[
            \lim_{N \to +\infty}\overline{\uf}^{[a_{l+1}, b_{l+1}]}_N = \overline{\uf}^{[a_{l+1}, b_{l+1}]} \quad
            \text{and}\quad \lim_{N \to +\infty}\underline{\uf}^{[a_{l+1}, b_{l+1}]}_N = \underline{\uf}^{[a_{l+1}, b_{l+1}]}.
        \]
    If it is quoted via spread, then it holds that
    \[
        \lim_{N \to +\infty}\overline{s}^{[a_{l+1}, b_{l+1}]}_N = \overline{s}^{[a_{l+1}, b_{l+1}]} \quad \text{and}\quad \lim_{N \to +\infty}\underline{s}^{[a_{l+1}, b_{l+1}]}_N = \underline{s}^{[a_{l+1}, b_{l+1}]}.
    \]
\end{Proposition}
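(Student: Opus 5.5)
I will treat the up-front case in detail; the spread case follows by the same argument applied to the linear-fractional objective. Write $\mathbf{s}^l$ for the first $l$ tranche quotes, which lie in $(\mathcal{A}^l)^\circ$. The proof has two halves: an easy inequality in one direction and an approximation argument for the reverse.

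\textbf{Easy direction.} Fix $N$ with $\mathcal{P}_N^l\neq\emptyset$. By Proposition \ref{prop:generator} and Theorem \ref{construct_copula}, every $\{\hat p_{ik}\}\in\mathcal{P}_N^l$ yields a discrete gamma-distorted copula $C^X\in\mathcal{C}_1$. This copula fits the first $l$ tranches exactly. By \eqref{qij_discrete} and Proposition \ref{pricingprop}, the $(l+1)$-th tranche then has expected default-leg value $\sum_{i,k} g^{(N)}_{i,k,l+1}\hat p_{ik}$ net of the running-spread premium. Hence the up-front value $x$ that makes $v^{[a_{l+1},b_{l+1}]}(C^X;\cdot)=0$ is exactly the objective in \eqref{eqs:upfront_LP}. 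It follows that $(\mathbf{s}^l,s^{[a_{l+1},b_{l+1}]},x)\in\mathcal{A}^{l+1}$, which gives
\[
\underline{\uf}\le\underline{\uf}_N\le\overline{\uf}_N\le\overline{\uf}.
\]
In the spread case one must also check that the denominator in \eqref{eqs:upfront_LFP} is positive. This holds because the expected outstanding notional is positive unless the tranche is wiped out almost surely, and that degenerate case is excluded under interior points.

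Next I show monotonicity (or at least eventual approximation) in $N$. Rather than proving $\mathcal{P}_N^l\subseteq\mathcal{P}_{N'}^l$, which need not hold, I will show that the attainable values approach the supremum.

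\textbf{Hard direction.} Fix $\epsilon>0$ and pick $x$ with $(\mathbf{s}^l,s^{[a_{l+1},b_{l+1}]},x)\in\mathcal{A}^{l+1}$ and $x>\overline{\uf}-\epsilon$. I need $N$ and $\{\hat p_{ik}\}\in\mathcal{P}^l_N$ whose objective exceeds $x-\epsilon$.

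The natural tool is Proposition \ref{prop_interior}, but it requires an interior point of $\mathcal{A}^{l+1}$, and $x$ may lie on the boundary. I will use convexity instead. Both $\mathcal{A}^{l}$ and $\mathcal{A}^{l+1}$ are convex along the up-front coordinate, because mixtures of conditionally i.i.d. copulas are again conditionally i.i.d. (mix the directing measures), and the NPVs are affine in the copula. Combining this with $\mathbf{s}^l\in(\mathcal{A}^l)^\circ$, I expect the fiber $\{x\}$ to be an interval with nonempty interior relative to the full space, up to the degenerate case where the interval is a single point. So I can choose $x'\in(\overline{\uf}-2\epsilon,\overline{\uf})$ such that $(\mathbf{s}^l,s^{[a_{l+1},b_{l+1}]},x')\in(\mathcal{A}^{l+1})^\circ$.

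To justify that this point is interior, I will use that every small perturbation of $\mathbf{s}^l$ stays in $\mathcal{A}^l$. Convexly combining models realizing nearby first-$l$ prices with models realizing the endpoints $x$ near $\underline{\uf}$ and near $\overline{\uf}$ then fills a neighborhood. This requires the convexity of $\mathcal{A}^{l+1}$ as a set of price vectors. That is not immediate, since $\lambda(\mathbf{s})$ and $\gamma(\mathbf{s})$ depend on $\mathbf{s}$. It does hold in the up-front coordinate, where the dependence is affine, and in the spread coordinate I will argue through the monotonicity in $\mathbf{s}$ noted in Section \ref{sec:bid-ask}.

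With $x'$ interior, Proposition \ref{prop_interior} applied to the first $l+1$ tranches gives $N_0$ such that for all $N\ge N_0$ some $\{\hat p_{ik}\}$ satisfies \eqref{linearsystem2} with the $(l+1)$-th constraint at price $x'$. That solution lies in $\mathcal{P}_N^l$ and has objective value $x'$, so $\overline{\uf}_N\ge x'>\overline{\uf}-2\epsilon$ for all $N\ge N_0$. Together with the easy direction this gives $\lim_{N\to\infty}\overline{\uf}_N=\overline{\uf}$. The lower bound is symmetric.

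\textbf{Main obstacle.} The step I expect to be hardest is producing interior points of $\mathcal{A}^{l+1}$ arbitrarily close to the sup and inf from the hypothesis $\mathbf{s}^l\in(\mathcal{A}^l)^\circ$ alone. This needs a convexity and openness argument about $\mathcal{A}^{l+1}$, with care taken in the spread-quoted case and in the degenerate case $\underline{\uf}=\overline{\uf}$, where the limit is trivial once $\mathcal{P}_N^l$ is nonempty.
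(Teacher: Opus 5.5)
Your proposal is correct and follows essentially the paper's route. The inclusion $\mathcal{S}_N\subseteq\mathcal{S}$ gives the easy inequalities. Convexity of the section $\mathcal{S}$ (by mixing conditionally i.i.d.\ models) together with Proposition~\ref{prop_interior} at interior points gives the reverse inequalities, and the singleton case is trivial once $\mathcal{P}_N^l\neq\emptyset$.

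You are more careful than the paper at the one delicate step. The paper passes from $(\mathcal{A}^{l+1})^\circ\subseteq\liminf_{N}\mathcal{A}_N^{l+1}$ to $\mathcal{S}^\circ\subseteq\liminf_{N}\mathcal{S}_N$ merely by ``taking one-dimensional sections.'' That step needs exactly your argument that interior points of the fiber are interior points of $\mathcal{A}^{l+1}$, and this is where $\mathbf{s}^l\in(\mathcal{A}^l)^\circ$ enters. Your spread-direction concern is also lighter than you fear: the proof of Proposition~\ref{prop_interior} only perturbs up-front coordinates, so a neighborhood in those coordinates suffices.
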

{\it Proof.} See Section \ref{bound_convergence_proof} in the e-companion. \hfill $\Box$

\begin{Remark}[Transforming LFP problems into LP problems]\label{CCtransform}
\upshape{
The computation of $\overline{s}^{[a_{l+1},b_{l+1}]}_N$ and $\underline{s}^{[a_{l+1},b_{l+1}]}_N$ requires solving LFP problems. 
These problems can be efficiently solved by transforming them into equivalent LP problems via the Charnes-Cooper transformation (\citealp{cooper1962programming}).
For an outline of the transformation procedure, see Section \ref{sec_app:LFP2LP} in the e-companion. 
}
\end{Remark}

\subsubsection{Our Iterative Algorithm}\label{sec_subsub:algo_strong_comp}
Based on the key ideas articulated in Section \ref{sec_subsub:key_strong_comp_algo}, we now present the algorithm for verifying strong compatibility and constructing the corresponding copula if strong compatibility is satisfied. 
Proposition~\ref{bound_convergence} ensures that this algorithm terminates in a finite number of steps.

\begin{breakablealgorithm}{{\bf Verifying Strong Compatibility and Constructing a Concrete Copula}}\label{strong_compatibility_algorithm}
	\item[\textbf{Step 1:}] \textbf{Initialization.} 
	Choose a strictly increasing sequence of integers $N_1 < N_2 < \cdots$ and a tolerance $\epsilon > 0$. 
	
	\item[\textbf{Step 2:}] \textbf{Iterative Verification Across Tranches.} 
    Execute an outer loop from $l = 0$ to $M-1$. For each $l$, perform the following procedure to determine whether the prices of the first $(l+1)$ tranches are strongly compatible.
	\begin{enumerate}[leftmargin=5em]
		\item[\textbf{Case 1:}] \textbf{The $(l+1)$-th tranche is quoted via spread.} Perform an inner loop for $y = 1, 2, \ldots$. For each $y$, set $N=N_y$ and execute the following checks.
        \begin{enumerate}[label=(\roman*)]
            \item Compute the $N$-dependent price range $[\underline{s}^{[a_{l+1},b_{l+1}]}_{N_y}, \overline{s}^{[a_{l+1},b_{l+1}]}_{N_y}]$ by solving the corresponding LFP problems \eqref{eqs:upfront_LFP}, which can be transformed into LP problems as mentioned in Remark \ref{CCtransform}.
            
            \item If the market price $s^{[a_{l+1},b_{l+1}]}$ is within the computed $N$-dependent price range $[\underline{s}^{[a_{l+1},b_{l+1}]}_{N_y}, \overline{s}^{[a_{l+1},b_{l+1}]}_{N_y}]$, then the market prices of the first $(l+1)$ tranches are strongly compatible. Exit the inner loop (over $y$) and move to the next tranche.
            
            \item If $y > 1$, check if the changes in bounds are within the pre-specified tolerance $\epsilon$, i.e., whether $|\underline{s}^{[a_{l+1},b_{l+1}]}_{N_y} - \underline{s}^{[a_{l+1},b_{l+1}]}_{N_{y-1}}| < \epsilon$ and $|\overline{s}^{[a_{l+1},b_{l+1}]}_{N_y} - \overline{s}^{[a_{l+1},b_{l+1}]}_{N_{y-1}}| < \epsilon$. 
            If these inequalities hold but the market price $s^{[a_{l+1},b_{l+1}]}$ remains outside the computed $N$-dependent price range, we claim that the full set of market prices is not strongly compatible, and the whole algorithm terminates.
        \end{enumerate}
        \item[\textbf{Case 2:}] \textbf{The $(l+1)$-th tranche is quoted via up-front payment.} The procedure is analogous to Case 1 except that all spread notations are replaced by their up-front payment counterparts and the LFP problems \eqref{eqs:upfront_LFP} in step (i) are replaced by the LP problems \eqref{eqs:upfront_LP}. 
	\end{enumerate} 
	
	\item[\textbf{Step 3:}] \textbf{Final Construction.} 
	If the loop in Step 2 reports that the market prices of all $M$ tranches are strongly compatible, then solve the linear feasibility problem \eqref{linearsystem2} using the final value of $N_y$ determined in the last iteration of Step 2.
	Use the resulting feasible solution to construct the generator $\phi(u)$ as defined in \eqref{implied_phi}, immediately leading to a gamma-distorted copula $C^X$ that achieves a perfect fit to the market prices across all tranches.
\end{breakablealgorithm}

\begin{Remark}[Comparison with the Implied Copula Method of \cite{hull2006valuing}]
	\upshape{\ \ \ It is instructive to compare our strong compatibility framework with the influential implied copula method of \cite{hull2006valuing}. 	Our framework follows in their pioneering footsteps and extends their analysis along three key dimensions. First, our strong compatibility framework is far more general, including all conditionally i.i.d. copula models, with the model considered in \cite{hull2006valuing} as a special case. Second,  
		our strong compatibility framework accounts for the general multi-period setting inherent
		in standard CDO contracts. Third, we establish the formal theoretical condition for the existence of a conditional i.i.d. perfect-fit model and further, if it exists, develop an efficient algorithm for constructing a concrete perfect-fit copula. For more details, please see the comprehensive summary in Section \ref{secsub:compare_hullwhite}.}
\end{Remark}

\subsection{Strong Bid-Ask Compatibility}
Similar to weak compatibility, the strong compatibility framework can also be extended to accommodate market bid-ask spreads. 
Parallel to Definition \ref{def:bid-ask-weak}, we define ``strong bid-ask compatibility" as follows.
\begin{Definition}
	[Strong Bid-Ask Compatibility]\label{def_bid_ask_compatibility}
	{\upshape Market bid-ask prices $(\mathbf{s}_{bid}, \mathbf{s}_{ask})$ of all $M$ CDO tranches are said to be \emph{strongly bid-ask compatible} if there exists a copula $C \in \mathcal{C}_1$ such that the expected NPVs from a long protection position satisfy \eqref{eq:between_equiv}.
	}
\end{Definition}

The verification of strong bid-ask compatibility through a linear feasibility problem is similar to Theorem \ref{strongly_compatible}, as detailed in the following proposition.  
In addition, similar to Algorithm \ref{strong_compatibility_algorithm}, we can also develop an algorithm to verify strong bid-ask compatibility and, when satisfied, constructing a corresponding concrete copula. The details are omitted due to the similarity.
\begin{Proposition}[A Sufficient and Necessary Condition for Strong Bid-Ask Compatibility]\label{strong_bid-ask-corollary}
	Let $\mathcal{A}_{\textrm{bid-ask}} \subseteq \mathbb{R}^{4M}$ denote the set of market bid-ask prices that satisfy strong bid-ask compatibility and $\mathcal{A}_{\textrm{bid-ask}}^\circ$ the interior of $\mathcal{A}_{\textrm{bid-ask}}$. 
	Then for each vector $(\mathbf{s}_{bid}, \mathbf{s}_{ask})\! \in\! \mathcal{A}_{\textrm{bid-ask}}^\circ$, there exists an $N\in\mathbb N^+$ such that the following system of linear constraints for $\{\hat p_{ik}\}_{1 \leq i \leq m, 0 \leq k \leq N}$ has a feasible solution.
	\begin{equation}
		\begin{cases}
			\sum_{i=1}^{m}\sum_{k=0}^{N}g_{ikl}^{(N)}(\mathbf{s}_{bid})\hat{p}_{ik}-\gamma^{[a_{l},b_{l}]}(\mathbf{s}_{bid}) \geq 0 \quad &\quad \text{\upshape{for} $1 \leq l \leq M$}, \\ 
			\sum_{i=1}^{m}\sum_{k=0}^{N}g_{ikl}^{(N)}(\mathbf{s}_{ask})\hat{p}_{ik}-\gamma^{[a_{l},b_l]}(\mathbf{s}_{ask}) \leq 0 \quad &\quad \text{\upshape{for} $1 \leq l \leq M$}, \\
			\sum_{k=0}^N \hat p_{ik} = 1 \, &\quad \text{\upshape{for} $1 \leq i\leq m$},\\
			\sum_{k=0}^N k\hat p_{ik} = N F(T_i) \, &\quad \text{\upshape{for} $1 \leq i \leq m$}, \\
			\sum_{j \geq k} \hat p_{ij} \leq \sum_{j \geq k} \hat p_{i + 1, j} \, &\quad \text{\upshape{for} $1 \leq i \leq m - 1$ \upshape{and} $1 \leq k \leq N$}, \\
			\hat p_{ik} \geq 0 \, &\quad \text{\upshape{for} $1 \leq i \leq m$ \upshape{and} $0\leq k\leq N$}.\\
		\end{cases}
		\label{linearsystem_bid_ask}
	\end{equation}
	
	Conversely, if there exists an $N \in\mathbb N^+$ such that the system of linear constraints \eqref{linearsystem_bid_ask} has a feasible solution, then $(\mathbf{s}_{bid}, \mathbf{s}_{ask})$ satisfies strong bid-ask compatibility.
\end{Proposition}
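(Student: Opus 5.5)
The plan is to mirror the two directions of Theorem~\ref{strongly_compatible}, handling the inequalities in \eqref{linearsystem_bid_ask} the way the equalities are handled in \eqref{linearsystem2}. The only new feature is a quantitative slack coming from the interior assumption.

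\emph{Converse direction (feasible solution $\Rightarrow$ strong bid-ask compatibility).} Take a feasible $\{\hat p_{ik}\}$ of \eqref{linearsystem_bid_ask} for some $N$. The monotonicity constraints there are stated only for $1\le k\le N$. For $k=0$, however, both sides equal $1$ by the normalisation constraints. Hence $\{\hat p_{ik}\}$ is feasible for \eqref{linearsystem2_delete_1}. Proposition~\ref{prop:generator} then yields a generator $\phi$ satisfying \eqref{cond_phipsi} and \eqref{eq:phi_distribution}. Proposition~\ref{propx} shows that the associated discrete gamma distortion $X$ is a valid stochastic distortion with $\E[X(u)]=u$, so $C^X\in\mathcal C_1$. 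By \eqref{qij_discrete}, the DPM of $C^X$ is $q_{ij}=\sum_k h^{(N)}_{jk}\hat p_{ik}$. Substituting into Proposition~\ref{pricingprop} gives
\[
v^{[a_l,b_l]}(C^X;\mathbf s)=\sum_{i,k} g^{(N)}_{ikl}(\mathbf s)\hat p_{ik}-\gamma^{[a_l,b_l]}(\mathbf s)
\]
for any quote vector $\mathbf s$. Applying this with $\mathbf s=\mathbf s_{bid}$ and with $\mathbf s=\mathbf s_{ask}$, the first two blocks of \eqref{linearsystem_bid_ask} become exactly \eqref{eq:between_equiv}. This is the argument of Theorem~\ref{construct_copula} with inequalities in place of equalities.

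\emph{Forward direction (interior $\Rightarrow$ feasibility for some $N$).} First I convert interiority into strict slack. Since $(\mathbf s_{bid},\mathbf s_{ask})\in\mathcal A_{\textrm{bid-ask}}^\circ$, there is $\delta>0$ such that the point obtained by raising every bid quote by $\delta$ and lowering every ask quote by $\delta$ still lies in $\mathcal A_{\textrm{bid-ask}}$. Let $C\in\mathcal C_1$ witness this, and write $C=C^{X^*}$ via Proposition~\ref{ExchangeProp}.

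By \eqref{vspread_1}--\eqref{vspread_2}, $v^{[a_l,b_l]}(C;\mathbf s)$ is affine in each quote. Its derivative is $-(b_l-a_l)<0$ with respect to the up-front payment. With respect to the spread it is minus the expected discounted outstanding notional, which is at least $(b_l-a_l)D(T_1)T_1\,\p(L_{T_1}\le a_l)$. To keep this away from zero I would use the marginal constraint: if needed, I would shrink $\delta$ using interiority in the up-front coordinates instead, or note that $F(T_1)<1$ makes $\p(N_{T_1}=0)$ positive for some nearby model. This yields a uniform $\varepsilon>0$ with
\[
v^{[a_l,b_l]}(C;\mathbf s_{bid})\ge\varepsilon \quad\text{and}\quad v^{[a_l,b_l]}(C;\mathbf s_{ask})\le-\varepsilon \quad\text{for all } l.
\]

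Second, I approximate $C^{X^*}$ by a discrete gamma-distorted copula, exactly as in the proof of Theorem~\ref{strongly_compatible}. Take $V\sim\UU(0,1)$ independent of $X^*$, and set
\[
\hat p^{(N)}_{ik}:=\p\bigl(\lfloor N X^*(F(T_i))+V\rfloor=k\bigr).
\]
This randomized rounding is non-decreasing in $i$ because it uses a common $V$. Its mean is $N F(T_i)$, and it automatically satisfies the normalisation, nonnegativity and monotonicity constraints. The Beta kernel in \eqref{define_h} is $h^{(N)}_{jk}=\E\bigl[\binom nj Y^j(1-Y)^{n-j}\bigr]$ with $Y\sim\Beta(k,N-k)$. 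This kernel concentrates at $k/N$ with variance $O(1/N)$, and $|k/N-X^*|\le 1/N$. Hence $\sum_k h^{(N)}_{jk}\hat p^{(N)}_{ik}\to q_{ij}$ uniformly in $i$ and $j$, since binomial polynomials are Lipschitz on $[0,1]$. The coefficients $\lambda$, $\beta$ and $\gamma$ are bounded, so the left-hand sides of the first two blocks of \eqref{linearsystem_bid_ask} converge to $v(C;\mathbf s_{bid})\ge\varepsilon$ and $v(C;\mathbf s_{ask})\le-\varepsilon$. For $N$ large enough the errors fall below $\varepsilon$, which gives feasibility.

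The main obstacle is the slack step: guaranteeing that the spread sensitivity of $v$ is bounded away from zero uniformly, so that a $\delta$-perturbation of the quotes really produces an $\varepsilon$ margin in the NPVs. I would first try to bound the expected outstanding tranche notional below using $\p(N_{T_1}=0)>0$. Failing that, I would argue directly that $v$ is jointly continuous in the quotes and the DPM. I would then choose the witnessing copula from a slightly shrunken interior box, whose compactness provides the margin.
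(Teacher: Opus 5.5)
Your proof is correct and follows the paper's overall structure; the differences are in how individual steps are carried out.

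\textbf{Converse direction.} This is exactly the paper's argument: Proposition~\ref{prop:generator}, then \eqref{qij_discrete} and \eqref{vc}. Your observation that the $k=0$ monotonicity constraint holds automatically by normalisation supplies a step the paper skips when it calls the last four blocks of \eqref{linearsystem_bid_ask} ``exactly'' \eqref{linearsystem2_delete_1}.

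\textbf{Forward direction: construction.} The paper likewise shrinks to a point with slack, takes a witnessing distortion $X_*$, and discretises it. Your randomised-rounding weights $\p(\lfloor NX^*(F(T_i))+V\rfloor=k)$ coincide with the paper's $\int(1-|Ny-k|)\,\diff G_{X_*(F(T_i))}(y)$: both put mass $1-r$ at $\lfloor Ny\rfloor$ and mass $r$ at $\lfloor Ny\rfloor+1$, where $r=Ny-\lfloor Ny\rfloor$. Your common-$V$ coupling does the job of the paper's monotone test functions $f^{(N)}_k$.

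\textbf{Forward direction: the limit step.} This is where you genuinely differ. The paper (Step 2 of Lemma~\ref{lemma:convergence}) proves convergence in distribution of the Beta mixture $X^{(N)}(F(T_i))$, using Chebyshev bounds on Beta CDFs at continuity points. It then passes to bounded continuous test functions, which gives a reusable weak-convergence lemma. Your coupling bound is $|\sum_k h^{(N)}_{jk}\hat p^{(N)}_{ik}-q_{ij}|\le L_n\bigl((4(N+1))^{-1/2}+N^{-1}\bigr)$, where $L_n$ is a Lipschitz bound for the polynomials $\binom{n}{j}y^j(1-y)^{n-j}$. It exploits the fact that only finitely many such polynomials enter, is shorter, and gives an explicit rate and hence an explicit admissible $N$.

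\textbf{The slack step is not an obstacle.} $v^{[a_l,b_l]}$ is jointly affine in $(s^{[a_l,b_l]},\uf^{[a_l,b_l]})$. Its up-front derivative is $-(b_l-a_l)$, and its spread derivative is $-\sum_i D(T_i)(T_i-T_{i-1})\E[b_l-a_l-L^{[a_l,b_l]}_{T_i}]\le 0$. So raising every bid coordinate by $\delta$ already gives $v^{[a_l,b_l]}(C;\mathbf s_{bid})\ge\delta(b_l-a_l)$, and symmetrically on the ask side. Hence $\varepsilon=\delta\min_l(b_l-a_l)$ works with no lower bound on outstanding notional. This is how the paper's one-line appeal to monotonicity in each coordinate of $\mathbf s$ goes through.

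Your spread-based alternative is also valid, and for every model in $\mathcal C_1$, not just a nearby one, when $F(T_1)<1$. By Jensen, $\p(N_{T_1}=0)=\E[(1-X^*(F(T_1)))^n]\ge(1-F(T_1))^n>0$. The compactness fallback can be dropped.
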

\noindent{\it Proof.} See Section \ref{bound_convergence_proof} in the e-companion. \hfill$\Box$

\section{Applications of Weak Compatibility and Strong Compatibility}\label{sec:appl}


\subsection{Applications of Weak Compatibility}

Weak compatibility not only ensures a consistent, arbitrage-free pricing framework for the observed CDO tranches but also has important practical applications. 
Specifically, satisfying weak compatibility enables construction of a model-independent strategy for hedging CDO tranches 
as well as the derivation of model-independent bounds for nonstandard credit derivatives tied to the same underlying portfolio, such as CDOs with nonstandard attachment and detachment points.

\subsubsection{A Model-Independent Hedging Strategy}\label{sec:hedging}

For managers of CDO portfolios, one of the most critical risk management activities is hedging against shifts in the underlying CDS index spread. 
This practice, known as index spread delta hedging, is a cornerstone of CDO risk management; see, e.g., \cite{chen2008sensitivity}, \cite{frey2010dynamic}, and \cite{masol2011comparing}. 
The primary objective is to compute the spread delta $\delta^{[a_l, b_l]}$ (for $1\leq l\leq M$), which represents the notional amount of the CDS index required to hedge a position in a CDO tranche. It is defined as the ratio of the change in the CDO tranche's value (i.e., $\Delta v^{[a_l, b_l]}$) to the change in the CDS index's value (denoted by $\Delta v^{CDS}$) for a small shift in the CDS index spread, i.e., $\delta^{[a_l, b_l]} := \frac{\Delta v^{[a_l, b_l]}}{\Delta v^{CDS}}$.

The central challenge in computing the spread delta is to determine how the underlying default distribution changes in response to a shift in the CDS index spread. Our approach frames this as an information updating problem. We begin with a market-implied DPM $\hat{Q}$ that is supposed to be consistent with the initial market prices. When the CDS index spread shifts, this prior distribution must be updated to a new posterior distribution $\tilde{Q}$. Intuitively, a small shift in the CDS index spread should not result in a substantial deviation of the new posterior distribution from the prior one. 
Accordingly, we quantify the distribution deviation using the widely adopted relative entropy (also known as Kullback-Leibler divergence; \citealp{kullback1951information}) and select the posterior distribution that minimizes its relative entropy with the prior. 

Algorithm \ref{alg:hedging} below provides specific procedures for calculating the spread delta, which in turn leads to a practically implementable, model-independent hedging strategy immediately. The empirical studies conducted in Section \ref{sec:hedging_performance} will illustrate its effective hedging performance. 

\begin{breakablealgorithm}{{\bf Calculating the Model-Independent Index Spread Delta}}\label{alg:hedging}
	
	\item[\textbf{Step 1:}] Apply a small shift (e.g., $1$bp) to the CDS index spread and derive the new market-implied marginal default distribution function $\tilde{F}(\cdot)$ using the standard market practice (see Section  \ref{sec_hazard_rate} in the e-companion for the details of this standard market practice; one may also see, e.g., \citealp{hull2016options}).
	
	\item[\textbf{Step 2:}] Solve the following convex optimization problem (it essentially minimizes the posterior DPM's relative entropy with the prior DPM) to find the posterior DPM $\tilde{Q} = \{\tilde q_{ij}\}_{1\leq i \leq m, 0 \leq j \leq n}$. Here the term $\epsilon$ is a small positive constant (e.g., $10^{-20}$) introduced to prevent division by zero. \\
	\textbf{Minimize:}
	\[
	\sum_{i = 1}^m \sum_{j=0}^n \tilde q_{ij} \log\left(\frac{\tilde q_{ij}}{\hat{q}_{ij} + \epsilon} \right)
	\]
	\textbf{Subject to:}
	\[
	\begin{cases}
		\sum_{j=0}^{n}\tilde{q}_{ij}=1, & \quad \text{for $1\leq i\leq m$}, \\
		\sum_{j=0}^{n}j\tilde{q}_{ij}=n\tilde{F}(T_{i}), & \quad \text{for $1\leq i\leq m$}, \\
		\sum_{k\ge j}\tilde{q}_{ik}\leq \sum_{k\ge j}\tilde{q}_{i+1,k} & \quad \text{for $1\leq i\leq m-1 $ and $0\leq j\leq n$}, \\
		\tilde{q}_{ij}\geq 0 & \quad \text{for $1\leq i\leq m$ and $0\leq j\leq n$.}
	\end{cases}
	\]

	\item[\textbf{Step 3:}] Calculate the change in the CDO tranche's value using the prior and posterior DPMs: $\Delta v^{[a_l, b_l]} = \LA^{[a_{l},b_{l}]}(\mathbf{s})^{\prime}(\tilde{Q} - \hat{Q})\BA^{[a_{l},b_{l}]}$.
	Then calculate the corresponding change in the CDS index's value $\Delta v^{CDS}$ using the standard method  (see Section \ref{sec_hazard_rate}  in the e-companion for the details of this standard method; one may also refer to, e.g., \citealp{cont2011dynamic}). 
	
	\item[\textbf{Step 4:}] Compute the spread delta as the ratio of the two value changes:
	$
	\delta^{[a_l, b_l]} = \frac{\Delta v^{[a_l, b_l]}}{\Delta v^{CDS}}.
	$
\end{breakablealgorithm}

\subsubsection{Model-Independent and Arbitrage-Free Pricing Bounds for CDOs with Nonstandard Attachment and Detachment Points}
A key application of the weak compatibility framework is to determine the arbitrage-free price range for a nonstandard CDO tranche $[\tilde{a},\tilde{b}]$ with the nonstandard attachment point $\tilde{a}$ and detachment point $\tilde{b}$. Furthermore, this pricing approach is model-independent because it relies only on the observed market prices of standard tranches, without depending on any specific parametric copula model.

The core idea is that the theoretical price of the aforementioned nonstandard tranche is determined by the underlying DPM. Since the market prices of standard tranches only constrain the DPM to a feasible set $\mathcal{Q}$ rather than determining it uniquely, the price of the nonstandard tranche is not a single value but a range. This range is bounded by the maximum and minimum possible values of the tranche's theoretical price, optimized over all valid DPMs within the feasible set $\mathcal{Q}$. This range is critically important for market participants: a market quote for the nonstandard tranche that falls outside this range implies an arbitrage opportunity. 

The following Algorithm \ref{algo:weak_nonstandard_price_bound} outlines the procedures for computing these price bounds, and some numerical results will be provided in Section \ref{sec_sub:numerical_weak_nonstandard_price_bound}.

\begin{breakablealgorithm}{{\bf Calculating Model-Independent and Arbitrage-Free Pricing Bounds}}\label{algo:weak_nonstandard_price_bound}
	\item[\textbf{Case 1:}] \textbf{The tranche is quoted via spread.} 
	In such a case, the maximum market-implied spread $\overline{s}^{[\tilde{a},\tilde{b}]}$ is given by the optimal value of the following LFP problem:
	\[ 
        \overline{s}^{[\tilde{a},\tilde{b}]} = \sup_{\hat{Q}\in\mathcal{Q}} \frac{\sum_{i=1}^{m}\sum_{j=0}^{n}[D(\frac{T_{i}+T_{i-1}}{2})-D(\frac{T_{i}+T_{i+1}}{2})1_{\{i<m\}}]\beta_{j}^{[\tilde{a},\tilde{b}]}\hat{q}_{ij}}{\sum_{i=1}^{m}D(T_{i})(T_{i}-T_{i-1})(\tilde{b}-\tilde{a}-\sum_{j=0}^{n}\beta_{j}^{[\tilde{a},\tilde{b}]}\hat{q}_{ij})},
    \]
    where $\beta_{j}^{[\tilde{a},\tilde{b}]} = \left(\frac{j(1-R)}{n} - \tilde{a}\right)^+ - \left(\frac{j(1-R)}{n} - \tilde{b}\right)^+$ for $0 \leq j \leq n.$
	The minimum market-implied spread $\underline{s}^{[\tilde{a},\tilde{b}]}$ can be identified by replacing $\sup$ with $\inf$.
	We can transform these LFP problems into the LP problems via the Charnes-Cooper transformation (see Remark \ref{CCtransform} and Section \ref{sec_app:LFP2LP} in the e-companion for the detail of this transformation), and then obtain the model-independent and arbitrage-free pricing bounds by solving the resulting LP problems. 
    \item[\textbf{Case 2:}] \textbf{The tranche is quoted via up-front payment.}
    In such a case, the maximum market-implied up-front payment $\overline{\uf}^{[\tilde{a},\tilde{b}]}$ is given by the optimal value of the following LP problem:
    \[
        \overline{\uf}^{[\tilde{a},\tilde{b}]} = \sup_{\hat{Q}\in\mathcal{Q}}\frac{\sum_{i=1}^m \sum_{j=0}^n \lambda_i^{[\tilde{a}, \tilde{b}]}\beta_j^{[\tilde{a}, \tilde{b}]}\hat{q}_{ij}}{\tilde{b} - \tilde{a}} - s^{[\tilde{a}, \tilde{b}]}\sum_{i=1}^n D(T_i)(T_i - T_{i - 1}),
    \]
    where
    \begin{equation}\label{lambda_nonstandard}
        \lambda_i^{[\tilde{a}, \tilde{b}]} = s^{[\tilde{a}, \tilde{b}]}D(T_i)(T_i - T_{i-1}) + D\left(\frac{T_i + T_{i-1}}{2}\right) - D\left(\frac{T_{i+1} + T_i}{2}\right)\I_{\{i < m\}} \, \text{for}\, i = 1, \ldots, m.
    \end{equation}
    The minimum market-implied up-front payment $\underline{\uf}^{[\tilde{a},\tilde{b}]}$ can be found by replacing $\sup$ with $\inf$.
    Solving these LP problems then yields the model-independent and arbitrage-free pricing bounds.
\end{breakablealgorithm}


\begin{Remark}
	{\bf (Model-Independent and Arbitrage-Free Pricing of Other Portfolio Credit Derivatives)}
	\upshape {The model independent and arbitrage-free pricing approach presented above using the weak compatibility framework is not limited to pricing nonstandard CDO tranches with nonstandard attachment and detachment points. Rather, it applies to pricing \textit{any} claim contingent solely on the default count process $N_{t}$ at times $T_{i}$. Indeed,
	the elements $\hat{q}_{ij}$ of any feasible DPM $\hat{Q}$ represent the risk-neutral probabilities $\mathbb{P}(N_{T_{i}}=j)$. Therefore, this allows for model-independent and arbitrage-free pricing of various other portfolio credit derivatives, such as the $k$-th to default CDSs (basket default swaps) and options on the number of defaults. 
	}
\end{Remark}

\subsection{Applications of Strong Compatibility}

The strong compatibility framework offers effective tools for a range of applications. 
The conditionally i.i.d. structure, which is central to strong compatibility, is the key enabler for these applications as it enables the separation of systematic risk (the common factors) from idiosyncratic risk (the individual defaults). 
We would like to elaborate on two examples: estimating loss or NPV distributions for risk management, and pricing CDOs with a nonstandard number of underlying names.

\subsubsection{Estimating NPV Distributions for Risk Management}
For the purpose of risk management, it is essential to analyze the loss or NPV distribution of a portfolio composed of positions in multiple CDO tranches. The calibrated gamma-distorted copula is particularly well-suited for this task. This is because it models the dependence among defaults through a common set of stochastic factors and conditional on these common factors, the defaults then possess an i.i.d. structure, thereby facilitating the evaluation of the NPV distributions of all CDO tranches through simulation. 

The following Algorithm \ref{simulate_algorithm} provides detailed procedures for simulating the NPV distributions of all CDO tranches. 
This algorithm can be directly applied to estimate the NPV distribution of a trading portfolio with any combination of long or short positions in various CDO tranches and the underlying CDS index. Some numerical results will be provided in Section \ref{sec_subsub: numerical_NPV}.

\begin{breakablealgorithm}{{\bf Simulating NPV Distributions of the CDO Tranches }}\label{simulate_algorithm}
    \item[\textbf{Step 1:}] \textbf{Simulate Random Drivers.} 
    Sample  $n+1$ independent uniform random variables $U, U_{1}, \ldots, U_{n} \sim \mathcal{U}(0,1)$. 
	Moreover, sample two independent paths of gamma processes $\{\xi_{t}\}$ and $\{\eta_{t}\}$.

    \item[\textbf{Step 2:}] \textbf{Construct a Stochastic Distortion Path.}
    Use a feasible solution $\{\hat{p}_{ik}\}$ of \eqref{linearsystem2} and the sampled random variable $U$ to construct a single path for the generator $\{{\phi}(u): 0 \leq u \leq 1\}$ as defined in \eqref{implied_phi}. Then, construct the corresponding path for the stochastic distortion process:
    $ X(F(t)) = \frac{\xi_{{\phi}(F(t))}}{\xi_{{\phi}(F(t))} + \eta_{N-{\phi}(F(t))}}. $

    \item[\textbf{Step 3:}] \textbf{Simulate Defaults and Tranche Values.}
    For each time point $t \in \{T_1, ..., T_m\}$, calculate the aggregate number of defaults:
    $ N_{t} = \sum_{j=1}^{n} 1_{\{U_{j} \le X(F(t))\}}. $
    Then, calculate the realized NPV for each tranche $[a_l, b_l]$ based on this simulated path of default counts:
    $$ V^{[a_{l},b_{l}]} = \sum_{i=1}^{m}\lambda_{i}^{[a_{l},b_{l}]}(\mathbf{s}) \left[ \left(\frac{(1-R)N_{T_{i}}}{n}-a_{l}\right)^{+} - \left(\frac{(1-R)N_{T_{i}}}{n}-b_{l}\right)^{+} \right] - \gamma^{[a_{l},b_{l}]}(\mathbf{s}).$$
%
\end{breakablealgorithm}

\subsubsection{Pricing CDOs with a Nonstandard Number of Underlying Names}

The strong compatibility framework provides a consistent method for valuing CDOs on a portfolio with a nonstandard number of underlying names $\tilde{n}$. 
Let $N$ be a sufficiently large integer such that \eqref{linearsystem2} has a feasible solution, and denote its feasible set by $\mathcal{P}_{N}$.
$\mathcal{P}_{N}$ represents the calibrated systematic risk profile, which is 
not inherently tied to the portfolio's original size $n$. 
Thus, it can be consistently applied to a CDO tranche $[\tilde{a}, \tilde{b}]$ with a nonstandard number of underlying names $\tilde{n}$, provided that the new portfolio shares the same systematic risk profile. 

Algorithm \ref{algo:nonstandard CDO 2} below provides detailed procedures regarding how to calculate price bounds for CDOs with a nonstandard number of underlying names. This algorithm can be readily extended to price other credit derivatives such as basket default swaps written on a nonstandard portfolio composed of CDO tranches with a nonstandard number of underlying names.

\begin{breakablealgorithm}{\bf (Calculating Price Bounds for CDOs with a Nonstandard Number of Underlying Names)}\label{algo:nonstandard CDO 2}
    \item[\textbf{Case 1:}] \textbf{The tranche is quoted via spread.} 
    In such a case, the maximum market-implied spread $\overline{s}^{[\tilde{a},\tilde{b}]}$ is given by the optimal value of the following LFP problem:
    \[ 
		\overline{s}^{[\tilde{a},\tilde{b}]} = \sup_{\{\hat{p}_{ik}\}\in\mathcal{P}_{N}} \frac{\sum_{i=1}^{m}\sum_{k=0}^{N}\sum_{j=0}^{\tilde{n}} [D(\frac{T_{i}+T_{i-1}}{2})-D(\frac{T_{i}+T_{i+1}}{2})1_{\{i<m\}}]\tilde{\beta}_{j}^{[\tilde{a},\tilde{b}]}\tilde{h}_{jk}^{(N)} \hat{p}_{ik}}{\sum_{i=1}^{m}D(T_{i})(T_{i}-T_{i-1})(\tilde{b}-\tilde{a}-\sum_{j=0}^{\tilde{n}}\sum_{k=0}^{N}\tilde{\beta}_{j}^{[\tilde{a},\tilde{b}]}\tilde{h}_{jk}^{(N)}\hat{p}_{ik})},
	\]
    where $		\tilde{\beta}_{j}^{[\tilde{a},\tilde{b}]} = \left(\frac{j(1-R)}{\tilde{n}}-\tilde{a}\right)^{+} - \left(\frac{j(1-R)}{\tilde{n}}-\tilde{b}\right)^{+}$ for $0 \leq j \leq \tilde n$, 
    			$\tilde h_{j0}^{(N)} := \I_{\{j = 0\}}$ for $0\leq j \leq \tilde n$,
    $\tilde h_{jk}^{(N)} :=  \binom{\tilde n}{j} \frac{\B(k + j , N + \tilde n - k - j)}{\B(k, N - k)}$ for $0 \leq j \leq \tilde n$ and $1 \leq k \leq N - 1$, and 
    $\tilde h_{jN}^{(N)} := \I_{\{j = \tilde n\}}$ for $0\leq j \leq \tilde n$.
    The minimum market-implied spread $\underline{s}^{[\tilde{a},\tilde{b}]}$ can be identified by replacing $\sup$ with $\inf$.
    We can transform these LFP problems into the LP problems via the Charnes-Cooper transformation (see Remark \ref{CCtransform} and Section \ref{sec_app:LFP2LP} in the e-companion for the detail of this transformation), and then obtain the price bounds by solving the resulting LP problems.
    \item[\textbf{Case 2:}] \textbf{The tranche is quoted via up-front payment.}
    In such a case, the maximum market-implied up-front payment $\overline{\uf}^{[\tilde{a},\tilde{b}]}$ is given by the optimal value of the following LP problem:
    \[ 
		\overline{\uf}^{[\tilde{a},\tilde{b}]} = \sup_{\{\hat{p}_{ik}\}\in\mathcal{P}_{N}} \frac{\sum_{i=1}^m \sum_{j=0}^n \sum_{k=0}^N \lambda_i^{[\tilde{a},\tilde{b}]} \tilde{\beta}_{j}^{[\tilde{a},\tilde{b}]} \tilde h_{jk}^{(N)}\hat{p}_{ik}}{\tilde{b} - \tilde{a}} - s^{[\tilde{a}, \tilde{b}]}\sum_{i=1}^m D(T_i)(T_i - T_{i - 1}),
	\]
	    where $\lambda_i^{[\tilde{a},\tilde{b}]}$ is defined as in \eqref{lambda_nonstandard}.
    The minimum market-implied up-front payment $\underline{\uf}^{[\tilde{a},\tilde{b}]}$ can be identified by replacing $\sup$ with $\inf$. 
    Solving these LP problems yields the price bounds immediately.
\end{breakablealgorithm}

\section{Empirical Studies}\label{sec:empirical}

In this section, we present empirical studies on the practical applications of both the weak and strong compatibility frameworks, illustrating how they can be utilized in risk management and the pricing of nonstandard credit derivatives. 
\vspace{-0.3cm}

\subsection{Data and Verification of Compatibility}
\subsubsection{Data}
We use the market data for the iTraxx Markit Europe S42 5-year tranches on March 28, 2025, which reference 125 names. This CDO comprises four tranches: $[0\%,3\%]$, $[3\%,6\%]$, $[6\%,12\%]$, and $[12\%,100\%]$. 
The first two tranches are quoted by up-front payments with a running spread of $100$ bps, while the last two are quoted by tranche spreads with an up-front payment of zero; see Table \ref{quote_example} for the corresponding market quotes of these tranches.
\vspace{-0.2cm}

\begin{table}[htbp]
	\caption{	Market Quotes for the iTraxx Markit Europe S42 5-Year Tranches on March 28, 2025 }
	\vspace{0.1cm}
		\begin{center}
			\begin{small}
	\begin{tabularx}{\textwidth}{l@{\hspace{36pt}}XXXX}
	\toprule
	Tranche & [0\%, 3\%] & [3\%, 6\%] & [6\%, 12\%] & [12\%, 100\%] \\
	\midrule
	Market Quote & 28.438\% & 4.531\% & 106.32 bps & 27.44 bps \\
	\bottomrule
\end{tabularx}\end{small}
		\end{center}
		\vspace{0.3cm}
\vspace{0.05cm}
	    \begin{minipage}{\textwidth}
		\small \linespread{1.2}\selectfont 
			\vspace{0.15cm}
		\emph{{\footnotesize{Notes.}}} 
		{\footnotesize{The first two tranches are quoted by up-front payments with a running spread of $100$ bps, while the last two are quoted by tranche spreads with an up-front payment of zero. 
	}}
		\end{minipage}
\label{quote_example}	
\end{table}
\vspace{-0.46cm}

The spread for the iTraxx Markit Europe S38 CDS index is $s^{\rm idx} = 58$ bps on March 28, 2025. The marginal distribution $F(\cdot)$ of the default times is calculated from this index spread using the standard market practice (see Section \ref{sec_hazard_rate}  in the e-companion or \citealp{hull2016options} for the details). 
In addition, following the standard assumption in the literature (e.g., \citealp{schloegl2005note}), we assume a constant recovery rate $R = 0.4$. The risk-free interest rate $r$ is derived from the Euro 
OIS curve based on the Euro Short-Term Rate (€STR); for March 28, 2025, this rate is $r = 2.417\%$.
All the LP problems in the empirical studies are solved using the CVXPY package in Python.

\subsubsection{Verification of Weak Compatibility and Strong Compatibility}

We first apply Theorems \ref{ImpliedMatrix} and \ref{strongly_compatible} to verify whether the market prices on March 28, 2025 satisfy weak and strong compatibility. 
The results show that for the market prices on this date, the linear feasibility problems  \eqref{linearsystem} and \eqref{linearsystem2} (with $N = 100$) both have feasible solutions. 
Therefore, we conclude that the market prices on this date satisfy both weak and strong compatibility.

Next, we conduct a further analysis to identify the boundaries of strong compatibility. 
Specifically, we fix the market prices of three of the four standard tranches and then solve for the strong compatibility price range of the remaining tranche.
As discussed in Section \ref{sec_subsub:key_strong_comp_algo}, this strong compatibility price range given by \eqref{eq:uf_ub_lb} 
is not directly computable. 
Instead, we can compute a sequence of $N$-dependent price ranges, which convergences to the strong compatibility price range as $N$ tends to infinity, as shown in Proposition \ref{bound_convergence}.  

Table \ref{tab:strong_compatibility_ranges} presents the $N$-dependent price ranges for each tranche corresponding to different values of the gamma distortion parameter $N$. Two observations emerge. First, the actual market price for each tranche, which is given in Table \ref{quote_example}, falls within the corresponding $N$-dependent price range for any $N$ used in Table \ref{tab:strong_compatibility_ranges}, reaffirming that the market prices on this day are strongly compatible. 
Second, the calculated $N$-dependent price ranges stabilize quickly as $N$ increases, indicating a fast convergence. Then the calculated $N$-dependent price range for a sufficiently large $N$ can be regarded approximately as the strong compatibility price range.  

%

The above analysis on the strong compatibility price ranges demonstrates how to determine whether a price of a specific tranche is strongly compatible with the rest of the market prices.  Indeed, any price within the computed strong compatibility price range is strongly compatible with the fixed market prices of the other three tranches. 

 
\vspace{-0.2cm}

\begin{table}[htbp]
	\caption{ $N$-Dependent Price Ranges for the Four Tranches for Different Values of the Gamma Distortion Parameter $N$ }
		\vspace{0.1cm}
		\begin{center}
		\begin{small}
	\begin{tabular*}{\textwidth}{@{\extracolsep{\fill}} l c c c c @{}}
		\toprule
		Tranche & [0\%, 3\%] & [3\%, 6\%] & [6\%, 12\%] & [12\%, 100\%] \\
		\midrule
		$N=50$ & [28.279\%, 28.936\%] & [4.372\%, 5.030\%] & [104.57, 111.78] bps & [27.33, 27.79] bps \\
		$N=75$ & [28.277\%, 28.939\%] & [4.371\%, 5.033\%] & [104.56, 111.83] bps & [27.33, 27.80] bps \\
		$N=100$ & [28.276\%, 28.941\%] & [4.369\%, 5.034\%] & [104.55, 111.87] bps & [27.32, 27.80] bps \\
		$N=125$ & [28.275\%, 28.941\%] & [4.368\%, 5.034\%] & [104.55, 111.89] bps & [27.32, 27.80] bps \\
		$N=150$ & [28.274\%, 28.942\%] & [4.368\%, 5.035\%] & [104.54, 111.91] bps & [27.32, 27.80] bps \\
		$N=175$ & [28.274\%, 28.942\%] & [4.367\%, 5.036\%] & [104.53, 111.92] bps & [27.32, 27.80] bps \\
		$N=200$ & [28.273\%, 28.942\%] & [4.367\%, 5.036\%] & [104.53, 111.92] bps & [27.32, 27.80] bps \\
		\bottomrule
	\end{tabular*}
	\end{small}
	\end{center}
	\vspace{0.3cm}
	 \vspace{0.05cm}
	    \begin{minipage}{\textwidth}
		\small \linespread{1.2}\selectfont 
			\vspace{0.15cm}
\emph{{\footnotesize{Notes.}}} 
{\footnotesize{The calculated $N$-dependent price ranges stabilize quickly as $N$ increases, indicating a fast convergence. Then the calculated $N$-dependent price range for a sufficiently large $N$ 
			can be regarded approximately as the strong compatibility price range. Any price within each strong compatibility price range is strongly compatible with the fixed market prices of the other three tranches.  }}
			\end{minipage}
	\label{tab:strong_compatibility_ranges}
	
\end{table}
\vspace{-0.7cm}

\subsection{Risk Management for CDO Portfolios}
\subsubsection{Constructing Effective Model-Independent Hedging Strategies for CDO Portfolios}
\label{sec:hedging_performance}

To construct a hedging strategy for CDO portfolios, it suffices to develop a hedging strategy for each CDO tranche. By applying Algorithm \ref{alg:hedging} developed in Section \ref{sec:hedging}, we can compute the index spread delta $\delta^{[a_{l},b_{l}]}$ for each tranche, thereby immediately yielding a model-independent hedging strategy. For comparison, we also compute the index spread deltas for all tranches using a standard Gaussian copula model. This standard Gaussian copula model serves as an appropriate benchmark because the corresponding index spread deltas are often very similar to those obtained from more complex dynamic models. For instance, \cite{cont2011dynamic} demonstrate that an affine jump-diffusion model and a top-down local intensity model produce index spread deltas remarkably close to those of the Gaussian copula model. 

Figure \ref{fig:delta_comparison} presents the index spread deltas—calculated for March 28, 2025 using our model-independent method and the Gaussian copula model—as well as the corresponding index spread deltas per unit width. The results highlight two advantages of our model-independent method, both of which align with financial intuition. 

First, the sum of the index spread deltas across the four tranches derived from our model-independent method is approximately one ($0.1970+0.1357+0.1554+0.5071 = 0.9952$). This is consistent with the theoretical expectation that a portfolio consisting of a long position in all tranches of the CDO, hedged with a short position in the underlying CDS index, should be nearly risk-free. 
In contrast, the sum of the index spread deltas across the four tranches calculated from the Gaussian copula model is $0.5992$, which deviates significantly from this intuitive principle.

Second, our model-independent method correctly captures the seniority-based risk structure. Financial intuition dictates that the index spread delta per unit width should decrease as tranche seniority increases. The right panel of Figure \ref{fig:delta_comparison} confirms that our method meets this expectation. 
In contrast, the Gaussian copula model fails this test. Its index spread delta per unit width is non-monotonic, counter-intuitively implying that the senior mezzanine tranche is more sensitive than the junior mezzanine tranche. This further underscores the capability of our model-independent approach to generate sensitivities consistent with the economic realities of CDO capital structures.

\begin{figure}[ht]
	\caption{Comparison of Index Spread Deltas (Left Panel) and Index Spread Deltas per Unit Width (Right Panel) for March 28, 2025 between Our Model-Independent Method (``This Study") and the Gaussian Copula Model (``Gaussian") 
}\vspace{0.1cm}
	\begin{subfigure}[b]{0.49\textwidth}
		\centering
		\includegraphics[width=\textwidth]{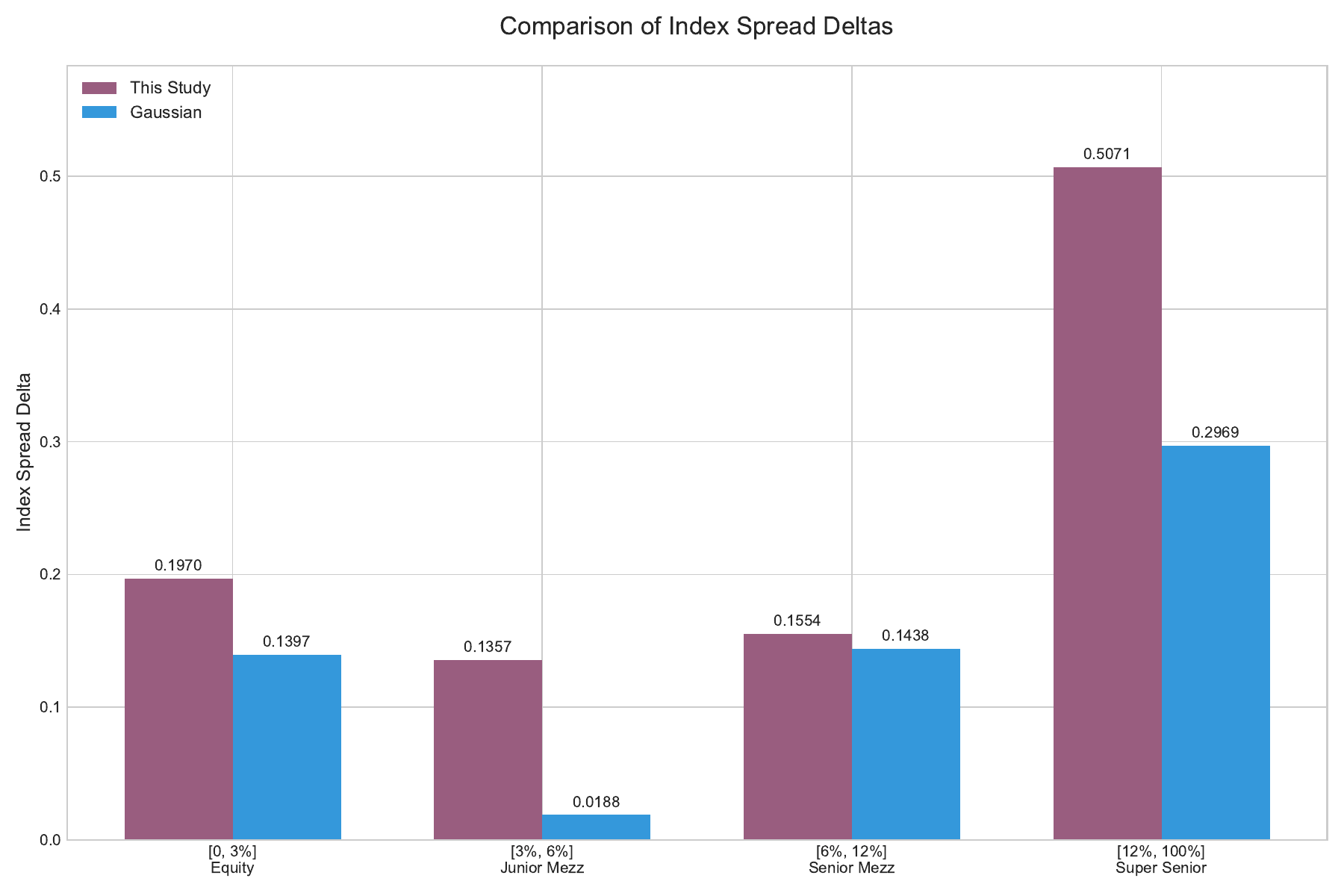}
		\label{fig:delta_abs}
	\end{subfigure}
	\hfill
	\begin{subfigure}[b]{0.49\textwidth}
		\centering
		\includegraphics[width=\textwidth]{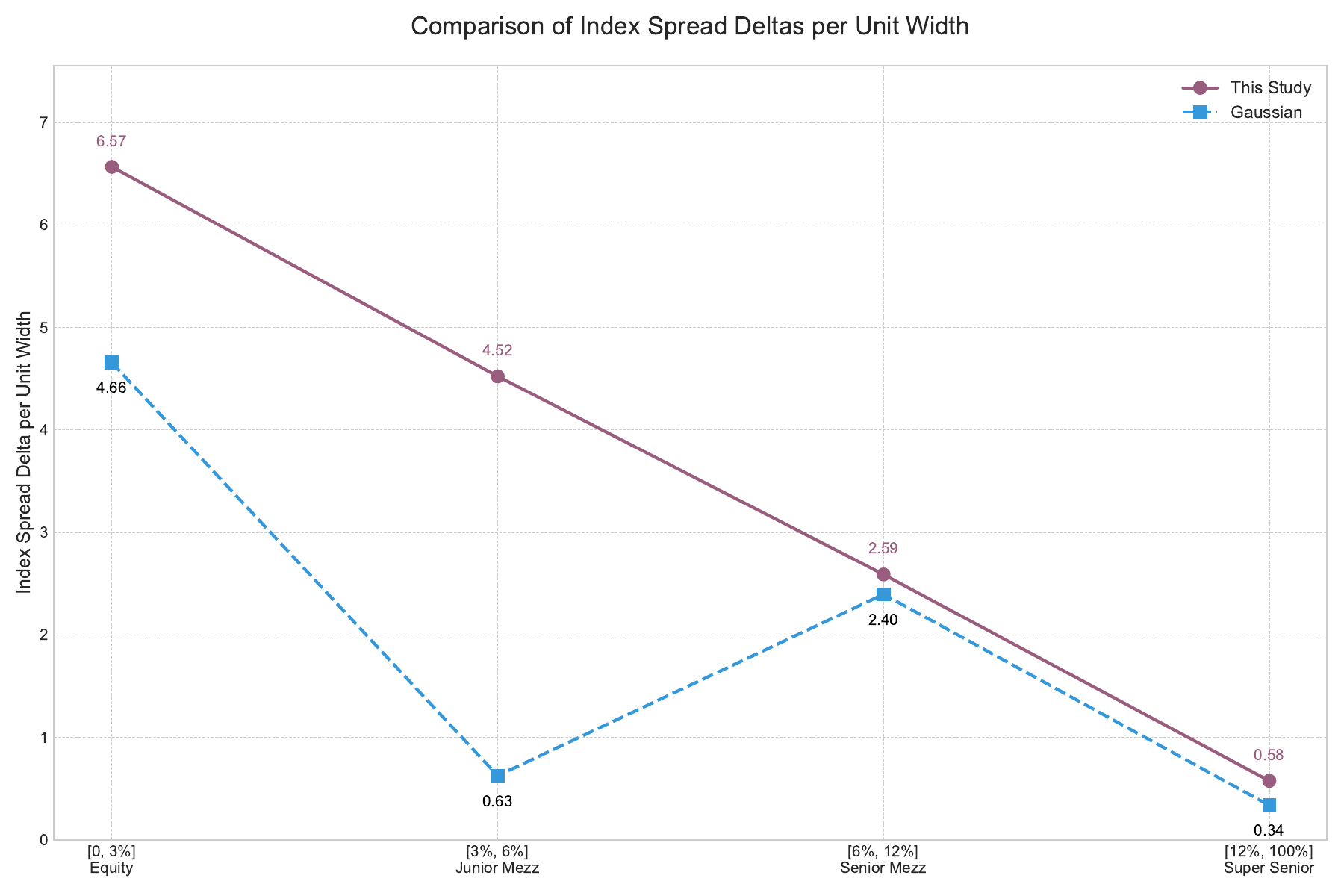}
		\label{fig:delta_per_width}
	\end{subfigure}
	    \begin{minipage}{\textwidth}
		\small \linespread{1.2}\selectfont 
		
\emph{{\footnotesize{Notes.}}} 
{\footnotesize{This figure presents the index spread deltas (left panel)—calculated for March 28, 2025 using our model-independent method and the Gaussian copula model—as well as the corresponding index spread deltas per unit width (right panel). The results highlight two advantages of our model-independent method, both of which align with financial intuition. First, the sum of the index spread deltas across the four tranches derived from our model-independent method is approximately one ($0.9952$), while the sum calculated from the Gaussian copula model is $0.5992$. 
Second, our model-independent method correctly captures the seniority-based risk structure; 
the index spread delta per unit width exhibits a monotonic decrease from the equity tranche to the super-senior tranche. In contrast, the Gaussian copula model fails this test.}} 
	\end{minipage}
	\label{fig:delta_comparison}
\end{figure}

\begin{figure}[ht]
\begin{center}
	\caption{Hedging Performance of Our Model-Independent Hedging Strategy (Daily P\&L of Four Tranches)}\vspace{0.1cm}  
	\includegraphics[width=\textwidth]{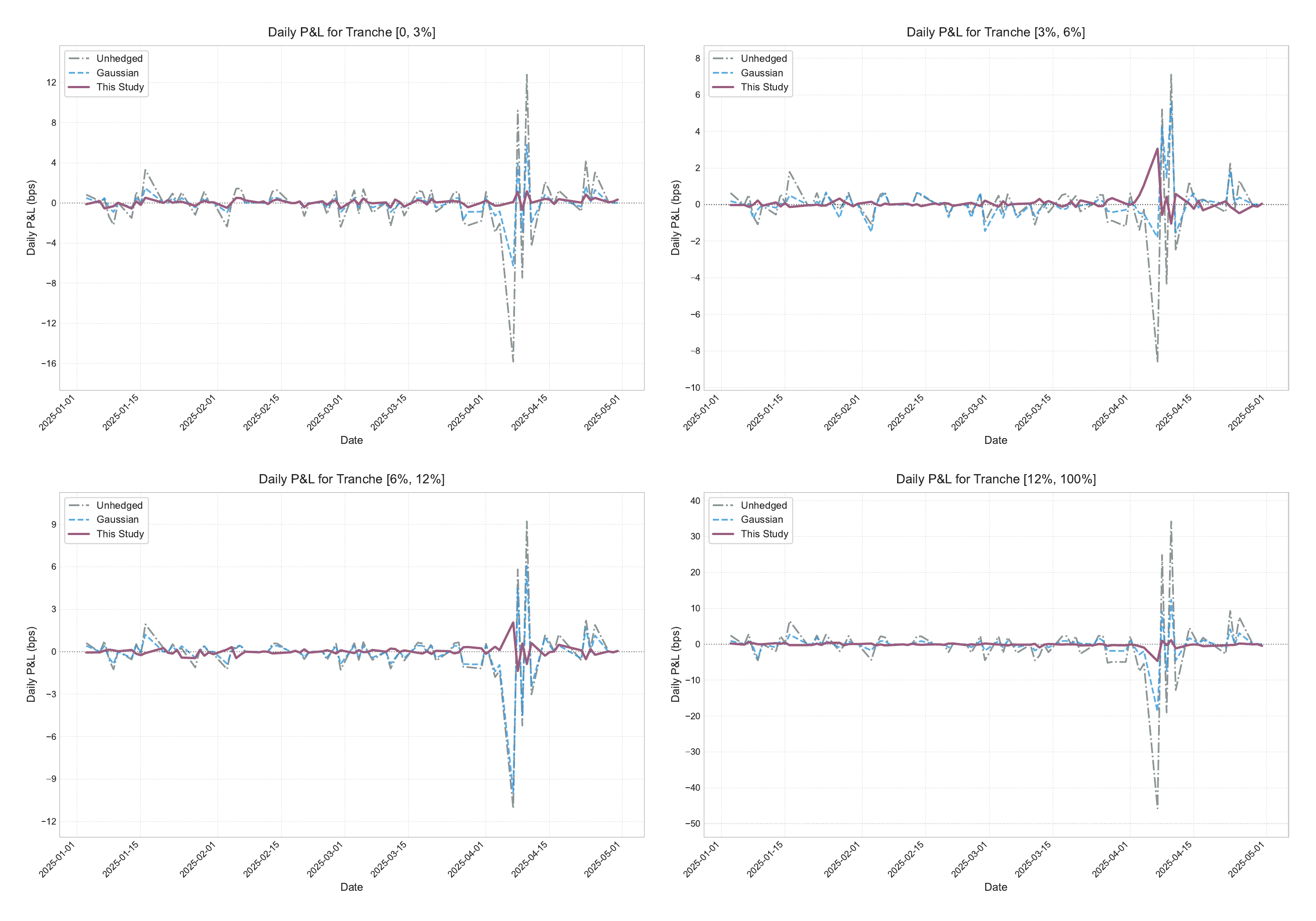}
		\label{fig:pnl_plot}
\end{center}
    \begin{minipage}{\textwidth}
	\small \linespread{1.2}\selectfont 
	
	\emph{{\footnotesize{Notes.}}} 
	{\footnotesize{This figure presents the daily P\&L of (i) the delta-hedged portfolios resulting from our model-independent hedging strategy (the solid curves), (ii) the unhedged portfolios (the dash-dotted curves), and (iii) the delta-hedged portfolios resulting from the Gaussian copula model (the dashed curves). It clearly illustrates the significant reduction in P\&L volatility achieved by our model-independent hedging strategy, when compared with the unhedged counterparts. 
	Furthermore, it can be seen that our model-independent hedging strategy also yields a substantially smaller P\&L volatility than the Gaussian copula model.}}
\end{minipage}
\end{figure}


We further conduct a backtest to empirically validate the effectiveness of our model-independent hedging strategy. Following \cite{cont2011dynamic}, we assess hedging performance by analyzing the daily profit and loss (P\&L) of the resulting delta-hedged portfolios. Specifically, the backtest is conducted over the period from January 1, 2025 to April 30, 2025; for each day in this window, we employ our model-independent hedging strategy to construct a delta-hedged portfolio and calculate its P\&L. Figure~\ref{fig:pnl_plot} presents the daily P\&L of (i) the delta-hedged portfolios resulting from our model-independent hedging strategy, (ii) the unhedged portfolios, and (iii) the  delta-hedged portfolios resulting from the Gaussian copula model. It clearly illustrates the significant reduction in P\&L volatility achieved by our model-independent hedging strategy, when compared with the unhedged counterparts. This is particularly evident for the equity and super-senior tranches, where our hedge remains robust in controlling P\&L fluctuations, even during the heightened market volatility observed in early April. Furthermore, it can be seen that our model-independent hedging strategy also yields a substantially smaller P\&L volatility than the Gaussian copula model. 


In addition, in the spirit of \cite{cont2011dynamic}, we quantitatively assess the P\&L volatility of the unhedged and two aforementioned delta-hedged portfolios by calculating the standard deviations of their respective daily P\&L series. The left panel of Figure~\ref{fig:std_analysis} presents these standard deviations, and we can see that our model-independent hedging strategy delivers superior hedging performance across all four tranches. 
Furthermore, as shown in the right panel of Figure~\ref{fig:std_analysis}, our hedging strategy achieves significant reductions in standard deviation, with reductions ranging from $74.2\%$ to $92.3\%$, relative to the unhedged portfolios. Across all tranches, our hedging strategy significantly outperforms the Gaussian copula model, yielding a substantially lower P\&L volatility.
This consistent reduction in standard deviation underscores the practical utility and robustness of our model-independent hedging strategy for CDO portfolios.

%

\begin{figure}[ht]
	\caption{Hedging Performance of Our Model-Independent Hedging Strategy (Comparison of P\&L Standard  Deviations)}\vspace{0.1cm} 
	\includegraphics[width=\textwidth,height=0.36\textwidth]{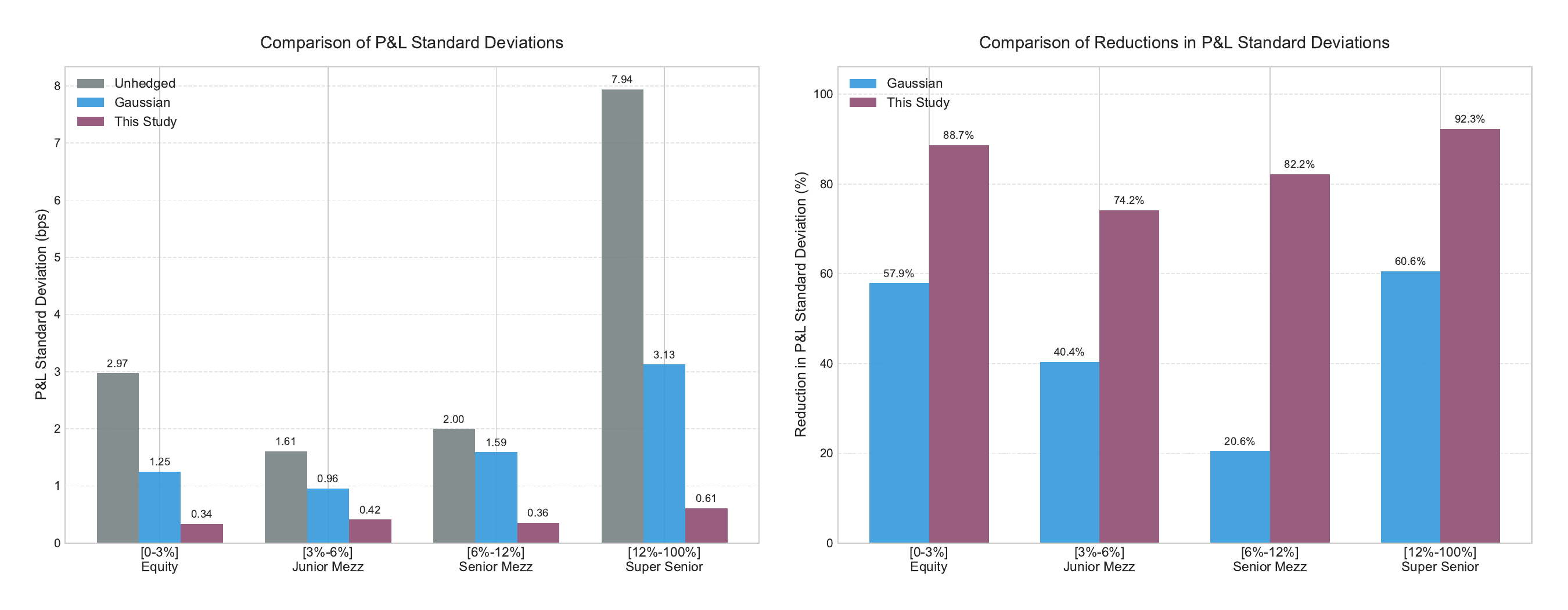}
			
			    \begin{minipage}{\textwidth}
				\small \linespread{1.2}\selectfont 
				
				\emph{{\footnotesize{Notes.}}} 
				{\footnotesize{The left panel presents the standard deviations of the daily P\&L series of (i) the delta-hedged portfolios resulting from our model-independent hedging strategy, (ii) the unhedged portfolios, and (iii) the delta-hedged portfolios resulting from the Gaussian copula model. We can see that our model-independent hedging strategy delivers superior hedging performance across all four tranches. The right panel shows that our hedging strategy achieves significant reductions in standard deviation, with reductions ranging from $74.2\%$ to $92.3\%$, relative to the unhedged portfolios. Furthermore, across all tranches, our hedging strategy significantly outperforms the Gaussian copula model, yielding a substantially lower P\&L volatility.}}
			\end{minipage}
	\label{fig:std_analysis}
\end{figure}

\subsubsection{NPV Distributions for CDO Tranches and Portfolios}\label{sec_subsub: numerical_NPV}

Analyzing the NPV distributions for CDO portfolios composed of multiple CDO tranches is critical for CDO portfolio risk management. However, traditional parametric models, which are often calibrated only to individual tranches, typically fail to support this analysis, as they cannot generate a single, consistent joint default distribution for the entire capital structure. In contrast, our strong compatibility framework enables precisely this portfolio-level analysis: the unified, consistent model underpinning it can perfectly fit all market prices, thereby enabling reliable estimation of NPV distributions for CDO portfolios.

We first conduct an empirical analysis on the NPV distributions of all CDO tranches, the building blocks of those of any CDO portfolio. 
By applying Algorithm \ref{simulate_algorithm}, we can simulate the NPV distributions of all CDO tranches (i.e., the distributions of $V^{[a_l, b_l]}$ for $l=1,\ldots,M$) with the simulated sample size of $1,000,000$.
The left panel of Figure \ref{fig:combined_dist} illustrates the resulting joint NPV distribution of the equity and senior mezzanine tranches. The ``reverse L'' shape shown in the figure clearly reflects the defining waterfall payment structure of CDO products: losses on a senior tranche begin to accrue only after the subordinated tranche has been completely wiped out. Similar waterfall payment structures are also observed in the joint NPV distributions of other tranche pairs. Furthermore,  the sample mean of the NPV $V^{[a_l,b_l]}$ for each tranche is statistically indistinguishable from zero, supporting the validity of our method. 
\begin{figure}[h!]
	\centering
	\caption{Simulated NPV Distributions for CDO Tranches and Portfolios}
	\begin{subfigure}[b]{0.48\textwidth}
		\centering
		\includegraphics[width=\textwidth,height=0.7\textwidth]{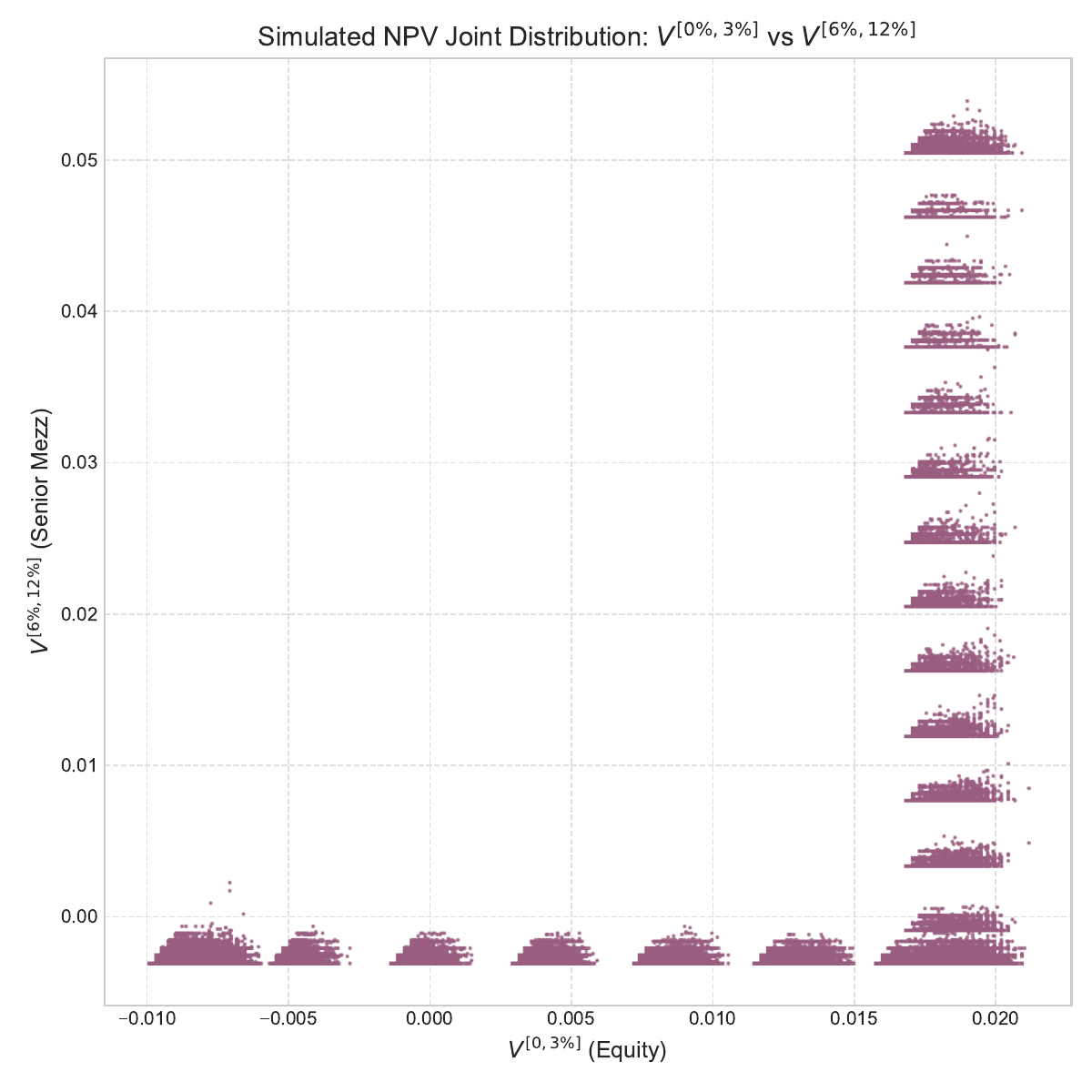} 
	\end{subfigure}
	\hfill
	\begin{subfigure}[b]{0.48\textwidth}
		\centering
		\includegraphics[width=\textwidth,height=0.7\textwidth]{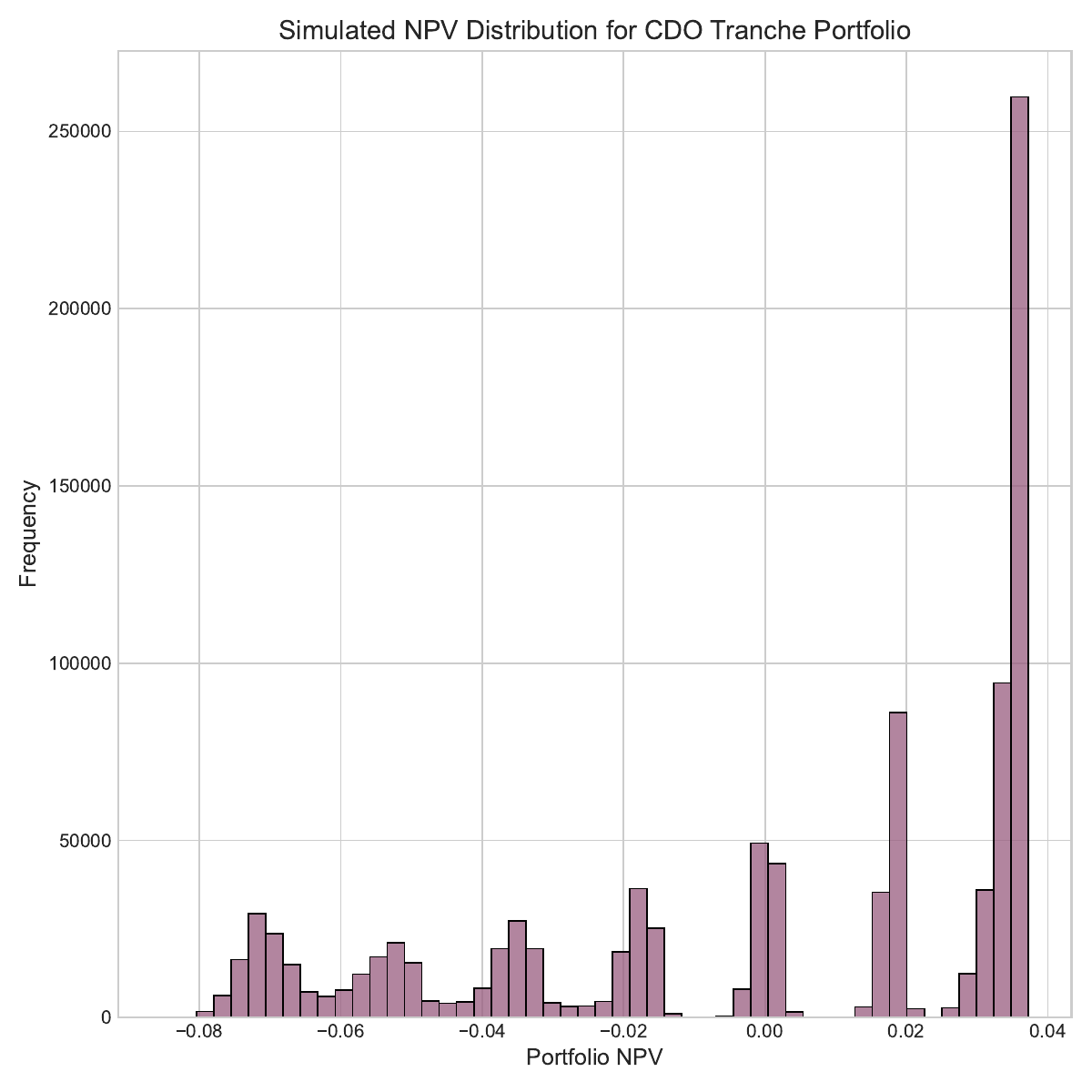}
	\end{subfigure}
	 
    \begin{minipage}{\textwidth}
	\small \linespread{1.2}\selectfont 
	
	\emph{{\footnotesize{Notes.}}} 
	{\footnotesize{The left panel illustrates the joint NPV distribution of the equity and senior mezzanine tranches (in terms of the scatter plot). The ``reverse L'' shape shown in the figure clearly reflects the defining waterfall payment structure of CDO products. 
		The right panel shows the NPV distribution of a specific CDO portfolio composed of short 4 units of the equity tranche, long 2 units of the junior mezzanine tranche, and short 1 unit of the senior mezzanine tranche. }} 
	\end{minipage}
	\label{fig:combined_dist}
\end{figure}
\vspace{-0.3cm}

By aggregating the simulated NPV values of individual tranches according to the specific CDO composition, we can analyze the NPV distribution 
of any CDO portfolio. The right panel of Figure \ref{fig:combined_dist} shows the NPV distribution of a specific CDO portfolio composed of short 4 units of the equity tranche, long 2 units of the junior mezzanine tranche, and short 1 unit of the senior mezzanine tranche. The resulting NPV distribution offers a practical tool for CDO portfolio risk management. 

\vspace{-0.3cm}

\subsection{Pricing Nonstandard CDOs}

\subsubsection{Pricing CDO Tranches with Nonstandard Attachment and Detachment Points}\label{sec_sub:numerical_weak_nonstandard_price_bound}
We apply Algorithm \ref{algo:weak_nonstandard_price_bound} to determine the model-independent and arbitrage-free price ranges for CDO tranches with nonstandard attachment and detachment points $[a, b]$. 

The left panel of Figure \ref{fig:nonstandard_ab_plots} presents the numerical results of the price ranges (in terms of spreads) for equity tranches with a fixed attachment point of $0\%$ and varying detachment points $b$ on March 28, 2025. As expected, the spread decreases as the detachment point $b$ increases. For very thick tranches (e.g., when $b$ approaches $100\%$), the spreads converge to the underlying index spread ($58$ bps). 
The arbitrage-free price range, denoted by the shaded area, is wider for detachment points typical of mezzanine tranches (approximately $20\%$ to $50\%$) and narrows for the most junior and senior tranche segments of the capital structure.
\vspace{-0.2cm}
\begin{figure}[ht]
	\centering
	\caption{Model-Independent and Arbitrage-Free Price Ranges for CDO Tranches with Nonstandard Attachment and Detachment Points}
	\begin{subfigure}[b]{0.48\textwidth}
		\centering
		\includegraphics[width=\textwidth,height=0.7\textwidth]{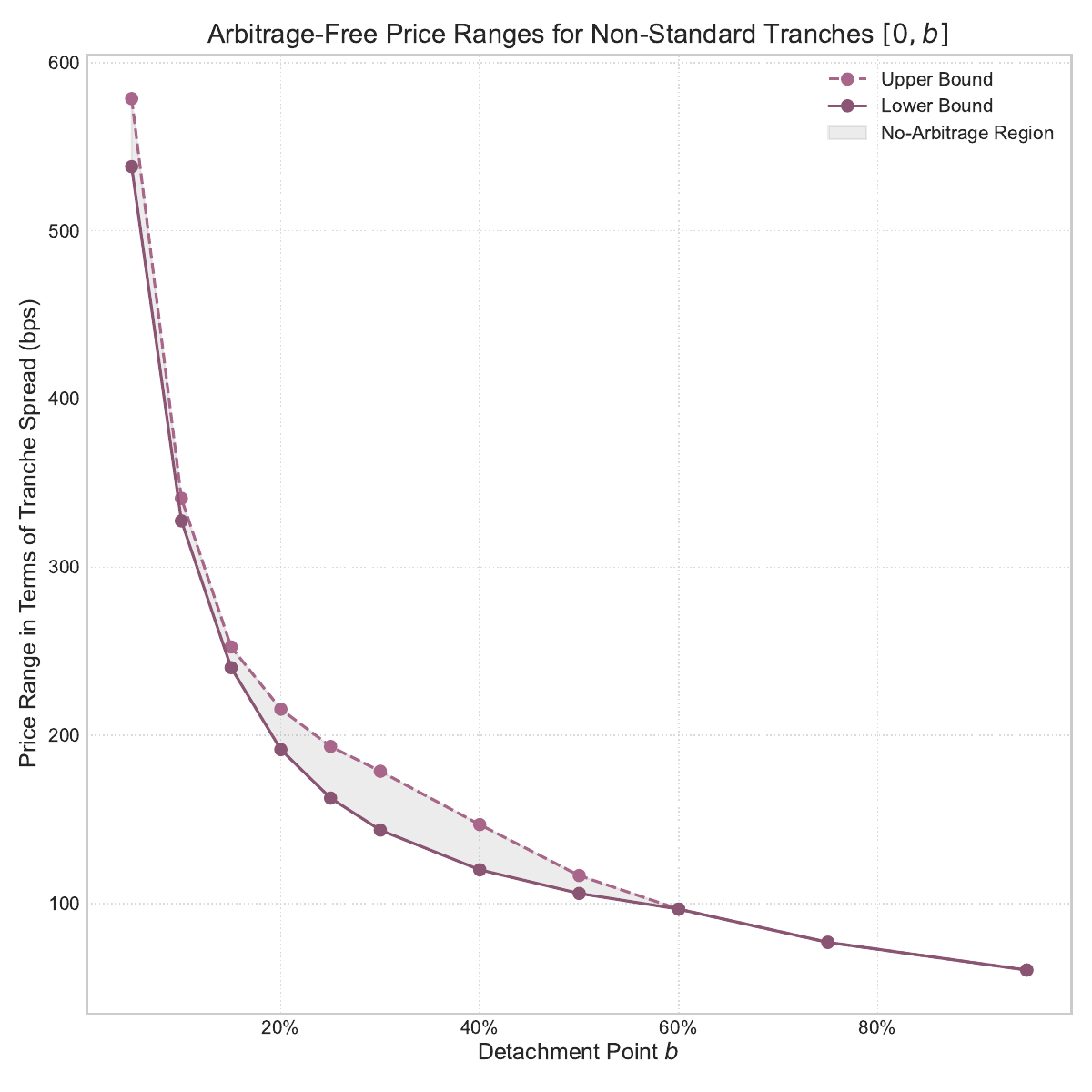}
	\end{subfigure}
	\hfill
	\begin{subfigure}[b]{0.48\textwidth}
		\centering
		\includegraphics[width=\textwidth,height=0.7\textwidth]{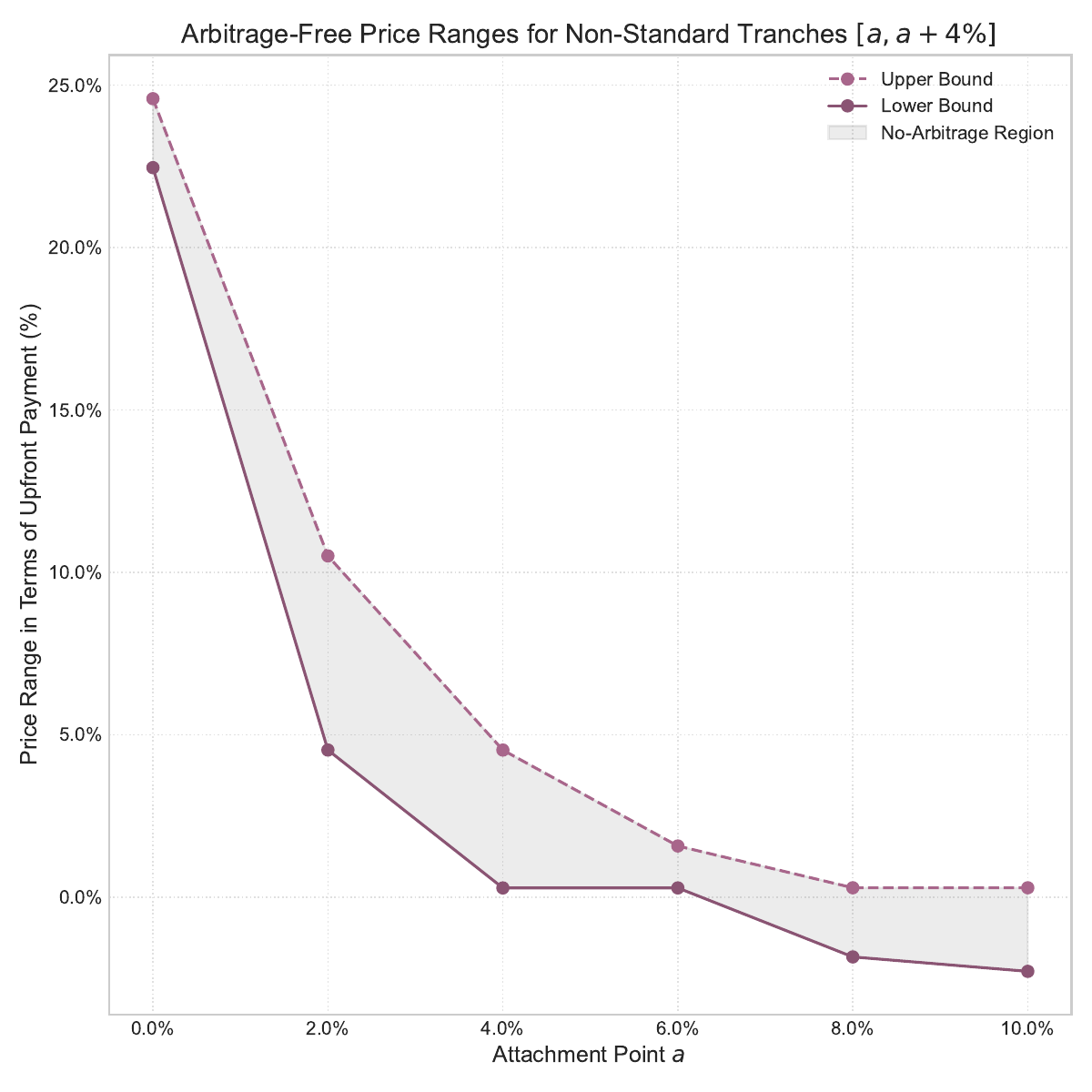}
	\end{subfigure}
    \begin{minipage}{\textwidth}
	\small \linespread{1.2}\selectfont 
	
	\emph{{\footnotesize{Notes.}}} 
	{\footnotesize{The left panel presents the numerical results of the model-independent and arbitrage-free price ranges (in terms of spreads) for equity tranches with a fixed attachment point of $0\%$ and varying detachment points $b$  on March 28, 2025. 
	The right panel  shows how the price ranges (in terms of up-front payments) for CDO tranches with varying attachment and detachment points $[a, a+4\%]$ having a constant width of $4\%$ change as $a$ increases (i.e., as tranche seniority varies). }}
		\end{minipage}
	\label{fig:nonstandard_ab_plots}
\end{figure}
\vspace{-0.4cm}

In addition, the right panel of Figure \ref{fig:nonstandard_ab_plots} shows how the price ranges (in terms of up-front payments) for CDO tranches with varying attachment and detachment points $[a, a+4\%]$ having a constant width of $4\%$ change as $a$ increases (i.e., as tranche seniority varies). 
It can be seen that the up-front payment is highest for the most junior tranche $[0\%, 4\%]$ and decreases monotonically as $a$ increases, eventually turning negative for senior tranches. Here a negative up-front payment means that the protection buyer receives a payment at inception, as the fixed running spread overcompensates for the low risk associated with these senior tranches.

\subsubsection{Pricing CDOs with a Nonstandard Number of Underlying Names}

We utilize Algorithm \ref{algo:nonstandard CDO 2} to compute the price ranges for CDOs with a nonstandard number of underlying names. Table \ref{tab:nonstandard_names} presents the calculated price ranges for the tranches of CDOs with varying nonstandard numbers of underlying names ($50$, $100$, $150$, and $200$), using the gamma-distorted copula with parameter $N=100$. We can see that for CDOs with a large number of underlying names (e.g., no fewer than $100$), the calculated price ranges are relatively narrow and close to the market prices of their standard counterpart with $125$ underlying names. 

\vspace{-0.2cm}

\begin{table}[htbp]
	\centering
	\caption{The Price Ranges for CDO Tranches with a Nonstandard Number of Names}
	\label{tab:nonstandard_names}
			\begin{center}
		\begin{small}
	\begin{tabular*}{\textwidth}{@{\extracolsep{\fill}} l c c c c @{}}
		\toprule
		\# of Names  & [0\%, 3\%] & [3\%, 6\%] & [6\%, 12\%] & [12\%, 100\%] \\
		\hline
		125 & 28.438\% & 4.531\% & 106.32 bps & 27.44 bps \\
		\hline
		50 & [25.917\%, 26.874\%] & [5.497\%, 6.444\%] & [107.89, 113.27] bps & [27.45, 27.74] bps \\
		100 & [28.091\%, 28.259\%] & [4.616\%, 4.800\%] & [106.25, 107.36] bps & [27.44, 27.49] bps\\
		150 & [28.573\%, 28.718\%] & [4.314\%, 4.475\%] & [105.43, 106.26] bps & [27.40, 27.44] bps \\
		200 & [28.760\%, 29.140\%] & [3.952\%, 4.382\%] & [104.38, 106.45] bps & [27.35, 27.44] bps \\
		\bottomrule
	\end{tabular*}		\end{small}
			\end{center}

\end{table}

\vspace{-0.1cm}


\section{Code and Data Disclosure}\label{sec:Code and Data Disclosure}The code and data to support our numerical experiments have been uploaded as a separate zip file. 


%
%
%



\begingroup
    \linespread{1.0}\selectfont 
    
    \setlength{\bibsep}{0pt} 

    \bibliographystyle{informs2014}
    \bibliography{CDOReference}
\endgroup





%
\ECSwitch

\noindent{\bf {\large E-Companion to ``Perfectly Fitting CDO Prices Across Tranches: A Theoretical Framework with Efficient Algorithms"}}
\vspace{0.3cm}


\section{Proofs of Propositions \ref{equiv} and  \ref{pricingprop}}\label{proof11}


\noindent{\textit{\textbf{Proof of Proposition \ref{equiv}.}}}
We will establish the equivalence among (i), (ii), and (iii) through the following implications: (i) $\Rightarrow$ (ii), (ii) $\Rightarrow$ (iii), and (iii) $\Rightarrow$ (i).

\noindent \textbf{(I)} Proof of (i) $\Rightarrow$ (ii).
Assume the existence of non-negative random variables $\hat\tau_j$ (for $1\leq j \leq n$) as specified in (i).
Let $\mathcal{G}_n$ denote the set of all permutations on the set $\{1, \ldots, n\}$, and let $\pi$ be a random permutation uniformly distributed on $\mathcal{G}_n$ that is independent of $(\hat\tau_1, \ldots, \hat\tau_n)$. 

Define a new random vector 
\begin{align}\label{eq:tauprihat}
	(\hat\tau_1', \ldots, \hat\tau_n') := (\hat\tau_{\pi(1)}, \ldots, \hat\tau_{\pi(n)}).
\end{align}
We can deduce that 
\begin{align}
\p(\hat\tau_1' \le x_1, \ldots, \hat\tau_n' \le x_n)
= &\ \E \left[
\p(\hat\tau_{\pi(1)} \le x_1, \ldots, \hat\tau_{\pi(n)} \le x_n|\pi)\right]\notag \\
= &\ \sum_{\sigma \in \mathcal{G}_n}  \left[\p(\pi=\sigma)
\p(\hat\tau_{\pi(1)} \le x_1, \ldots, \hat\tau_{\pi(n)} \le x_n|\pi=\sigma)\right]\notag \\
= &\ \frac{1}{n!}\sum_{\sigma \in \mathcal{G}_n} 
\p(\hat\tau_{\sigma(1)} \le x_1, \ldots, \hat\tau_{\sigma(n)} \le x_n),\label{eq:ec11sy}
\end{align}
where the third equality holds because $\pi$ is uniformly distributed on $\mathcal{G}_n$ and is independent of $(\hat\tau_1, \ldots, \hat\tau_n)$. \eqref{eq:ec11sy} implies that the joint distribution of $(\hat\tau_1', \ldots, \hat\tau_n')$ is the symmetrized version of that of $(\hat\tau_1, \ldots, \hat\tau_n)$.
Using \eqref{eq:ec11sy} and the substitution $\sigma' = \sigma \circ \rho^{-1}$, we can obtain that for any $\rho \in \mathcal{G}_n$, 
\[
\begin{aligned}
\p(\hat\tau_1' \le x_{\rho(1)}, \ldots, \hat\tau_n' \le x_{\rho(n)}) 
= & \ \frac{1}{n!}\sum_{\sigma \in \mathcal{G}_n} 
\p(\hat\tau_{\sigma(1)} \le x_{\rho(1)}, \ldots, \hat\tau_{\sigma(n)} \le x_{\rho(n)}) \\
=& \ \frac{1}{n!}\sum_{\sigma' \in \mathcal{G}_n} 
\p(\hat\tau_{\sigma'(1)} \le x_1, \ldots, \hat\tau_{\sigma'(n)} \le x_n)\\
=&\ \p(\hat\tau_1' \le x_1, \ldots, \hat\tau_n' \le x_n),
\end{aligned}
\]
which implies that $(\hat\tau_1', \ldots, \hat\tau_n')$ is exchangeable.

Next, we shall show that $(\hat\tau_1', \ldots, \hat\tau_n')$ shares the same DPM with $(\hat\tau_1, \ldots, \hat\tau_n)$. 
Define \[\hat N_t' := \sum_{j=1}^n \I_{\{\hat\tau_j' \le t\}}\equiv \sum_{j=1}^n \I_{\{\hat\tau_{\pi(j)} \le t\}} \quad\text{for $t\geq 0$}.\] 
Since the two sets $\{\pi(1),\cdots,\pi(n)\}$ and $\{1,\cdots,n\}$ are identical, it follows that 
\[
\hat N_t' 
= \sum_{j=1}^n \I_{\{\hat\tau_j \le t\}} = \hat N_t\quad\text{for $t\geq 0$}.
\]
Therefore, we have $\p(\hat N'_{T_i} = j) = \p(\hat N_{T_i} = j) = \hat q_{ij}$ for $1 \leq i \leq m$ and $0 \leq j \leq n$.
Accordingly, we conclude that $(\hat\tau_1', \ldots, \hat\tau_n')$ shares the same DPM with $(\hat\tau_1, \ldots, \hat\tau_n)$. 
This completes the proof of (i) $\Rightarrow$ (ii).

\vspace{0.2cm}
\noindent \textbf{(II)} Proof of (ii) $\Rightarrow$ (iii).
The non-negativity and sum-to-one constraints are apparent from the probabilistic definition of $(\hat\tau_1, \ldots, \hat\tau_n)$.  
Moreover, since $\hat N_{T_i} \leq \hat N_{T_{i+1}}$ for  $1 \leq i \leq m-1$, we can obtain 
\[
\p(\hat N_{T_i} \geq j) \leq \p(\hat N_{T_{i+1}} \geq j) 
\quad \text{for $1 \leq i \leq m-1$ and $0 \leq j \leq n$},
\]
which yields the monotonicity constraints in (\ref{qcond}). 
Hence, (ii) implies (iii).

\vspace{0.2cm}
\noindent \textbf{(III)} Proof of (iii) $\Rightarrow$ (i). 
We will use a constructive proof. 
Augment the matrix $\hat Q$ by adding boundary rows corresponding to $T_0 = 0$ and $T_{m+1}$ and setting $\hat q_{0j} = \I_{\{j=0\}}$ and $\hat q_{m+1,j} = \I_{\{j=n\}}$ for $0 \le j \le n$. 
Consider a random variable $U \sim \UU(0,1)$, and define a sequence of non-negative random variables 
\begin{equation}
	\hat\tau_j := \sum_{i=1}^{m+1} \left(\frac{T_i + T_{i-1}}{2} \I_{V_{ij}}  \right)
	\quad \text{for $1 \le j \le n$},
	\label{def_tau_hat}
\end{equation}
where 
$V_{ij}:=\left\{\theta_{i-1,j} < U \le \theta_{ij}\right\}$ for $1 \leq i \leq m+1$ and $1 \leq j \leq n$ and $\theta_{ij}:=\sum_{k \ge j}\hat q_{ik}$  for $0 \leq i \leq m+1$ and $1 \leq j \leq n$. 

We shall complete the proof of this part by taking three steps. First, let us present some properties of $\{\theta_{ij}\}$ and $\{V_{ij}\}$.
Using the third set of constraints (i.e., the non-negativity constraints) of \eqref{qcond}, we know from the definition that  $\theta_{ij}$ is non-increasing in $j$ for any $0\leq i\leq m+1$. 

Note that the second set of constraints (i.e., the monotonicity constraints) of \eqref{qcond} implies that $\theta_{1j}\leq \theta_{2j}\leq \cdots\leq \theta_{mj}$ for all $1\leq j\leq n$. Furthermore, we know from the definitions that $\theta_{0j}=0$ and $\theta_{m+1,j}=1$. In addition, using the first set of constraints (i.e., the sum-to-one constraints) and the third set of constraints (i.e., the non-negativity constraints) of \eqref{qcond}, we can obtain that $\theta_{mj}\leq 1$ and $\theta_{1j}\geq 0$, respectively. Therefore, we obtain that $\theta_{0j}\leq \theta_{1j}\leq \cdots\leq \theta_{mj}\leq \theta_{m+1,j}$ for all $1\leq j\leq n$. This further implies that for any $1\leq j\leq n$, $V_{1j}, V_{2j},\cdots,V_{m+1,j}$ are mutually disjoint. On the other hand, we can deduce that for any $1\leq j\leq n$,  
\begin{align}
	\bigcup_{i=1}^{m+1} V_{ij}=\bigcup_{i=1}^{m+1}\left\{\theta_{i-1,j} < U \le \theta_{ij}\right\}
	=\left\{\theta_{0j} < U \le \theta_{m+1,j}\right\}=\left\{0 < U \leq 1\right\},
\end{align}
which indicates that for any $1\leq j\leq n$, the union of $V_{1j}, V_{2j},\cdots,V_{m+1,j}$ is the complete set. Consequently, we conclude that for any $1\leq j\leq n$, $\hat\tau_j$ is a discrete random variable, taking a positive value $\frac{T_i+T_{i-1}}{2}$ on $V_{ij}$ for $1 \leq i \leq m+1$, where $V_{1j}, V_{2j},\cdots,V_{m+1,j}$ is a partition of the complete set and the $m+1$ positive values $\frac{T_1+T_{0}}{2}, \frac{T_2+T_{1}}{2},\cdots, \frac{T_{m+1}+T_{m}}{2}$ are increasing and mutually different.   

Second, we shall show that $(\hat\tau_1, \ldots, \hat\tau_n)$ possesses the following two properties. 

\noindent (i) They are ordered, i.e., $\hat\tau_1 \le \hat\tau_2 \le \cdots \le \hat\tau_n$.

\noindent (ii) It holds that $\{\hat\tau_j \le T_i\}\equiv \{U \le \sum_{k \ge j} \hat q_{ik}\}$ for $1 \leq i \leq m+1$ and $1 \leq j \leq n$. 

We start with the proof of property (i).  For any $1\leq j\leq n-1$ and $i=1,\cdots, m+1$, if $\hat\tau_j=\frac{T_i+T_{i-1}}{2}$, this means that $U\in(\theta_{i-1,j}, \theta_{ij}]$, which further implies that $U>\theta_{i-1,j+1}$ 
thanks to the fact that $\theta_{ij}$ is non-increasing in $j$ for any $0\leq i\leq m+1$. Then according to the definition of $\hat\tau_{j+1}$, we have that $\hat\tau_{j+1}\geq \frac{T_i+T_{i-1}}{2}=\hat\tau_j$. Thus, we have proved property (i). Regarding property (ii), we can deduce that  for $1 \leq i \leq m+1$ and $1 \leq j \leq n$,
\begin{align}
	\{\hat\tau_j \le T_i\}=&\ \bigcup_{l=1}^i \left\{\hat\tau_j =\frac{ T_l+T_{l-1}}{2}\right\}=\bigcup_{l=1}^i V_{lj}=\ \left\{U \le \sum_{k \ge j} \hat q_{ik}\right\}.\notag
\end{align}

Finally, we are ready to prove that the DPM of $(\hat\tau_1, \ldots, \hat\tau_n)$ is ${\{\hat q_{ij}\}}_{1 \leq i \leq m, 0 \leq j \leq n}$. Define $\hat N_{t}  
:=
\sum_{l=1}^n \I_{\{\hat\tau_l \le t\}}$ for all $t\geq 0$. 
Then we can obtain 
\begin{align}
	\p(\hat N_{T_i} = j)
	= &\ \p\left(\sum_{l=1}^n \I_{\{\hat\tau_l \le T_i\}} =  j\right)
	=\p\left(\hat\tau_j\leq T_i,\ \hat\tau_{j+1}>T_i\right) 
	= \p\left(\hat\tau_j\leq T_i\right) -\p\left(\hat\tau_{j+1}\leq T_i\right), \notag 
\end{align}
where the second and third equalities hold due to property (i) above. Then applying property (ii) above yields  
\begin{align}
	\p(\hat N_{T_i} = j)
	=&\ \p\left\{U \le \sum_{k \ge j} \hat q_{ik}\right\}-\p\left\{U \le \sum_{k \ge j+1} \hat q_{ik}\right\}
	=\hat q_{ij}. \notag
\end{align}
%
This completes the proof of (iii) $\Rightarrow$ (i).  

\vspace{0.2cm}
Thus, we have shown that (i) $\Rightarrow$ (ii), (ii) $\Rightarrow$ (iii), and (iii) $\Rightarrow$ (i). Accordingly,  the whole proof of Proposition \ref{equiv} is completed.
\hfill $\square$

\vspace{0.2cm}
\noindent\textit{\textbf{Proof of Proposition \ref{pricingprop}.}}
By definitions of $\dl^{[a_l, b_l]}$ and $\pl^{[a_l, b_l]}$ in Section \ref{sec_sub:value_CDO}, we can obtain
\[
\begin{aligned}
V^{[a_l, b_l]}=&\ \dl^{[a_l, b_l]} - \pl^{[a_l, b_l]}\\
=& \ \sum_{i=1}^m \!\left[D\!\left(\frac{T_{i-1}+T_i}{2}\right)
   \big(L_{T_i}^{[a_l,b_l]} - L_{T_{i-1}}^{[a_l,b_l]}\big)\right]
   \!-\! s^{[a_l,b_l]} \left[\sum_{i=1}^m \!D(T_i)(T_i - T_{i-1})(b_l - a_l - L_{T_i}^{[a_l,b_l]})\right]  \\
& \ - \uf^{[a_l,b_l]}(b_l - a_l).
\end{aligned}
\]
Collecting the coefficients of $L_{T_i}^{[a_l,b_l]}$ for all $i =1,\ldots, m-1$ and treating $i=m$ separately yields
\begin{align}
V^{[a_l, b_l]}
=&\ \sum_{i=1}^{m-1} 
   \!\left[\Bigg(
      D\!\left(\frac{T_{i-1}+T_i}{2}\right)
      - D\!\left(\frac{T_{i+1}+T_i}{2}\right)
      + s^{[a_l,b_l]} D(T_i)(T_i - T_{i-1})
   \Bigg)L_{T_i}^{[a_l,b_l]}\right] \notag \\
&\ + \Bigg(
      D\!\left(\frac{T_{m-1}+T_m}{2}\right)
      + s^{[a_l,b_l]} D(T_m)(T_m - T_{m-1})
   \Bigg)L_{T_m}^{[a_l,b_l]} \notag \\
&\ - (b_l - a_l)s^{[a_l,b_l]} \sum_{i=1}^m D(T_i)(T_i - T_{i-1})
   - \uf^{[a_l,b_l]}(b_l - a_l). \label{eq:vv}
\end{align}
Using the definitions of $\lambda_i^{[a_l, b_l]}(\mathbf{s})$ and $\gamma^{[a_l, b_l]}(\mathbf{s})$ in
\eqref{vspread_1} and \eqref{vspread_2}, \eqref{eq:vv} can be compactly written as
\begin{equation}\label{NPV_formula}
V^{[a_l, b_l]} = \sum_{i=1}^m \lambda_i^{[a_l, b_l]}(\mathbf{s}) L_{T_i}^{[a_l, b_l]}
- \gamma^{[a_l, b_l]}(\mathbf{s}).
\end{equation}

Next, we will express the expected tranche loss $\E[L_{T_i}^{[a_l,b_l]}]$ in terms of the DPM.
Indeed, we have 
\begin{align}
\E[L_{T_i}^{[a_l,b_l]}]
=&\ \E[(L_{T_i} - a_l)^+] - \E[(L_{T_i} - b_l)^+]\notag  \\
=&\ \E\!\left[\left(\frac{N_{T_i}(1-R)}{n} - a_l\right)^+\right]
   - \E\!\left[\left(\frac{N_{T_i}(1-R)}{n} - b_l\right)^+\right]\notag  \\
=&\ \sum_{j=0}^n 
   \left[
      \left(\frac{j(1-R)}{n} - a_l\right)^+
      - \left(\frac{j(1-R)}{n} - b_l\right)^+
   \right]
   \p(N_{T_i} = j)\notag  \\
=&\ \sum_{j=0}^n \beta_j^{[a_l, b_l]} q_{ij},\label{eq:vvv}
\end{align}
where the second equality holds because $L_{T_i} = N_{T_i}(1-R)/n$. 
Then taking expectation on both sides of \eqref{NPV_formula} and using \eqref{eq:vvv} immediately yields
\[
\begin{aligned}
v^{[a_l, b_l]}
&= \E[V^{[a_l, b_l]}]
 = \sum_{i=1}^m \lambda_i^{[a_l, b_l]}(\mathbf{s})\, \E[L_{T_i}^{[a_l,b_l]}]
   - \gamma^{[a_l, b_l]}(\mathbf{s}) = \sum_{i=1}^m \sum_{j=0}^n 
   \left[\lambda_i^{[a_l, b_l]}(\mathbf{s})\, \beta_j^{[a_l, b_l]} q_{ij}\right]
   - \gamma^{[a_l, b_l]}(\mathbf{s}),
\end{aligned}
\]
which can be written in matrix form as in \eqref{vc}.
The proof is completed.
\hfill $\square$

\section{Proofs of Theorems \ref{ImpliedMatrix} and \ref{thm:construct}}
\label{proof12}

\noindent\textit{\textbf{Proof of Theorem \ref{ImpliedMatrix}.}}
The sufficiency part (i.e., \textit{LP Feasibility} $\Rightarrow$ \textit{Weak Compatibility}) follows directly from Theorem~\ref{thm:construct}. 
We now prove the necessity part (i.e., \textit{Weak Compatibility} $\Rightarrow$ \textit{LP Feasibility}).

Assume that the market prices of CDO tranches $\mathbf{s}$ is weakly compatible. 
By definition, there exists a copula $C \in \mathcal{C}_0$ such that the following fair value condition holds.
\[
v^{[a_l, b_l]}(C; \mathbf{s}) = 0 \quad \text{for $1\leq l \leq M$}.
\]
Let $Q = \{q_{ij}\}_{1 \leq i \leq m,\, 0 \leq j \leq n}$ denote the DPM associated with this copula $C$.
We shall show that $Q$ constitutes a feasible solution to the system of linear constraints~\eqref{linearsystem}.
According to Proposition~\ref{pricingprop}, the condition $v^{[a_l, b_l]}(C; \mathbf{s}) = 0$ for $1\leq l \leq M$ is equivalent to the first set of linear constraints in~\eqref{linearsystem}. 
Since $Q$ is a valid DPM, it satisfies the sufficient and necessary condition given in (iii) of  Proposition~\ref{equiv}, which corresponds to the second, fourth, and fifth sets of constraints in~\eqref{linearsystem}. 
Moreover, by the definition of the DPM, the expected number of defaults at each $T_i$ satisfies
\begin{align}\label{eq:app2-1}
	\E[N_{T_i}] = \sum_{j=0}^{n} j\, q_{ij} \quad \text{for  $1 \leq i \leq m$}.
\end{align}
On the other hand, by the definition of $N_{T_i}$, 
simple algebra yields 
\begin{align}\label{eq:app2-2}
\E[N_{T_i}] =\E\left[\sum_{j=1}^n\I_{\{\tau_j\leq t\}}\right]= n\,F(T_i)\quad \text{for  $1 \leq i \leq m$}.
\end{align}
Then we conclude from \eqref{eq:app2-1} and \eqref{eq:app2-2} that $Q$ also satisfies the third set of constraints in~\eqref{linearsystem}. Therefore, we have proved that the DPM $Q$ associated with the copula $C$ satisfies all five sets of linear constraints in~\eqref{linearsystem} and thus constitutes a feasible solution to ~\eqref{linearsystem}. 
\hfill $\square$
\vspace{0.2cm}

\noindent\textit{\textbf{Proof of Theorem \ref{thm:construct}.}}
For ease of exposition, define $\hat Q := \{\hat q_{ij}\}_{1\leq i\leq m,0\leq j\leq n}$. 
By taking the following four steps, we will construct an exchangeable copula $\hat C \in \mathcal{C}_0$ (with DPM equal to $\hat Q$) that perfectly fits all market prices. 


\noindent \emph{Step 1. Constructing a sequence of ordered, non-negative random variables  $(\tilde\tau_1,\ldots,\tilde\tau_n)$.}
Augment $\hat Q$ by setting $\hat q_{0j} = \I_{\{j=0\}}$ for $T_0 = 0$ and $\hat q_{m+1,j} = \I_{\{j=n\}}$ for $T_{m+1} > T$. 
Consider a random variable $U \sim \UU(0,1)$, and define a sequence of non-negative random variables as follows:
\[
\tilde\tau_j 
:= \sum_{i=1}^{m+1} \frac{T_i + T_{i-1}}{2}\,
\I_{\left\{\sum_{k \ge j} \hat q_{i-1,k} < U \le \sum_{k \ge j} \hat q_{ik}\right\}}
\quad \text{for $1 \le j \le n$}.
\]
Since  $(\tilde\tau_1,\ldots,\tilde\tau_n)$ are defined in the same way as  $(\hat\tau_1,\ldots,\hat\tau_n)$ in \eqref{def_tau_hat}, they possess the same properties (i) and (ii) as for $(\hat\tau_1,\ldots,\hat\tau_n)$ (see part {\bf (III)} of the proof of Proposition \ref{equiv}). 
Specifically, $(\tilde\tau_1,\ldots,\tilde\tau_n)$ have the following two properties, where property (i) exactly means that $(\tilde\tau_1,\ldots,\tilde\tau_n)$ are ordered. 

\noindent (i) They are ordered, i.e., $\tilde\tau_1 \le \tilde\tau_2 \le \cdots \le \tilde\tau_n$.

\noindent (ii) It holds that $\{\tilde\tau_j \le T_i\}\equiv \{U \le \sum_{k \ge j} \hat q_{ik}\}$ for $1 \leq i \leq m+1$ and $1 \leq j \leq n$. 


Because $(\tilde\tau_1,\ldots,\tilde\tau_n)$ are ordered, non-negative random variables, we can regard them as ordered default times. Then we know from part {\bf (III)} of the proof of Proposition \ref{equiv} that $(\tilde\tau_1,\ldots,\tilde\tau_n)$ possesses the following property (iii).  

\noindent (iii) The DPM of $(\tilde\tau_1,\ldots,\tilde\tau_n)$ equals $\hat Q$. 


Applying the above property (ii), we can obtain that for $y_1,\ldots,y_n \in \{0,\ldots,m+1\}$, 
\begin{equation}
	\begin{aligned}
		\p(\tilde\tau_1 \le T_{y_1},\ldots,\tilde\tau_n \le T_{y_n})
		&= \p\!\left(U \le \sum_{k \ge 1}\hat q_{y_1,k},\ldots,U \le \sum_{k \ge n}\hat q_{y_n,k}\right) \\
		&= \p\!\left(U \le \min_{1 \le j \le n}\!\left\{\sum_{k \ge j} \hat q_{y_j,k}\right\}\right)
		= \min_{1 \le j \le n}\!\left\{\sum_{k \ge j} \hat q_{y_j,k}\right\}.
	\end{aligned}
	\label{tilde_dist}
\end{equation}

\vspace{0.1cm}
\noindent \emph{Step 2. Constructing exchangeable default times $(\hat\tau_1,\ldots,\hat\tau_n)$ with DPM equal to $\hat Q$.}
Let $\mathcal{G}_n$ denote the set of all permutations on the set $\{1,\ldots,n\}$, and let $\pi$ be a random permutation uniformly distributed on $\mathcal{G}_n$ that is independent of~$U$. 
Define a sequence of non-negative random variables 
\[
(\hat\tau_1,\ldots,\hat\tau_n) := (\tilde\tau_{\pi(1)},\ldots,\tilde\tau_{\pi(n)}),
\]
which can be regarded as default times. 
Because $(\hat\tau_1,\ldots,\hat\tau_n)$ is defined in the same way as $(\hat\tau'_1,\ldots,\hat\tau'_n)$ in \eqref{eq:tauprihat}, we conclude that $(\hat\tau_1,\ldots,\hat\tau_n)$ is also exchangeable (see part {\bf (I)} of the proof of Proposition \ref{equiv}). 


Furthermore, define $\hat N_t := \sum_{j=1}^n \I_{\{\hat\tau_j \le t\}}\equiv \sum_{j=1}^n \I_{\{\tilde\tau_{\pi(j)} \le t\}}$ and $\tilde N_t := \sum_{j=1}^n \I_{\{\tilde\tau_j \le t\}}$ for $t\geq 0$. 
Since the two sets $\{\pi(1),\cdots,\pi(n)\}$ and $\{1,\cdots,n\}$ are identical, it follows that 
\[
\hat N_t 
= \sum_{j=1}^n \I_{\{\tilde\tau_j \le t\}} = \tilde N_t\quad\text{for $t\geq 0$}.
\]
Therefore, we have $\p(\hat N_{T_i} = j) = \p(\tilde N_{T_i} = j) = \hat q_{ij}$ for $1 \leq i \leq m$ and $0 \leq j \leq n$.
Namely, the DPM of $(\hat\tau_1, \ldots, \hat\tau_n)$ is equal to $\hat Q$.
%
%

\vspace{0.1cm}
\noindent \emph{Step 3. Proving that for $y_1,\ldots,y_n \in \{0,\ldots,m+1\}$, }
\begin{align}\label{eq:step3}
\p(\hat\tau_1 \le T_{y_1},\ldots,\hat\tau_n \le T_{y_n}) = \frac{1}{n!} \sum_{\sigma \in \mathcal{G}_n}
\min_{1 \le j \le n}\!\left\{\sum_{k \ge \sigma(j)} \hat q_{y_j,k}\right\}.
\end{align}
We can deduce that for $y_1,\ldots,y_n \in \{0,\ldots,m+1\}$,
\begin{align}
	\p(\hat\tau_1 \le T_{y_1},\ldots,\hat\tau_n \le T_{y_n}) 
	= &\ \E \left[
	\p(\tilde\tau_{\pi(1)} \le T_{y_1}, \ldots, \tilde\tau_{\pi(n)} \le T_{y_n}|\pi)\right]\notag \\
	= &\ \sum_{\sigma \in \mathcal{G}_n}  \left[\p(\pi=\sigma)
	\p(\tilde\tau_{\pi(1)} \le T_{y_1}, \ldots, \tilde\tau_{\pi(n)} \le T_{y_n}|\pi=\sigma)\right]\notag \\
	= &\frac{1}{n!} \sum_{\sigma \in \mathcal{G}_n}
	\p(\tilde\tau_{\sigma(1)} \le T_{y_1},\ldots,\tilde\tau_{\sigma(n)} \le T_{y_n}),
	\label{eq:step3-2}
\end{align}
where the third equality holds as $\pi$ is uniformly distributed on $\mathcal{G}_n$ and is independent of $(\tilde\tau_1, \ldots, \tilde\tau_n)$. 

Rearranging the terms inside the probabilities on the right hand side of \eqref{eq:step3-2} and then applying ~\eqref{tilde_dist} yields
\begin{equation*}
	\begin{split}
	\p(\hat\tau_1 \le T_{y_1},\ldots,\hat\tau_n \le T_{y_n})
	&= \frac{1}{n!} \sum_{\sigma \in \mathcal{G}_n}
	\p(\tilde\tau_1 \le T_{y_{\sigma^{-1}(1)}},\ldots,\tilde\tau_n \le T_{y_{\sigma^{-1}(n)}}) \\
	&= \frac{1}{n!} \sum_{\sigma \in \mathcal{G}_n}
	\min_{1 \le l \le n}\!\left\{\sum_{k \ge l} \hat q_{y_{\sigma^{-1}(l)},k}\right\}.
	\end{split}
\end{equation*}
By setting $j = \sigma^{-1}(l)$ (and thus $l=\sigma(j)$), we can obtain \eqref{eq:step3} immediately.

\vspace{0.1cm}
\noindent \emph{Step 4. Constructing an exchangeable copula $\hat C \in \mathcal{C}_0$ (with DPM equal to $\hat Q$) that perfectly fits all market prices.}
Define $\hat C$ to be the copula of $(\hat\tau_1,\ldots,\hat\tau_n)$ constructed in Step 2 above. 
For $u_j := F(T_{y_j})$ ($1 \le j \le n$), applying Sklar's theorem and then using \eqref{eq:step3} yields
\[
\hat C(u_1,\ldots,u_n)
= \p(\hat\tau_1 \le T_{y_1},\ldots,\hat\tau_n \le T_{y_n})
= \frac{1}{n!} \sum_{\sigma \in \mathcal{G}_n}
\min_{1 \le j \le n}\!\left\{\sum_{k \ge \sigma(j)} \hat q_{y_j,k}\right\}.
\]
Furthermore, we know from Step 2 above that $(\hat\tau_1,\ldots,\hat\tau_n)$ is exchangeable with DPM equal to $\hat Q$. This implies that $\hat C$ is exchangeable and its associated DPM equals $\hat Q$. Therefore, $\hat C$ can perfectly fit all market prices. The proof is completed. 
\hfill $\square$

\section{Proofs of Propositions \ref{ExchangeProp}, \ref{propx} and \ref{prop:generator}}\label{proof21}

\noindent\textit{\textbf{Proof of Proposition \ref{ExchangeProp}}.}
\noindent\textbf{(I)} Proof of (i) $\Rightarrow$ (ii).
Assume that ${\{U_k\}}_{k \in \mathbb{N}^+}$ are i.i.d. conditional on a $\sigma$-field $\mathcal{H}\subseteq \mathcal F$.
Define a stochastic process $X(u) := \mathbb P(U_1 \leq u | \mathcal{H})$ for $u\in[0,1]$. 
Since $U_1 \sim \mathcal{U}(0, 1)$, it is straightforward to check that $\{X(u): u\in[0,1]\}$ is a stochastic distortion that satisfies $\E [X(u)] = u$ for any $u\in[0,1]$. Therefore, we know that $C^X(u_1, \ldots, u_n):=\E\left[\prod_{j=1}^n X(u_j)\right]$ is the transformed copula by the stochastic distortion $X$. Then 
we can obtain that for any $n\geq 2$, the copula $C$ of $(U_1,\ldots, U_n)$ is given by 
\[
\begin{split}
	C(u_1, \ldots, u_n) &= \p (U_1 \leq u_1, \ldots, U_n \leq u_n)  = \E [\p(U_1 \leq u_1, \ldots, U_n \leq u_n |\mathcal{H})] \\
	&= \E\left[\prod_{j=1}^n \p(U_j \leq u_j | \mathcal{H})\right] = \E\left[\prod_{j=1}^n X(u_j)\right]=C^X(u_1, \ldots, u_n), 
\end{split}
\]
where the third equality follows from the conditional independence of  ${\{U_k\}}_{k \in \mathbb{N}^+}$. 

\noindent\textbf{(II)} Proof of (ii) $\Rightarrow$ (i).
Assume that there exists a stochastic distortion $X$ with $\E [X(u)] = u$ for any $u\in[0,1]$ such that for any $n \geq 2$, the copula $C$ of $(U_1, \ldots, U_n)$ admits the stochastic distortion representation given by $\eqref{Dist2}$.
Let $\sigma$ be any permutation on the set $\{1, \ldots, n\}$.
Then we have
\[
C(u_{\sigma(1)}, \ldots, u_{\sigma(n)}) = \E\left[\prod_{j=1}^n X(u_{\sigma(j)})\right] = \E\left[\prod_{j=1}^n X(u_j)\right] = C(u_1, \ldots, u_n), 
\] 
which implies that $(U_1, \ldots, U_n)$ is exchangeable. Since this holds for any arbitrary $n \geq 2$, we conclude that ${\{U_k\}}_{k \in \mathbb{N}^+}$ is infinitely exchangeable by definition. According to de Finetti's Theorem (see, e.g., \citealt{mai2020infinite}), we obtain the result (i) immediately. The proof is completed.
\hfill $\square$

\vspace{0.2cm}
\noindent\textit{\textbf{Proof of Proposition \ref{propx}.}}
By the properties of gamma processes, both $\{\xi_t: t \geq 0\}$ and $\{\eta_t: t \geq 0\}$ are non-negative and non-decreasing processes. Therefore, it holds that 
\[\xi_{\phi(u_1)}\leq \xi_{\phi(u_2)}\quad\text{and}\quad \eta_{N - \phi(u_1)}\geq \eta_{N - \phi(u_2)}\quad \text{a.s.} \quad \text{for any $0 < u_1 < u_2 < 1$},\]
where we also use the fact that $\{\phi(u)\}$ is non-decreasing. Then we can deduce that for any $0 < u_1 < u_2 < 1$, 
\[
X(u_1)
= \frac{\xi_{\phi(u_1)}}{\xi_{\phi(u_1)} + \eta_{N - \phi(u_1)}}
\le \frac{\xi_{\phi(u_2)}}{\xi_{\phi(u_2)} + \eta_{N - \phi(u_1)}}
\le \frac{\xi_{\phi(u_2)}}{\xi_{\phi(u_2)} + \eta_{N - \phi(u_2)}}
= X(u_2) \quad \text{a.s.},
\]
which means that $\{X(u): u \in [0, 1]\}$ is non-decreasing a.s.. In addition, note that $\xi_0 = \eta_0 = 0$, $\phi(0)=0$, $\phi(1)=N$ a.s.. It follows that $X(0) = 0$ and $X(1) = 1$ a.s.. Therefore, $\{X(u)\}$ is a stochastic distortion. 
 
Furthermore, we can deduce that 
\[
\E[X(u)]
 = \E\big[\E[X(u) \mid \phi(u)]\big]
 = \E\!\left[\frac{\phi(u)}{\phi(u) + (N - \phi(u))}\right]
 = u \quad \text{for  $0 < u < 1$},
\]
where the second equality holds because conditional on $\phi(u)$, $X(u)$ follows a Beta distribution with parameters $\phi(u)$ and $N-\phi(u)$. The proof is completed.  
\hfill $\square$

\vspace{0.2cm}
\noindent\textit{\textbf{Proof of Proposition \ref{prop:generator}}.}
Define $V_{ik}:=\left\{\theta_{i,k-1} < U \le \theta_{ik}\right\}$ for $0 \leq i \leq m+1$ and $0 \leq k \leq N$, where $\theta_{ik}:=\sum_{j \leq k}\hat p_{ij}$  for $0 \leq i \leq m+1$ and $0 \leq k \leq N$ and $\theta_{i,-1}:=0$. 
Applying the fourth set of constraints (i.e., the non-negativity constraints) of \eqref{linearsystem2_delete_1} yields directly that $0\equiv \theta_{i,-1}\leq \theta_{i0}\leq \cdots\leq \theta_{iN}$ for all $0\leq i\leq m+1$.
Moreover, we know from the first set of constraints (i.e., the sum-to-one constraints) of \eqref{linearsystem2_delete_1} that $\theta_{iN}\equiv 1$. 
Therefore, we conclude that for any $i=0,\cdots, m+1$,  $V_{i0}, V_{i1},\cdots,V_{iN}$ is a partition of the complete set. 
Using this partition result and noticing that $\phi(F(T_i))$ can be rewritten as 
$\phi(F(T_i))=\sum_{k=0}^N k\I_{V_{ik}}$ for any $i=0,\cdots,m+1$,
we can obtain that for any $i=0,\cdots,m+1$ and $k=0,\cdots, N$,  
\begin{align}\label{eq:iff}
\text{ $\phi(F(T_i))=k$ if and only if  
$\theta_{i,k-1} < U \leq \theta_{ik}$ (i.e., $V_{ik}$ happens).}
\end{align} 
For any $i=0,\cdots,m+1$ and $k=0,\cdots, N$, if $\phi(F(T_i))=k$, then we have $\theta_{i,k-1} < U \leq \theta_{ik}$. 
This further implies that $U>\theta_{i+1,k-1}$ because $\theta_{i+1,k-1}\leq \theta_{i,k-1}$ due to the third set of constraints (i.e., the monotonicity constraints) of \eqref{linearsystem2_delete_1}. 
Then using the result \eqref{eq:iff} again yields that $\phi(F(T_{i+1}))\geq k=\phi(F(T_i))$. 
Therefore, we have $\phi(F(T_0))\leq \phi(F(T_1))\leq \cdots \leq \phi(F(T_{m+1}))$. 
Since $\phi(u)$ is a linear interpolation of $\phi(F(T_i))$ and $\phi(F(T_{i+1}))$ for any $u\in (F(T_i),F(T_{i+1}))$ and $i=0,\cdots,m$, we conclude that $\{\phi(u)\}$ is a non-decreasing stochastic process.

In addition, it is straightforward to verify that $\phi(0)=0$ and $\phi(1)=N$. 
Furthermore, we know from the result \eqref{eq:iff} that the distribution of $\phi(F(T_i))$ is given by \eqref{eq:phi_distribution}.  
Then we can deduce that 
\begin{align}\label{eq:exp}
	\E(\phi(F(T_i))) = \sum_{k=0}^Nk\hat p_{ik}=NF(T_i)\quad\text{for any $i=1,\cdots,m$}, 
\end{align} 
where the second equality holds due to the second set of constraints of \eqref{linearsystem2_delete_1}. 
Using the fact that $\phi(u)$ is a linear interpolation of $\phi(F(T_i))$ and $\phi(F(T_{i+1}))$ for any $u\in (F(T_i),F(T_{i+1}))$ and $i=0,\cdots,m$, we obtain that $\E(\phi(u))=u$ for any $0\leq u\leq 1$. 

Thus, we have shown that $\{\phi(u): 0 \le u \le 1\}$ defined in \eqref{implied_phi} is a non-decreasing stochastic process satisfying \eqref{cond_phipsi} and \eqref{eq:phi_distribution}. \hfill $\Box$

\section{Proofs of Theorems \ref{strongly_compatible} and \ref{construct_copula}}
\label{proof22}

\noindent\textit{\textbf{Proof of Theorem \ref{strongly_compatible}.}}
The first implication ($\mathbf{s} \in \mathcal{A}^\circ \Rightarrow$ \textit{LP feasibility}) follows directly from Proposition~\ref{prop_interior}, whose proof will be given in Section \ref{proof23}. 
Conversely, the second implication (\textit{LP feasibility} $\Rightarrow \mathbf{s} \in \mathcal{A}$) is ensured by Theorem~\ref{construct_copula}, which constructs a concrete conditionally i.i.d. copula that achieves a perfect fit to the market prices $\mathbf{s}$. 
\hfill $\square$

\vspace{0.2cm}

\noindent\textit{\textbf{Proof of Theorem \ref{construct_copula}}.}
Given that $\{\hat p_{ik}\}_{1 \leq i \leq m, 0 \leq k \leq N}$ is a feasible solution to the system of linear constraints \eqref{linearsystem2} (the last four sets of which are exactly the linear constraints in \eqref{linearsystem2_delete_1}), we know from Proposition \ref{prop:generator} that $\{\phi(u): 0 \le u \le 1\}$ defined in \eqref{implied_phi} via this feasible solution is a non-decreasing stochastic process satisfying \eqref{cond_phipsi}, and thus it is qualified to serve as a generator of a discrete gamma distortion.
Moreover, the distribution of $\phi(F(T_i))$ is given by \eqref{eq:phi_distribution} for any $i=1, \cdots, m$. 

Accordingly, using $\{\phi(u)\}$ defined in \eqref{implied_phi} and two independent gamma processes $\{\xi_t\}$ and $\{\eta_t\}$ that are also independent of $\{\phi(u)\}$, we can construct a discrete gamma distortion $\{X(u)\}$ based on \eqref{Xdef} and then denote by $C^X$ the corresponding discrete gamma-distorted copula. It follows from \eqref{qij_discrete} and \eqref{eq:phi_distribution} that the DPM $Q$ associated with $C^X$ is given by
\begin{align}\label{eq:last}
q_{ij} = \sum_{k=0}^N h_{jk}^{(N)} \hat p_{ik} \quad \text{for  $1\leq i \leq m$ and $0 \leq j \leq n$.}
\end{align}
Then we can deduce that 
\begin{align}
	v^{[a_l, b_l]}(C^{X}; \mathbf{s}) =&\ \sum_{i = 1}^m \sum_{j = 0}^n \lambda^{[a_l, b_l]}_i(\mathbf{s}) \beta_j^{[a_l ,b_l]}  q_{ij} - \gamma^{[a_l, b_l]}(\mathbf{s})\\
	=& \ \sum_{i=1}^m \sum_{j=0}^n \sum_{k=0}^N  h_{jk}^{(N)} \lambda_i^{[a_l,b_l]}(\mathbf{s}) \beta_j^{[a_l,b_l]} \hat p_{ik} - \gamma^{[a_l, b_l]}(\mathbf{s}) \notag\\
	=&\ 0 \quad \text{for $1\leq l \leq M$,}
\end{align}
where the first equality follows from \eqref{vc},  the second equality is obtained from \eqref{eq:last}, and  the third equality holds due to the first set of constraints in \eqref{linearsystem2}. 
Hence, we conclude that the copula $C^X$ achieves a perfect fit to the market prices $\mathbf{s}$. The proof is completed. 
\hfill $\square$

\section{Proof of Proposition \ref{prop_interior}}\label{proof23}

\subsection{Preliminary Preparations}

Before proving Proposition \ref{prop_interior}, we first introduce some notations and present two lemmas as preparation.

For any $N\in\mathbb N^+$, denote by $\mathcal{P}^{(N)}$ the feasible region to the following system of linear constraints \eqref{ls3} (note that \eqref{ls3} is the same as \eqref{linearsystem2_delete_1} except that the superscript $(N)$ is added to highlight the dependence of all the variables on $N$). 
\begin{equation}
	\begin{cases}
		\sum_{k=0}^N \hat p_{ik}^{(N)} = 1 \, & \text{for } 1 \leq i\leq m,\\
		\sum_{k=0}^N k \hat p_{ik}^{(N)} = N F(T_i)\, & \text{for }  1 \leq i \leq m, \\
		\sum_{j \geq k} \hat p_{ij}^{(N)} \leq \sum_{j \geq k} \hat p_{i+1, j}^{(N)}\, & \text{for   $0 \leq k \leq N$ and  $1\leq i \leq m - 1$},\\
		\hat p_{ik}^{(N)} \geq 0, \, & \text{for $1 \leq i \leq m$ and $0\leq k\leq N$}.
		\label{ls3}
	\end{cases}
\end{equation}
For each matrix $\hat P^{(N)} = \{\hat p^{(N)}_{ik}\}_{1 \leq i \leq m, 0 \leq k \leq N} \in \mathcal{P}^{(N)}$, Proposition \ref{prop:generator} ensures the existence of a discrete gamma-distorted copula $C^X$ whose generator $\phi$ satisfies 
\begin{equation}\label{eq:phi_distribution_revised}
\p(\phi(F(T_i)) = k) = \hat p_{ik}^{(N)} \quad \text{for $1\leq i \leq m, 0 \leq k \leq N$}.
\end{equation}
Consequently, given $\mathbf{s} \in \mathbb{R}^{2M}$, we can abuse the notation of $v^{[a_l,b_l]}$, using $v^{[a_l,b_l]}(\hat P^{(N)}; \mathbf{s})$ to denote the value of $v^{[a_l,b_l]}(C^X; \mathbf{s})$ when the copula $C^X$ is a discrete gamma distorted copula whose generator satisfies \eqref{eq:phi_distribution_revised}.
Specifically, by substituting \eqref{qij_discrete} into \eqref{vc}, we have
\begin{equation}\label{vPdef}
v^{[a_l,b_l]}(\hat P^{(N)}; \mathbf{s}) = \sum_{i=1}^m \sum_{j=0}^n \sum_{k=0}^N  h_{jk}^{(N)} \lambda_i^{[a_l,b_l]}(\mathbf{s}) \beta_j^{[a_l,b_l]} \hat p_{ik}^{(N)} - \gamma^{[a_l, b_l]}(\mathbf{s}) \quad \text{for $1\leq l \leq M$}.
\end{equation}
Furthermore, we define $\mathbf{v}({\hat P^{(N)}; \mathbf{s}}) := (v^{[a_1, b_1]}(\hat P^{(N)}; \mathbf{s}), \ldots, v^{[a_M, b_M]}(\hat P^{(N)}; \mathbf{s})) \in \mathbb{R}^{M}$.

The following lemma shows that for each $\mathbf{s} \in \mathcal{A}^\circ$, there exists a sequence of matrices in $\mathcal{P}^{(N)}$ such that the corresponding values of $\mathbf{v}(\hat P^{(N)}; \mathbf{s})$ converge to the zero vector as $N$ goes to infinity.

\begin{Lemma}\label{lemma:convergence}
	For each $\mathbf{s} \in \mathcal{A}$, let $X_{\mathbf{s}}$ be a stochastic distortion such that its transformed copula $C^{X_{\mathbf{s}}}$ provides a perfect fit for $\mathbf{s}$. 
	Then there exists a sequence of matrices $\hat{P}^{(N)}(X_\mathbf{s}) \in \mathcal{P}^{(N)}$ for $N \in \mathbb{N}^+$ such that
	\begin{equation}\label{eq:convergence}
	\lim_{N\to +\infty}\|\mathbf{v}(\hat P^{(N)}(X_{\mathbf{s}}); \mathbf{s})\| = 0,
\end{equation}
    where $\|\cdot\|$ is the maximum norm.
\end{Lemma}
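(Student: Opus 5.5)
The plan is to discretize the given stochastic distortion $X_{\mathbf{s}}$ directly at the payment dates, choose a discrete generator whose law at each $F(T_i)$ approximates $N X_{\mathbf{s}}(F(T_i))$, and then show that the DPM of the resulting discrete gamma-distorted copula converges to the DPM of $C^{X_{\mathbf{s}}}$. Since by \eqref{vc} each $v^{[a_l,b_l]}$ is an affine function of the DPM with coefficients independent of $N$, and since $\mathbf{v}(C^{X_{\mathbf{s}}};\mathbf{s})=0$ by assumption, convergence of the DPMs entrywise gives \eqref{eq:convergence}.

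\textbf{Construction of $\hat P^{(N)}$.} Write $x_i := X_{\mathbf{s}}(F(T_i))$ for $1\le i\le m$. These satisfy $x_1\le\cdots\le x_m$ a.s., $x_i\in[0,1]$, and $\E[x_i]=F(T_i)$. Take $V\sim\UU(0,1)$ independent of $X_{\mathbf{s}}$, and use a common randomized rounding
\[
K_i := \lfloor N x_i + V\rfloor ,\qquad \hat p^{(N)}_{ik} := \p(K_i = k).
\]
This choice is meant to verify the four constraints of \eqref{ls3} simultaneously.
\begin{itemize}
\item Values: $K_i\in\{0,\ldots,N\}$, because $V<1$.
\item Mean: conditionally on $x_i$, $\E[\lfloor N x_i+V\rfloor\mid x_i]=N x_i$, so $\E[K_i]=NF(T_i)$.
\item Monotonicity: the map $x\mapsto\lfloor Nx+V\rfloor$ is non-decreasing and the same $V$ is used for every $i$, so $K_i\le K_{i+1}$ pointwise. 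This gives the stochastic-dominance inequalities $\sum_{j\ge k}\hat p_{ij}\le\sum_{j\ge k}\hat p_{i+1,j}$.
\end{itemize}
Hence $\hat P^{(N)}\in\mathcal{P}^{(N)}$. Satisfying the exact mean constraint and the monotonicity constraint at the same time is the step I expect to be most delicate; the common-$V$ rounding is what handles it. A deterministic rounding would break the mean constraint, while independent roundings for different $i$ would break monotonicity.

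\textbf{Convergence of the DPM.} Let $f_j(x):=\binom{n}{j}x^j(1-x)^{n-j}$. By \eqref{eq:qij_cond_iid}, the target DPM is $q_{ij}=\E[f_j(x_i)]$. By \eqref{qij_discrete} and the Beta structure in the proof of Proposition \ref{propx}, $h^{(N)}_{jk}=\E[f_j(B_k)]$ with $B_k\sim\Beta(k,N-k)$, taking $B_0\equiv0$ and $B_N\equiv1$. Therefore $q^{(N)}_{ij}=\E[f_j(B_{K_i})]$, where $B_{K_i}$ is drawn conditionally on $K_i$.

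Each $f_j$ is Lipschitz on $[0,1]$ with some constant $L_n$ depending only on $n$. Hence
\[
|q^{(N)}_{ij}-q_{ij}|\le L_n\bigl(\E|B_{K_i}-K_i/N|+\E|K_i/N-x_i|\bigr).
\]
The second term is at most $1/N$. The first term is at most $\sqrt{\mathrm{Var}(B_k)}$ uniformly in $k$, and $\mathrm{Var}(B_k)=k(N-k)/(N^2(N+1))\le 1/(4(N+1))$. So $\max_{i,j}|q^{(N)}_{ij}-q_{ij}|=O(N^{-1/2})$.

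\textbf{Conclusion.} By \eqref{vPdef} and \eqref{vc},
\[
|v^{[a_l,b_l]}(\hat P^{(N)};\mathbf{s})| = |v^{[a_l,b_l]}(\hat P^{(N)};\mathbf{s})-v^{[a_l,b_l]}(C^{X_{\mathbf{s}}};\mathbf{s})| \le \sum_{i,j}|\lambda_i^{[a_l,b_l]}(\mathbf{s})\beta_j^{[a_l,b_l]}|\,|q^{(N)}_{ij}-q_{ij}|.
\]
The right-hand side tends to $0$ as $N\to+\infty$ uniformly over the finitely many $l$, which proves \eqref{eq:convergence}.
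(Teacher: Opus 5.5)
Your proof is correct. Its first half is the paper's construction written as a coupling. For fixed $y\in[0,1]$ one has $\p(\lfloor Ny+V\rfloor=k)=(1-|Ny-k|)^+$, so your $\hat p^{(N)}_{ik}$ coincide exactly with the paper's tent-function weights $\int_{((k-1)/N,(k+1)/N)}(1-|Ny-k|)\,\diff G_{X_{\mathbf{s}}(F(T_i))}(y)$. Likewise, the non-decreasing functions $f^{(N)}_k$ that the paper uses for the monotonicity constraint are precisely $y\mapsto\p(\lfloor Ny+V\rfloor\ge k)$. Your common-$V$ coupling just makes that argument pointwise ($K_i\le K_{i+1}$) instead of in expectation.

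The genuine difference is in the convergence step. The paper first proves $X^{(N)}_{\mathbf{s}}(F(T_i))\to X_{\mathbf{s}}(F(T_i))$ in distribution. It does this by sandwiching the Beta-mixture CDF at each continuity point $y_0$ between Chebyshev bounds, built on the same variance estimate $\le 1/(4(N+1))$ that you use, with an $\epsilon$--$\delta$ argument. It then passes to expectations of the bounded continuous functions $y^j(1-y)^{n-j}$.

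You instead couple $B_{K_i}$, which has the law of $X^{(N)}_{\mathbf{s}}(F(T_i))$, directly with $x_i$, and use that the $f_j$ are Lipschitz. This amounts to the Wasserstein-$1$ estimate $\E|B_{K_i}-x_i|\le 1/N+1/(2\sqrt{N+1})$. What your route buys:
\begin{itemize}
\item It is shorter and needs no continuity-point bookkeeping.
\item It still implies the paper's convergence in distribution.
\item It gives an explicit $O(N^{-1/2})$ rate for $\|\mathbf{v}(\hat P^{(N)};\mathbf{s})\|$. The constants depend only on $n$ and the coefficients $\lambda_i^{[a_l,b_l]}(\mathbf{s})\beta_j^{[a_l,b_l]}$, not on the distortion $X_{\mathbf{s}}$.
\end{itemize}
For instance, this uniform rate would make the threshold $N_0$ in Proposition~\ref{prop_interior} explicit in terms of the radius of the ball around $\mathbf{s}$ contained in $\mathcal{A}$. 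The perturbed $\tilde{\mathbf{s}}$ there has the same spreads, hence the same $\lambda$'s. The paper's qualitative argument only yields existence of $N_0$.
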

\noindent{\it Proof.} 
The proof is divided into three steps.
First, we construct the matrices $\hat{P}^{(N)}(X_\mathbf{s}) \in \mathcal{P}^{(N)}$ for $N \in \mathbb{N}^+$ based on the stochastic distortion $X_{\mathbf{s}}$.
Second, we show that these matrices can be used to construct gamma distortions $X_\mathbf{s}^{(N)}$ for $N \in \mathbb{N}^+$, and that $X_\mathbf{s}^{(N)}(F(T_i))$ converges in distribution to $X_{\mathbf{s}}(F(T_i))$ as $N$ goes to $+\infty$ for $1 \leq i \leq m$.
Third, we prove the convergence result \eqref{eq:convergence}.

\vspace{0.2cm}
\noindent\textit{Step 1. Constructing matrices $\hat{P}^{(N)}(X_\mathbf{s}) \in \mathcal{P}^{(N)}$ for $N \in \mathbb{N}^+$.}
For each $N \in \mathbb{N}^+$, we define the matrix $\hat{P}^{(N)}(X_\mathbf{s}) = \{\hat p_{ik}^{(N)}(X_\mathbf{s})\}_{1 \leq i \leq m, 0 \leq k \leq N}$ as follows:
\[
\hat p_{ik}^{(N)}(X_\mathbf{s}) := \int_{((k-1)/ N, (k + 1)/ N)} (1 - |Ny - k|) \diff G_{X_\mathbf{s}(F(T_i))}(y) \quad \text{for $0 \leq k \leq N$ and $1 \leq i \leq m$},
\]
where $G_{X_\mathbf{s}(F(T_i))}$ is the distribution function of the random variable $X_\mathbf{s}(F(T_i))$.
We shall verify that $\hat{P}^{(N)}(X_\mathbf{s}) \in \mathcal{P}^{(N)}$. 
To this end, we check the four sets of constraints in \eqref{ls3} one by one. 

First, we note the identity
\begin{equation}
	\label{identity1}
\sum_{k=0}^N \I_{\{(k-1)/ N < y < (k + 1)/ N\}} (1 - |Ny - k|) = 1 \quad \text{for $y \in [0, 1]$ and  $N \in \mathbb{N}^+$}.
\end{equation}
We can verify the first set of linear constraints as follows:
\begin{align*}
	\sum_{k=0}^N \hat p_{ik}^{(N)}(X_\mathbf{s}) &= \sum_{k=0}^N \int_{((k-1)/ N, (k + 1)/ N)} (1 - |Ny - k|) \diff G_{X_\mathbf{s}(F(T_i))}(y)\\
	&= \int_{[0, 1]}\sum_{k=0}^N \I_{\{(k-1)/ N < y < (k + 1)/ N\}} (1 - |Ny - k|) \diff G_{X_\mathbf{s}(F(T_i))}(y)\\
	&= \int_{[0, 1]} \diff G_{X_\mathbf{s}(F(T_i))}(y) = 1 \quad \text{for $1 \leq i \leq m$},
\end{align*}
where the first equality follows from the definitions of $ \{\hat p_{ik}^{(N)}(X_\mathbf{s})\}_{1 \leq i \leq m, 0 \leq k \leq N}$ and the third equality holds due to \eqref{identity1}.
Similarly, for the second set of linear constraints, we note the identity
\[
	\sum_{k=0}^N \frac{k}{N} \I_{\{(k-1)/ N < y < (k + 1)/ N\}} (1 - |Ny - k|) = y \quad \text{for $y \in [0, 1]$ and $N \in \mathbb{N}^+$}.
\]
Then it follows that 
\begin{align*}
	\sum_{k=0}^N k \hat p_{ik}^{(N)}(X) &= \sum_{k=0}^N k\int_{((k-1)/ N, (k + 1)/ N)} (1 - |Ny - k|) \diff G_{X_\mathbf{s}(F(T_i))}(y) \\
	&= N \int_{[0, 1]}\sum_{k=0}^N \frac{k}{N} \I_{\{(k-1)/ N < y < (k + 1)/ N\}}  (1 - |Ny - k|) \diff G_{X_\mathbf{s}(F(T_i))}(y)\\
	&= N\int_{[0, 1]} y \diff G_{X_\mathbf{s}(F(T_i))}(y) = N \E[X_\mathbf{s}(F(T_i))] = N F(T_i) \quad \text{for } 1 \leq i \leq m.
\end{align*}
To verify the third set of linear constraints, we define
\[
f_k^{(N)}(y) := (Ny + 1 - k)\I_{\{(k-1)/N < y < k / N\}} + \I_{\{k / N \leq y\}} \quad \text{for $0 \leq k \leq N$ and $N \in \mathbb{N}^+$}.
\]
Using this definition, we have
\begin{align*}
	\sum_{j \geq k} \hat p_{ij}^{(N)}(X)
	&= \sum_{j\geq k} \int_{((j - 1)/ N, (j + 1)/ N)} (1 - |Ny - j|) \diff G_{X_\mathbf{s}(F(T_i))}(y)\\
	&= \int_{((k - 1)/N, k/N)} (Ny + 1 - k) \diff G_{X_\mathbf{s}(F(T_i))}(y) + \int_{[k/N, 1]} \diff G_{X_\mathbf{s}(F(T_i))}(y)\\
	&= \E [f_k^{(N)}(X_\mathbf{s}(F(T_i)))] \quad \text{for $0 \leq k \leq N$ and $1 \leq i \leq m$}.
\end{align*}
Since $f_k^{(N)}(y)$ is a non-decreasing function of $y$ and $X_\mathbf{s}(F(T_i)) \leq X_\mathbf{s}(F(T_{i + 1}))$ a.s., it follows that 
\[
\sum_{j \geq k} \hat p_{i+1, j}^{(N)}(X_\mathbf{s}) - \sum_{j \geq k} \hat p_{ij}^{(N)}(X_\mathbf{s}) = \E [f_k^{(N)}(X_\mathbf{s}(F(T_{i + 1})))] - \E [f_k^{(N)}(X_\mathbf{s}(F(T_i)))] \geq 0
\]
for $0 \leq k \leq N$ and $1 \leq i \leq m-1$.
Finally, the fourth set of linear constraints (i.e., the non-negativity constraints) are satisfied trivially since all integrands are non-negative.
We have verified all the four sets of linear constraints in \eqref{ls3}.
Thus, we obtain that $\hat P^{(N)}(X_\mathbf{s}) \in \mathcal{P}^{(N)}$.

\vspace{0.2cm}
\noindent
\textit{Step 2. Constructing gamma distortions $X_\mathbf{s}^{(N)}$ for $N \in \mathbb{N}^+$ and showing the convergence in distribution.}
Since $\hat P^{(N)}(X_\mathbf{s}) \in \mathcal{P}^{(N)}$ for $N \in \mathbb{N}^+$, Proposition \ref{prop:generator} ensures the existence of a discrete gamma distortion $X_\mathbf{s}^{(N)}$ whose generator $\phi_\mathbf{s}^{(N)}$ satisfies
\begin{equation}\label{eq:phi_distribution_N}
\p(\phi_\mathbf{s}^{(N)}(F(T_i)) = k) = \hat p_{ik}^{(N)}(X_\mathbf{s}) \quad \text{for $1\leq i \leq m$ and $0 \leq k \leq N$}.
\end{equation}
We shall show that 
\begin{equation}\label{eq:distribution_convergence}
X_\mathbf{s}^{(N)}(F(T_i)) \xrightarrow{d} X_\mathbf{s}(F(T_i)) \quad \text{as $N \to +\infty$ for $1 \leq i \leq m$}.
\end{equation}
To this end, we fix $1 \leq i \leq m$, and let $y_0 \in [0, 1]$ be an arbitrary continuity point of the distribution function $G_{X_\mathbf{s}(F(T_i))}$. We need to prove that 
\begin{equation}\label{eq:cdf_convergence}
\lim_{N\to+\infty}G_{X_\mathbf{s}^{(N)}(F(T_i))}(y_0) = G_{X_\mathbf{s}(F(T_i))}(y_0).
\end{equation}

Note that $X_{\mathbf{s}}^{(N)}(F(T_i))$ follows a Beta mixture distribution with mixing weight $\hat p_{ik}^{(N)}(X_{\mathbf{s}})$ on the $\Beta(k, N - k)$ component for each $0 \leq k \leq N$, where $\Beta(\alpha, \beta)$ denotes a Beta distribution with parameters $\alpha > 0$ and $\beta > 0$.
Then the cumulative distribution function (CDF) of $X_\mathbf{s}^{(N)}(F(T_i))$ is given by
\begin{equation}\label{G_XN_distribution}
G_{X_\mathbf{s}^{(N)}(F(T_i))}(y_0) = \sum_{k=0}^N \hat p_{ik}^{(N)}(X_\mathbf{s}) F^B(y_0; k, N - k),
\end{equation}
where $F^B(\cdot; \alpha, \beta)$ denotes the CDF of the $\Beta(\alpha, \beta)$ distribution for $\alpha > 0$ and $\beta > 0$. 
For the degenerate cases, $F^{B}(y; 0, \beta) := \I_{\{y \geq 0\}}$ and $F^{B}(y; \alpha, 0) := \I_{\{y \geq 1\}}$ represent the degenerate distributions on $0$ and $1$, respectively.
Since $y_0$ is a continuity point, for any $\epsilon > 0$, there exists $\delta > 0$ such that 
\begin{equation}\label{eq:continuity}
	|G_{X_\mathbf{s}(F(T_i))}(y) - G_{X_\mathbf{s}(F(T_i))}(y_0)| < \epsilon \text{ for all $|y - y_0| \leq \delta$}.
\end{equation}

To analyse the limit of the right hand side of \eqref{G_XN_distribution} as $N$ goes to $+\infty$, we shall first establish upper and lower bounds for $F^{B}(x; \alpha, \beta)$. 
Let $W_{\alpha, \beta} \sim \mathrm{Beta}(\alpha, \beta)$. 
The mean and variance of $W_{\alpha, \beta}$ are given by $\frac{\alpha}{\alpha + \beta}$ and $\frac{\alpha\beta}{(\alpha + \beta)^2(\alpha + \beta + 1)}$, respectively.
For $x < \frac{\alpha}{\alpha + \beta}$, applying Chebyshev's inequality yields an upper bound for $F^B(x; \alpha, \beta)$.
\begin{align}\label{chebyshev_upper_bound}
	F^B(x; \alpha, \beta) 
	= & \ \p (W_{\alpha, \beta} \leq x) 
	=  \p \left(W_{\alpha, \beta} - \frac{\alpha}{\alpha + \beta} \leq x - \frac{\alpha}{\alpha + \beta} \right) \notag\\
	\leq &\ \frac{\operatorname{Var}(W_{\alpha, \beta})}{\left(x - \frac{\alpha}{\alpha + \beta}\right)^2} 
	=  \frac{\alpha\beta}{\left(x - \frac{\alpha}{\alpha + \beta}\right)^2(\alpha + \beta)^2(\alpha + \beta + 1)} \notag \\
	\leq &\ \frac{1}{4\left(x - \frac{\alpha}{\alpha + \beta}\right)^2(\alpha + \beta + 1)}.
\end{align}
Similarly, for $x > \frac{\alpha}{\alpha + \beta}$, Chebyshev's inequality implies a lower bound for $F^B(x; \alpha, \beta)$.
\begin{align}\label{chebyshev_lower_bound}
1 - F^B(x; \alpha, \beta) &= \p(W_{\alpha, \beta} > x) = \p\left(W_{\alpha, \beta} - \frac{\alpha}{\alpha + \beta} > x - \frac{\alpha}{\alpha + \beta}\right) \notag\\
&\leq \frac{\operatorname{Var}(W_{\alpha, \beta})}{\left(x - \frac{\alpha}{\alpha + \beta}\right)^2} = \frac{\alpha\beta}{\left(x - \frac{\alpha}{\alpha + \beta}\right)^2(\alpha + \beta)^2(\alpha + \beta + 1)} \notag \\
&\leq \frac{1}{4\left(x - \frac{\alpha}{\alpha + \beta}\right)^2(\alpha + \beta + 1)}.
\end{align}
Note that \eqref{chebyshev_upper_bound} and \eqref{chebyshev_lower_bound} also hold for the degenerate cases when either $\alpha = 0$ or $\beta = 0$.

Now, we are ready to establish a lower bound for the right hand side of \eqref{G_XN_distribution}.
Define $K_0 = \lfloor N(y_0 - \delta/2) \rfloor$. 
Truncating the summation in \eqref{G_XN_distribution} at $K_0$ 
, we can obtain
\begin{align}\label{eq:lower_bound_GXN}
	G_{X_\mathbf{s}^{(N)}(F(T_i))}(y_0)
	&\geq \sum_{k=0}^{K_0} \hat p_{ik}^{(N)}(X_\mathbf{s}) F^B(y_0; k, N - k) \geq \sum_{k=0}^{K_0} \hat p_{ik}^{(N)}(X_\mathbf{s})\left(1 - \frac{1}{4(N + 1)(y_0 - k / N)^2}\right) \notag \\
	&\geq \left(1 - \frac{1}{(N + 1)\delta^2}\right) \sum_{k=0}^{K_0} \hat p_{ik}^{(N)}(X_\mathbf{s}),
\end{align}
where the second inequality follows from \eqref{chebyshev_lower_bound} and the last inequality follows from the fact that $y_0 - k / N \geq \delta / 2$ for $k \leq K_0$.

Using the inequality
\[
	\sum_{k = 0}^{K_0} \I_{\{(k-1)/ N < y < (k + 1)/ N\}} (1 - |Ny - k|) \geq \I_{\{0 \leq y \leq K_0 / N\}} \quad \text{for $y \in [0, 1]$ and $N \in \mathbb{N}^+$}
\]
and exchanging the order of integration and summation, we yield the inequality
\begin{align}\label{sum_pk_lower_bound}
\sum_{k=0}^{K_0} \hat p_{ik}^{(N)}(X_\mathbf{s}) 
=& \sum_{k=0}^{K_0} \int_{(\frac{k-1}{N}, \frac{k+1}{N})} (1 - |Ny - k|) \diff G_{X_\mathbf{s}(F(T_i))}(y) \notag\\
=& \int_{[0, 1]} \sum_{k=0}^{K_0} \I_{\{(k-1)/ N < y < (k + 1)/ N\}} (1 - |Ny - k|) \diff G_{X_\mathbf{s}(F(T_i))}(y) \notag \\
\geq& \int_{[0, K_0 / N]} \diff G_{X_\mathbf{s}(F(T_i))}(y) = G_{X_\mathbf{s}(F(T_i))}\left(\frac{K_0}{N}\right).
\end{align}

For $N > \max\{2 / \delta, 1 / \delta^2\}$, it holds that $1 - \frac{1}{(N + 1)\delta^2} > 0$, and the argument of the CDF in the right hand side of \eqref{sum_pk_lower_bound} satisfies 
\begin{equation}\label{eq:K0_bound}
\frac{K_0}{N} > \frac{N(y_0 - \delta / 2) - 1}{N} = y_0 - \delta/2 - 1 / N > y_0 - \delta.
\end{equation}
Since $G_{X_\mathbf{s}(F(T_i))}$ is non-decreasing, using \eqref{eq:continuity} and \eqref{eq:K0_bound} yields
\begin{equation}\label{G_sum_pk_bound}
	G_{X_\mathbf{s}(F(T_i))}\left(\frac{K_0}{N}\right) \geq G_{X_\mathbf{s}(F(T_i))}(y_0 - \delta) \geq G_{X_\mathbf{s}(F(T_i))}(y_0) - \epsilon.
\end{equation}

Combining the results in \eqref{eq:lower_bound_GXN}, \eqref{sum_pk_lower_bound} and \eqref{G_sum_pk_bound}, we find the following lower bound for $G_{X_\mathbf{s}^{(N)}(F(T_i))}(y_0)$:
\begin{align*}
	G_{X^{(N)}(F(T_i))}(y_0)
	&\geq \left(1 - \frac{1}{(N + 1)\delta^2}\right) (G_{X_\mathbf{s}(F(T_i))}(y_0) - \epsilon).
\end{align*}
By taking the limit inferior as $N$ goes to $+\infty$, it follows that 
\begin{equation}
	\liminf_{N \to +\infty} G_{X_\mathbf{s}^{(N)}(F(T_i))}(y_0) \geq G_{X_\mathbf{s}(F(T_i))}(y_0) - \epsilon.
	\label{lower_lim}
\end{equation}

Next, we establish an upper bound for the right hand side of \eqref{G_XN_distribution}.
To this end, we define $K_1 = \lceil N(y_0 + \delta/2) \rceil$, and split the summation in \eqref{G_XN_distribution} into two parts ($k \leq K_1 - 1$ and $k \geq K_1$) and use the trival upper bound $F^B(y_0; k, N-k) \leq 1$ for the first part to obtain
\begin{align}
	G_{X^{(N)}(F(T_i))}(y_0) 
	&= \sum_{k=0}^{N} \hat p_{ik}^{(N)}(X_\mathbf{s}) F^B(y_0; k, N - k)\notag \\
	&\leq \sum_{k = 0}^{K_1 - 1} \hat p_{ik}^{(N)}(X_\mathbf{s}) + \sum_{k = K_1}^N \hat p_{ik}^{(N)}(X_\mathbf{s}) F^B(y_0; k, N - k).
	\label{G_XN_split}
\end{align}

Using the inequality 
\[
	\sum_{k = 0}^{K_1 - 1} \I_{\{(k-1)/ N < y < (k + 1)/ N\}} (1 - |Ny - k|) \leq \I_{\{0 \leq y \leq K_1 / N\}} \quad \text{for $y \in [0, 1]$ and $N \in \mathbb{N}^+$},
\]
we find that the first sum in \eqref{G_XN_split} is bounded by
\begin{align}\label{sum_pk_upper_bound}
\sum_{k=0}^{K_1 - 1} \hat p_{ik}^{(N)}(X_\mathbf{s}) 
= & \sum_{k=0}^{K_1 - 1} \int_{((k-1)/ N, (k + 1)/ N)} (1 - |Ny - k|) \diff G_{X_\mathbf{s}(F(T_i))}(y) \notag\\
=& \int_{[0, 1]} \sum_{k=0}^{K_1 - 1} \I_{\{(k-1)/ N < y < (k + 1)/ N\}} (1 - |Ny - k|) \diff G_{X_\mathbf{s}(F(T_i))}(y) \notag \\
\leq& \int_{[0, K_1/N]} \diff G_{X_\mathbf{s}(F(T_i))}(y) = G_{X_\mathbf{s}(F(T_i))}\left(\frac{K_1}{N}\right).
\end{align}
For $N > 2/\delta$, the argument of the CDF on the right hand side of \eqref{sum_pk_upper_bound} satisfies 
\begin{equation}\label{eq:K1_bound}
\frac{K_1}{N} < \frac{N(y_0 + \delta / 2) + 1}{N} = y_0 + \delta / 2 + 1 / N < y_0+\delta. 
\end{equation}
Consequently, we can bound the first sum in \eqref{G_XN_split} as
\begin{equation}\label{G_sum_pk_upper_bound}
	\sum_{k = 0}^{K_1 - 1} \hat p_{ik}^{(N)}(X_\mathbf{s}) \leq G_{X_\mathbf{s}(F(T_i))}\left(\frac{K_1}{N}\right) \leq G_{X_\mathbf{s}(F(T_i))}(y_0 + \delta) \leq G_{X_\mathbf{s}(F(T_i))}(y_0) + \epsilon,
\end{equation}
where the first inequality follows from \eqref{sum_pk_upper_bound}, the second inequality follows from \eqref{eq:K1_bound}, 
and the last inequality follows from \eqref{eq:continuity}.

For the second sum in \eqref{G_XN_split}, we use the upper bound established in \eqref{chebyshev_upper_bound} to obtain
\[
\sum_{k = K_1}^N \hat p_{ik}^{(N)}(X_\mathbf{s}) F^B(y_0; k, N - k) 
\leq \sum_{k = K_1}^N \hat p_{ik}^{(N)}(X_\mathbf{s}) \frac{1}{4(N+1)(k/N-y_0)^2}.
\]
Noticing that the condition $k \geq K_1$ implies $k/N - y_0 \geq \delta/2$ and that $\sum_{k = K_1}^N \hat p_{ik}^{(N)}(X_\mathbf{s}) \leq 1$, we can bound the second sum in \eqref{G_XN_split} as
\begin{equation}\label{second_sum_upper_bound}
\sum_{k = K_1}^N \hat p_{ik}^{(N)}(X_\mathbf{s}) F^B(y_0; k, N - k) \leq \frac{1}{(N+1)\delta^2}.
\end{equation}
Combining the bounds in \eqref{G_sum_pk_upper_bound} and \eqref{second_sum_upper_bound}, we can obtain the following upper bound for $G_{X_\mathbf{s}^{(N)}(F(T_i))}(y_0)$:
\begin{align*}
	G_{X_\mathbf{s}^{(N)}(F(T_i))}(y_0) \leq G_{X_\mathbf{s}(F(T_i))}(y_0) + \epsilon + \frac{1}{(N + 1)\delta^2}.
\end{align*}
By taking the limit superior as $N$ goes to $+\infty$, it follows that
\begin{equation}
	\limsup_{N \to +\infty} G_{X_\mathbf{s}^{(N)}(F(T_i))}(y_0) \leq G_{X_\mathbf{s}(F(T_i))}(y_0) + \epsilon.
	\label{upper_lim}
\end{equation}

Since $\epsilon > 0$ is arbitrary, \eqref{lower_lim} and \eqref{upper_lim} imply the convergence in \eqref{eq:cdf_convergence}.
This completes the proof of the convergence in distribution in \eqref{eq:distribution_convergence}.

\vspace{0.2cm}
\noindent
\textit{Step 3. Proving the convergence in \eqref{eq:convergence}.}
Recall that conditional on $\phi_{\mathbf{s}}^{(N)}(F(T_i)) = k$, $X_{\mathbf{s}}^{(N)}(F(T_i))$ follows a $\Beta(k, N - k)$ distribution for $0 \leq k \leq N$.
By the property of the Beta distribution and the definition of $h_{jk}^{(N)}$, we have 
\[
	\binom{n}{j} \E \left[(X_{\mathbf{s}}^{(N)}(F(T_i)))^j(1 - X_{\mathbf{s}}^{(N)}(F(T_i)))^{n - j} \big| \phi_{\mathbf{s}}^{(N)}(F(T_i)) = k\right] = h_{jk}^{(N)}
\]
for $0 \leq j \leq n$ and $0 \leq k \leq N$.
Conditional on $\phi_{\mathbf{s}}^{(N)}(F(T_i))$, we can derive from the law of total expectation that
\[
\binom{n}{j} \E \left[X_{\mathbf{s}}^{(N)}(F(T_i))^j(1 - X_{\mathbf{s}}^{(N)}(F(T_i)))^{n - j}\right] = \sum_{k=0}^{N} h_{jk}^{(N)} \hat p_{ik}^{(N)}(X_{\mathbf{s}}) \quad \text{for $0 \leq j \leq n$}.
\]
Substituting this identity into \eqref{vPdef}, we can obtain
\begin{align}
    v^{[a_l,b_l]}(\hat P^{(N)}(X_{\mathbf{s}}); \mathbf{s}) 
    =&\ \sum_{i=1}^m \sum_{j=0}^n  \lambda_i^{[a_l,b_l]}(\mathbf{s}) \beta_j^{[a_l,b_l]} \binom{n}{j} \E \left[X^{(N)}_{\mathbf{s}}(F(T_i))^j(1 - X^{(N)}_{\mathbf{s}}(F(T_i)))^{n - j}\right] \notag \\
    &\ -  \gamma^{[a_l, b_l]}(\mathbf{s}) \quad \text{for } 1\leq l \leq M. \label{eq:vn_expectation}
\end{align}
We have shown in Step 2 that $X^{(N)}_{\mathbf{s}}(F(T_i))$ converges in distribution to $X_{\mathbf{s}}(F(T_i))$ as $N$ goes to $+\infty$ for $1 \leq i \leq m$. 
Since the function $y \mapsto y^j (1-y)^{n-j}$ is continuous and bounded on $[0, 1]$, the convergence in distribution implies the convergence of expectations.
\begin{equation}\label{eq:expectation_convergence}
\lim_{N \to +\infty} \E \left[X^{(N)}_{\mathbf{s}}(F(T_i))^j(1 - X^{(N)}_{\mathbf{s}}(F(T_i)))^{n - j}\right] 
= \E \left[X_{\mathbf{s}}(F(T_i))^j(1 - X_{\mathbf{s}}(F(T_i)))^{n - j}\right].
\end{equation}
Letting $N$ go to $+\infty$ in \eqref{eq:vn_expectation} and using \eqref{eq:expectation_convergence} yields
\begin{align}\label{eq:temp}
&\lim_{N \to +\infty} v^{[a_l,b_l]}(\hat P^{(N)}(X_{\mathbf{s}}); \mathbf{s}) \notag \\
=&\ \sum_{i=1}^m \sum_{j=0}^n \lambda_i^{[a_l,b_l]}(\mathbf{s}) \beta_j^{[a_l,b_l]} \binom{n}{j} \E \left[X_{\mathbf{s}}(F(T_i))^j(1 - X_{\mathbf{s}}(F(T_i)))^{n - j}\right] - \gamma^{[a_l, b_l]}(\mathbf{s})
\end{align}
for $1 \leq l \leq M$.
Note that the right hand side of \eqref{eq:temp} is exactly $v^{[a_l, b_l]}(C^{X_{\mathbf{s}}}; \mathbf{s})$ given in \eqref{distortion_v}.
Furthermore, since $C^{X_{\mathbf{s}}}$ achieves a perfect fit to the market prices $\mathbf{s}$, we have $v^{[a_l, b_l]}(C^{X_{\mathbf{s}}}; \mathbf{s}) = 0$ for all $l=1, \ldots, M$. 
Consequently, we have proved the convergence in \eqref{eq:convergence}.
\hfill $\square$

\vspace{0.2cm}
Define $\mathcal{O}^{I}$ to be the open orthant in $\mathbb{R}^M$ as follows: 
\[
	\mathcal{O}^{I} := \{\mathbf{x} = (x_1, \ldots, x_M) \in \mathbb{R}^M: x_l > 0 \text{ for } l \in I, \, x_l < 0 \text{ for } l \notin I\} \quad \text{for } I \subseteq \{1, \ldots, M\}.
\]
The following lemma characterizes a sufficient condition for a convex set to contain the origin.
\begin{Lemma}
	\label{lemma_convex}
	Let $\mathcal{E}\subseteq \mathbb{R}^M$ be a convex set. 
	If $\mathcal{E} \cap \mathcal{O}^{I} \neq \emptyset$ for all $I \subseteq \{1, \ldots, M\}$,
	then $\mathbf{0}_M \in \mathcal{E}$.  
\end{Lemma}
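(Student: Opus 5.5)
The plan is to argue by contradiction using a separating hyperplane, and then to pick the orthant on which that hyperplane's linear functional is strictly negative. Suppose $\mathcal{E}$ is convex and meets every open orthant $\mathcal{O}^I$, but $\mathbf{0}_M \notin \mathcal{E}$.

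The first step is to produce a nonzero $\mathbf{w} \in \mathbb{R}^M$ with $\mathbf{w}'\mathbf{x} \geq 0$ for all $\mathbf{x} \in \mathcal{E}$. Since $\mathcal{E}$ need not be closed, I split into two cases.
\begin{itemize}
\item If $\mathbf{0}_M \notin \overline{\mathcal{E}}$, strict separation of a point from a closed convex set gives such a $\mathbf{w}$ directly.
\item If $\mathbf{0}_M \in \overline{\mathcal{E}} \setminus \mathcal{E}$, I use the finite-dimensional fact that $\mathrm{ri}(\overline{\mathcal{E}}) = \mathrm{ri}(\mathcal{E}) \subseteq \mathcal{E}$. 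This shows $\mathbf{0}_M$ is not in the interior of $\overline{\mathcal{E}}$, so it lies on the boundary of $\overline{\mathcal{E}}$. The supporting hyperplane theorem then gives a nonzero $\mathbf{w}$ with $\mathbf{w}'\mathbf{x} \geq \mathbf{w}'\mathbf{0}_M = 0$ on $\overline{\mathcal{E}}$, hence on $\mathcal{E}$.
\end{itemize}
This is the only step with any subtlety, namely the non-closedness of $\mathcal{E}$, and the relative-interior fact disposes of it.

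The second step chooses the orthant. Take $I := \{l : w_l < 0\}$. For any $\mathbf{x} \in \mathcal{O}^I$:
\begin{itemize}
\item for $l \in I$ we have $x_l > 0$ and $w_l < 0$, so $w_l x_l < 0$;
\item for $l \notin I$ we have $w_l \geq 0$ and $x_l < 0$, so $w_l x_l \leq 0$, with strict inequality whenever $w_l > 0$.
\end{itemize}
Because $\mathbf{w} \neq \mathbf{0}$, some $w_l \neq 0$, and that coordinate contributes a strictly negative term. Hence $\mathbf{w}'\mathbf{x} < 0$ for every $\mathbf{x} \in \mathcal{O}^I$.

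By hypothesis $\mathcal{E} \cap \mathcal{O}^I \neq \emptyset$, so there is a point of $\mathcal{E}$ with $\mathbf{w}'\mathbf{x} < 0$. This contradicts the first step, and therefore $\mathbf{0}_M \in \mathcal{E}$.
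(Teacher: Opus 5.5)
Your proof is correct and follows the paper's argument. You separate $\mathbf{0}_M$ from $\mathcal{E}$ by a nonzero $\mathbf{w}$ with $\mathbf{w}'\mathbf{x}\ge 0$ on $\mathcal{E}$, then take the orthant indexed by $\{l: w_l<0\}$ to obtain a point of $\mathcal{E}$ with $\mathbf{w}'\mathbf{x}<0$; the only difference is that you justify the non-strict separation for a possibly non-closed $\mathcal{E}$ (via $\mathrm{ri}(\overline{\mathcal{E}})=\mathrm{ri}(\mathcal{E})$ and the supporting hyperplane theorem), whereas the paper simply cites the finite-dimensional hyperplane separation theorem.
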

{\it Proof.}
We use proof by contradiction. 
Suppose that $\mathbf{0}_M \notin \mathcal{E}$. 
By the hyperplane separation theorem, there exists a nonzero vector $\mathbf{x} = (x_1, \ldots, x_M) \in \mathbb{R}^M$ such that $\mathbf{x} \cdot \mathbf{z} \geq 0$ for all $\mathbf{z} \in \mathcal{E}$. 
Define $I_{\mathbf{x}} := \{l: x_l < 0\}$. 
By hypothesis, there exists a vector $\mathbf{y} = (y_1, \ldots, y_M) \in \mathcal{E} \cap \mathcal{O}^{I_\mathbf{x}}$. 
By the definition of $\mathcal{O}^{I_\mathbf{x}}$, we have $x_l y_l \leq 0$ for all $1 \leq l \leq M$. 
Since $\mathbf{x} \neq \mathbf{0}_M$ and all components of $\mathbf{y}$ are nonzero, the dot product $\mathbf{x} \cdot \mathbf{y}$ is strictly negative. 
This contradicts the fact that $\mathbf{x} \cdot \mathbf{y} \geq 0$. 
Therefore, we conclude that $\mathbf{0}_M \in \mathcal{E}$.
\hfill $\square$

\subsection{Proof of Proposition \ref{prop_interior}}

We are now ready to prove Proposition \ref{prop_interior}.
Consider an arbitrary vector $\mathbf{s} \in \mathcal{A}^\circ$. 
Define the set
\[
	\mathcal{E}^{(N)}(\mathbf{s}) = \{ \mathbf{v}({\hat P^{(N)}; \mathbf{s}}): \hat P^{(N)} \in \mathcal{P}^{(N)}\}\quad\text{for any $N\in\mathbb N^+$}.
\]
Note that the feasibility of \eqref{linearsystem2} for $N \in \mathbb{N}^+$ is equivalent to the condition that $\mathbf{0}_M \in \mathcal{E}^{(N)}(\mathbf{s})$. Therefore, proving Proposition \ref{prop_interior} is equivalent to showing that $\mathbf{0}_M \in \mathcal{E}^{(N)}(\mathbf{s})$ for all sufficiently large $N$.     
According to Lemma \ref{lemma_convex}, it suffices to show that for any $I \subseteq \{1, \ldots, M\}$, we have $\mathcal{E}^{(N)}(\mathbf{s}) \cap \mathcal{O}^{I} \neq \emptyset$ for all sufficiently large $N$. 


Since $\mathbf{s}$ lies in the interior $\mathcal{A}^\circ$, there exists an open ball centered at $\mathbf{s}$ and contained in $\mathcal{A}$. 
For sufficiently small $\delta_l > 0$ (for $1\leq l \leq M$), we define a perturbed price $\mathbf{\tilde s}$ lying within this ball as follows:
\[
\tilde s^{[a_l,b_l]} = s^{[a_l,b_l]}\ \, \text{for} \, 1\leq l \leq M; \quad
\tilde \uf^{[a_l,b_l]} = \uf^{[a_l,b_l]} + \delta_l\ \, \text{for } l \in I; \quad
\tilde \uf^{[a_l,b_l]} = \uf^{[a_l,b_l]} - \delta_l\ \, \text{for } l \notin I.
\]
Since $\lambda_i^{[a_l,b_l]}$ depends only on the tranche spread, this definition ensures that
\begin{equation}
\lambda_i^{[a_l,b_l]}(\mathbf{\tilde s}) = \lambda_i^{[a_l,b_l]}(\mathbf{s}) \quad \text{for  $1 \leq i \leq m$ and $1 \leq l \leq M$}. 
\label{lambda_perturb}
\end{equation}
Recalling the definition of $\gamma^{[a_l, b_l]}$ in \eqref{vspread_2}, we have 
\[
\gamma^{[a_l, b_l]}(\mathbf{\tilde s}) - \gamma^{[a_l, b_l]}(\mathbf{s}) = \begin{cases}
	\delta_l (b_l - a_l) > 0 &\text{for }  l \in I, \\
	-\delta_l (b_l - a_l) < 0 &\text{for } l \notin I.
\end{cases}
\]
Define $\epsilon = \frac{1}{2}\min_{1\leq l \leq M}\{|\gamma^{[a_l, b_l]}(\mathbf{\tilde s}) - \gamma^{[a_l, b_l]}(\mathbf{s})|\} > 0$.
Let $X_{\mathbf{\tilde s}}$ be a stochastic distortion whose transformed copula achieves a perfect fit to $\mathbf{\tilde s}$. Applying Lemma \ref{lemma:convergence} to $\mathbf{\tilde s}$ and $X_{\mathbf{\tilde s}}$ yields 
\begin{equation*}
	\|\mathbf{v}(\hat P^{(N)}(X_{\mathbf{\tilde s}}); \mathbf{\tilde s})\| \to 0 \quad \text{as $N \to +\infty$}.
\end{equation*}
Thus, there exists $N_0\in \mathbb{N}^+$ such that $\|\mathbf{v}(\hat P^{(N)}(X_{\mathbf{\tilde s}}); \mathbf{\tilde s})\| < \epsilon$ for all $N \geq N_0$.

Next, we consider $\mathbf{v}(\hat P^{(N)}(X_{\mathbf{\tilde s}}); \mathbf{s})$ for $N \geq N_0$. 
Recalling the definition of $v^{[a_l, b_l]}$ in \eqref{vPdef} and the identity \eqref{lambda_perturb}, we can obtain
\begin{align*}
	&v^{[a_l, b_l]}(\hat P^{(N)}(X_{\mathbf{\tilde s}}); \mathbf{s})\\ 
	=& \sum_{i=1}^m \sum_{j=0}^n \sum_{k=0}^N  h_{jk}^{(N)} \lambda_i^{[a_l,b_l]}(\mathbf{s}) \beta_j^{[a_l,b_l]} \hat p_{ik}^{(N)}(X_{\mathbf{\tilde s}}) - \gamma^{[a_l, b_l]}(\mathbf{s})\\
	=& \left[\sum_{i=1}^m \sum_{j=0}^n \sum_{k=0}^N  h_{jk}^{(N)} \lambda_i^{[a_l,b_l]}(\mathbf{\tilde s}) \beta_j^{[a_l,b_l]} \hat p_{ik}^{(N)}(X_{\mathbf{\tilde s}}) - \gamma^{[a_l, b_l]}(\mathbf{\tilde s})\right] + \gamma^{[a_l, b_l]}(\mathbf{\tilde s}) - \gamma^{[a_l, b_l]}(\mathbf{s})\\
	=& v^{[a_l, b_l]}(\hat P^{(N)}(X_{\mathbf{\tilde s}}); \mathbf{\tilde s}) + \gamma^{[a_l, b_l]}(\mathbf{\tilde s}) - \gamma^{[a_l, b_l]}(\mathbf{s}).
\end{align*}
For any $l \in I$, we have $\gamma^{[a_l, b_l]}(\mathbf{\tilde s}) - \gamma^{[a_l, b_l]}(\mathbf{s}) \geq 2\epsilon$, which ensures that
\[
v^{[a_l, b_l]}(\hat P^{(N)}(X_{\mathbf{\tilde s}}); \mathbf{s}) \geq -\epsilon + 2\epsilon = \epsilon > 0.
\]
Similarly, for any $l \notin I$, we have $\gamma^{[a_l, b_l]}(\mathbf{\tilde s}) - \gamma^{[a_l, b_l]}(\mathbf{s}) \leq -2\epsilon$, thereby yielding
\[
v^{[a_l, b_l]}(\hat P^{(N)}(X_{\mathbf{\tilde s}}); \mathbf{s}) \leq \epsilon - 2\epsilon = -\epsilon < 0.
\]
Thus we have shown that $\mathbf{v}(\hat P^{(N)}(X_{\mathbf{\tilde s}}); \mathbf{s}) \in \mathcal{O}^I$ for all $N \geq N_0$. In addition, for any $N\in\mathbb N^+$, we have $\mathbf{v}(\hat P^{(N)}(X_{\mathbf{\tilde s}}); \mathbf{s}) \in \mathcal{E}^{(N)}(\mathbf{s})$ because $\hat P^{(N)}(X_{\mathbf{\tilde s}}) \in \mathcal{P}^{(N)}$. Accordingly, we can obtain that  $\mathbf{v}(\hat P^{(N)}(X_{\mathbf{\tilde s}}); \mathbf{s}) \in \mathcal{E}^{(N)}(\mathbf{s}) \cap \mathcal{O}^I$ for all $N \geq N_0$. Therefore, we conclude that $\mathcal{E}^{(N)}(\mathbf{s}) \cap \mathcal{O}^I \neq \emptyset$ for all $N \geq N_0$.
This completes the whole proof of Proposition \ref{prop_interior}. $\hfill$ $\Box$

\section{Proofs of Proposition \ref{bound_convergence} and Proposition \ref{strong_bid-ask-corollary}}\label{bound_convergence_proof}

\noindent\textit{\textbf{Proof of Proposition \ref{bound_convergence}.}} 
We will focus only on the establishment of the convergence of the upper bound for the up-front payment, i.e.,
\[
\lim_{N \to +\infty} 
\overline{\uf}^{[a_{l+1}, b_{l+1}]}_N 
= \overline{\uf}^{[a_{l+1}, b_{l+1}]}.
\]
The convergence of the other three bounds follows by analogous arguments.

For each $N \in\mathbb N^+$, let $\mathcal{C}_{DG}^{(N)}$ denote the family of discrete gamma-distorted copulas with fixed parameter~$N$, and let $\mathcal{A}_N^{l+1}$ denote the set of price vectors of the first $(l+1)$ tranches that are compatible with respect to~$\mathcal{C}_{DG}^{(N)}$.  
Since $\mathcal{C}_{DG}^{(N)} \subseteq \mathcal{C}_1$, it follows that $\mathcal{A}_N^{l+1} \subseteq \mathcal{A}^{l+1}$ for all $N \in \mathbb{N}^+$ (recall that $\mathcal{A}^{l+1}$ denotes the set of strongly compatible market prices for the first $l+1$ tranches).

Moreover, by Proposition~\ref{prop_interior}, every interior point of $\mathcal{A}^{l+1}$ belongs to $\mathcal{A}_N^{l+1}$ for all sufficiently large~$N$.  
Hence, we can deduce that
\begin{equation}\label{inclusion_chain}
(\mathcal{A}^{l+1})^\circ 
\subseteq \liminf_{N \to +\infty} \mathcal{A}_N^{l+1}
\subseteq \limsup_{N \to +\infty} \mathcal{A}_N^{l+1}
\subseteq \mathcal{A}^{l+1}.
\end{equation}

For $x \in \mathbb{R}$, define the vector $\mathbf{s}(x)$ as
\[
\mathbf{s}(x) := (s^{[a_1,b_1]}, \uf^{[a_1,b_1]}, \ldots, s^{[a_l,b_l]}, \uf^{[a_l,b_l]}, s^{[a_{l+1},b_{l+1}]}, x).
\]
Furthermore, define 
\[
\mathcal{S}_N := \{x \in \mathbb{R} : \mathbf{s}(x) \in \mathcal{A}_N^{l+1}\}\quad\text{for any $N\in\mathbb N^+$} 
\quad\text{and}\quad
\mathcal{S} := \{x \in \mathbb{R} : \mathbf{s}(x) \in \mathcal{A}^{l+1}\}.
\] 
Using these notations, the upper bounds defined in  \eqref{eq:uf_ub_lb} and \eqref{eqs:upfront_LP} can be expressed as
\[
\overline{\uf}^{[a_{l+1}, b_{l+1}]}_N = \sup \mathcal{S}_N\quad\text{for any $N\in\mathbb N^+$} 
\quad\text{and}\quad 
\overline{\uf}^{[a_{l+1}, b_{l+1}]} = \sup \mathcal{S}.
\]
Given that 
$(s^{[a_1,b_1]}, \uf^{[a_1,b_1]}, \ldots, s^{[a_l,b_l]}, \uf^{[a_l,b_l]}) \in (\mathcal{A}^l)^\circ$, Proposition~\ref{prop_interior} ensures that $\mathcal{P}_N^l$ is non-empty for all sufficiently large~$N$. 
This guarantees that for all sufficiently large $N$, $\mathcal{S}_N$ is non-empty, and thus $\overline{\uf}^{[a_{l+1}, b_{l+1}]}_N$ is well-defined. 
Moreover, the inclusion $\mathcal{S}_N \subseteq \mathcal{S}$ for any $N\in\mathbb N^+$ implies that $\mathcal{S}$ is also non-empty, and thus $\overline{\uf}^{[a_{l+1}, b_{l+1}]}$ is well-defined. 

Taking one-dimensional sections in the inclusion chain \eqref{inclusion_chain} yields
\begin{equation}\label{set_inclusion_chain}
\mathcal{S}^\circ
\subseteq \liminf_{N \to +\infty} \mathcal{S}_N
\subseteq \limsup_{N \to +\infty} \mathcal{S}_N
\subseteq \mathcal{S}.
\end{equation}
For $x_1, x_2 \in \mathcal{S}$ and $\theta \in (0,1)$, the convex combination $\theta \mathbf{s}(x_1) + (1-\theta) \mathbf{s}(x_2)$ belongs to $\mathcal{A}^{l+1}$ since a perfect fit can be achieved by mixing the two conditionally i.i.d. copulas that achieve perfect fits with $\mathbf{s}(x_1)$ and $\mathbf{s}(x_2)$, respectively. 
This implies that $\theta x_1 + (1-\theta) x_2 \in \mathcal{S}$, implying that $\mathcal{S}$ is convex.
Thus, $\mathcal{S}$ must be an interval or a singleton.
In both cases, \eqref{set_inclusion_chain} implies that
\[
\lim_{N \to +\infty}
\overline{\uf}^{[a_{l+1}, b_{l+1}]}_N
= \overline{\uf}^{[a_{l+1}, b_{l+1}]}.
\]
This completes the proof.
\hfill$\square$

\vspace{0.2cm}
\noindent\textit{\textbf{Proof of Proposition \ref{strong_bid-ask-corollary}.}}
We first prove the second implication ($(\mathbf{s}_{bid}, \mathbf{s}_{ask}) \in \mathcal{A}_{bid-ask}^\circ$ $\Rightarrow$ LP feasibility). Suppose that there exists an $N \in \mathbb{N}^{+}$ such that \eqref{linearsystem_bid_ask} has a feasible solution $\{\hat{p}_{ik}\}_{1\leq i \leq m, 0 \leq k \leq N}$.
Note that the last four sets of linear constraints in \eqref{linearsystem_bid_ask} are exactly the constraints of \eqref{linearsystem2_delete_1}.
By applying Proposition \ref{prop:generator}, we can construct a discrete gamma-distorted copula $C \in \mathcal{C}_1$ from $\{\hat{p}_{ik}\}_{1\leq i \leq m, 0 \leq k \leq N}$.
The first two sets of linear constraints in \eqref{linearsystem_bid_ask} ensure that the expected NPVs satisfy $v^{[a_l, b_l]}(C; \mathbf{s}_{bid}) \ge 0$ and $v^{[a_l, b_l]}(C; \mathbf{s}_{ask}) \le 0$ for all $l=1,\ldots,M$. Namely, $(\mathbf{s}_{bid}, \mathbf{s}_{ask})$ satisfies strong bid-ask compatibility. 

Next, we shall prove the first implication (LP feasibility $\Rightarrow$ strong bid-ask compatibility).
Since $(\mathbf{s}_{bid}, \mathbf{s}_{ask}) \in \mathcal{A}_{bid\text{-}ask}^{\circ}$, there exists $\mathbf{s}_{bid}'$ and $\mathbf{s}_{ask}'$ such that $\mathbf{s}_{bid} < \mathbf{s}_{bid}' \leq \mathbf{s}_{ask}' < \mathbf{s}_{ask}$ and $(\mathbf{s}_{bid}', \mathbf{s}_{ask}') \in \mathcal{A}_{bid\text{-}ask}$.
By the definition of $\mathcal{A}_{bid\text{-}ask}$, there exists a stochastic distortion $X_*$ satisfying
\[
	v^{[a_l, b_l]}(C^{X_*}; \mathbf{s}_{bid}') \geq 0 \quad \text{and} \quad v^{[a_l, b_l]}(C^{X_*}; \mathbf{s}_{ask}') \leq 0 \quad \text{for $1 \leq l \leq M$}.
\]
Since $v^{[a_l, b_l]}(C^{X_*}; \mathbf{s})$ is strictly decreasing in each component of $\mathbf{s}$, it follows that 
\begin{align}\label{eq:ttemp}
	v^{[a_l, b_l]}(C^{X_*}; \mathbf{s}_{bid}) > 0 \quad \text{and}\quad v^{[a_l, b_l]}(C^{X_*}; \mathbf{s}_{ask}) < 0 \quad \text{for  $1 \leq l \leq M$}.
\end{align}

Using the same argument as in Lemma~\ref{lemma:convergence},  we can construct a sequence of matrices $\hat{P}^{(N)}(X_*)\in \mathcal{P}^{(N)}$ for $N \in \mathbb{N}^+$ such that 
\begin{equation*}
	\lim_{N \to +\infty} v^{[a_l, b_l]}(\hat{P}^{(N)}(X_*); \mathbf{s}_{bid}) = v^{[a_l, b_l]}(C^{X_*}; \mathbf{s}_{bid}) \quad \text{for } 1 \leq l \leq M
\end{equation*}
and 
\begin{equation*}
	\lim_{N \to +\infty} v^{[a_l, b_l]}(\hat{P}^{(N)}(X_*); \mathbf{s}_{ask}) = v^{[a_l, b_l]}(C^{X_*}; \mathbf{s}_{ask})\quad \text{for } 1 \leq l \leq M.
\end{equation*}
These two convergence results along with \eqref{eq:ttemp} imply that 
there exists a sufficiently large $N_0 \in \mathbb{N}^+$ such that
\[
	v^{[a_l, b_l]}(\hat{P}^{(N_0)}(X_*); \mathbf{s}_{bid}) \geq 0 \quad \text{and} \quad v^{[a_l, b_l]}(\hat{P}^{(N_0)}(X_*); \mathbf{s}_{ask}) \leq 0 \quad \text{for } 1 \leq l \leq M.
\]	
These inequalities correspond exactly to the first two sets of linear constraints in \eqref{linearsystem_bid_ask} with $N=N_0$. On the other hand, the fact that $\hat{P}^{(N)}(X_*)\in \mathcal{P}^{(N)}$ for all $N \in \mathbb{N}^+$ implies that  
$\hat{P}^{(N_0)}(X_*)$ satisfies the last four sets of linear constraints in \eqref{linearsystem_bid_ask} with $N=N_0$. Thus, we conclude that $\hat{P}^{(N)}(X_*)$ is a feasible solution to \eqref{linearsystem_bid_ask}. This completes the proof.
\hfill$\square$

\section{Transforming an LFP Problem into an LP Problem}\label{sec_app:LFP2LP}

Remark \ref{CCtransform} mentioned that the computation of $\overline{s}^{[a_{l+1},b_{l+1}]}_N$ and $\underline{s}^{[a_{l+1},b_{l+1}]}_N$ requires solving LFP problems, which can be readily transformed into equivalent LP problems via the Charnes-Cooper transformation (\citealp{cooper1962programming}).
In this section, we outline the transformation procedure for the upper bound $\overline{s}^{[a_{l+1},b_{l+1}]}_N$, and the lower bound can be treated in an analogous manner.
		
		First, we express the LFP problem associated with $\overline{s}^{[a_{l+1},b_{l+1}]}_N$ in a more compact form:
		\[
		\overline{s}^{[a_{l+1},b_{l+1}]}_N =  \sup_{\{\hat{p}_{ik}\}\in\mathcal{P}_{N}^l} \frac{\sum_{i=1}^{m}\sum_{k=0}^{N} B_{ik} \hat p_{ik}}{C_1 - \sum_{i=1}^{m}\sum_{k=0}^{N} A_{ik} \hat{p}_{ik}},
		\]
		where $A_{ik}$, $B_{ik}$, and $C_1$ are pre-computed constants defined as follows:
		\begin{align*}
			A_{ik} &= \sum_{j=0}^{n} D(T_{i})(T_{i}-T_{i-1})\beta_{j}^{[a_{l+1}, b_{l+1}]}h_{jk}^{(N)}, &\text{for } 1 \leq i\leq m 0 \leq k\leq N, \\
			B_{ik} &= \sum_{j=0}^{n} \left[D\left(\frac{T_{i}+T_{i-1}}{2}\right)-D\left(\frac{T_{i}+T_{i+1}}{2}\right)1_{\{i<m\}}\right]\beta_{j}^{[a_{l+1}, b_{l+1}]}h_{jk}^{(N)} &\text{for } 1 \leq i\leq m, 0 \leq k\leq N,\\
			C_1 &= \sum_{i=1}^{m}D(T_{i})(T_{i}-T_{i-1})(b_{l+1}-a_{l+1}).
		\end{align*}
		The transformation introduces a new scalar decision variable $\hat t$ and a set of new decision variables $\hat{Y} = \{\hat y_{ik}\}_{1\leq i \leq m, 0 \leq k \leq N}$, which are related to the original decision variables $\hat p_{ik}$:
		\[
		\hat t := \frac{1}{C_1 - \sum_{i=1}^{m}\sum_{k=0}^{N} A_{ik} \hat{p}_{ik}} \quad \text{and} \quad \hat y_{ik} := \hat p_{ik}\hat t.
		\]
	Plugging these new decision variables transforms into the the LFP problem yields the following equivalent LP problem in terms of the new decision variables $\{\hat Y, \hat t\}$:\\
		\textbf{Maximize:}
		\[
		\sum_{i=1}^{m}\sum_{k=0}^{N} B_{ik} \hat y_{ik}
		\]
		\\ \textbf{Subject to:}
		\begin{equation*}
			\begin{cases}
				C_1 \hat t - \sum_{i=1}^{m}\sum_{k=0}^{N} A_{ik}  \hat y_{ik} - 1 = 0, \\
				\sum_{i=1}^m \sum_{k=0}^N g_{ikl'}^{(N)}(\mathbf{s}) \hat y_{ik} - \gamma^{[a_{l'} b_{l'}]}(\mathbf{s}) \hat t = 0 \quad &\text{for $1\leq l' \leq l$}, \\
				\sum_{k=0}^N k \hat y_{ik} - N F(T_i) \hat t = 0 \quad &\text{for $1 \leq i \leq m$}, \\
				\sum_{j \geq k} \hat y_{ij} - \sum_{j \geq k} \hat y_{i + 1, j} \leq 0 \quad &\text{for $1 \leq i \leq m - 1$ and $1 \leq k \leq N$}, \\
				\hat y_{ik} \geq 0\quad &\text{for $1 \leq i \leq m$ and $0\leq k\leq N$}, \\
				\hat t \ge 0.
			\end{cases}
			\label{linearsystem2_transformed}
		\end{equation*}

\section{Standard Methods for CDS Index Calculations}
\label{sec_hazard_rate}

This section details two methods used in our paper for CDS index calculation, and these two methods are standard in the literature (see, e.g., \citealp{cont2011dynamic}, and \citealp{hull2016options}). The first method (i.e., Algorithm \ref{cds_marginal_default}) shows how to derive the marginal default distribution $F(\cdot)$ from the market index spread $s^{\rm idx}$. 


\begin{breakablealgorithm}{Calculating the Marginal Distribution of Default Times}\label{cds_marginal_default}
	\item 
	For any $\mu>0$, define 
	\[
	\begin{split}
		J_1(\mu) &:= \sum_{i=1}^m \left[e^{-\mu T_i}(T_i - T_{i-1})D(T_i)\right]   , \\
		J_2(\mu) &:= \frac{1}{2} \sum_{i=1}^m \left[\left(e^{-\mu T_{i-1}}-e^{-\mu T_i}\right)(T_i - T_{i-1})  D\left(\frac{T_i + T_{i-1}}{2}\right)\right], \\
		J_3(\mu) &:= (1 - R) \sum_{i=1}^m \left[\left(e^{-\mu T_{i-1}}-e^{-\mu T_i}\right) D\left(\frac{T_i + T_{i-1}}{2}\right)\right].
	\end{split}
	\]
	Here $\mu$ represents the hazard rate of default times and is to be determined in {\bf Step 2}.
	\item Solve for the hazard rate $\mu$ that equates the present values of the premium and default legs of the CDS index:
	\[
	s^{\rm idx} (J_1(\mu) + J_2(\mu)) = J_3(\mu).
	\]
	The marginal default distribution is then $F(t) = 1 - \exp(-\mu t)$.
\end{breakablealgorithm}

Algorithm \ref{cds_marginal_default_2} below details the standard method for calculating the change in the CDS index's value $\Delta v^{CDS}$ in response to a small CDS index spread change $\Delta s$. Note that here $\Delta v^{CDS}$ is the denominator of the spread delta $\delta^{[a_l, b_l]}$, which Algorithm \ref{alg:hedging} aims to compute.


\begin{breakablealgorithm}{Calculating the CDS Index's Value Change $\Delta v^{CDS}$ for a Spread Change $\Delta s$}\label{cds_marginal_default_2}
	\item Given the marginal default distribution $F(\cdot)$, calculate the value of $\mathrm{PV01}$ (risky annuity):
	\[
	\mathrm{PV01} =  \sum_{i=1}^m (T_i - T_{i-1}) (1-F(T_i)) D(T_i)+\frac{1}{2}\sum_{i=1}^m (T_i - T_{i-1}) (F(T_{i}) - F(T_{i-1})) D\left(\frac{T_i + T_{i-1}}{2}\right).
	\] 
	\item The change in the CDS index's value $\Delta v^{CDS}$ for an index spread change $\Delta s$ is then computed as follows:
	\[
	\Delta v^{CDS} = \mathrm{PV01} \times \Delta s.
	\]
\end{breakablealgorithm}

\end{document}